\providecommand{\U}[1]{\protect\rule{.1in}{.1in}}
\newtheorem{theorem}{Theorem}
\newtheorem{condition}{Assumption}
\newtheorem{corollary}{Corollary}
\newtheorem{definition}{Definition}
\newtheorem{lemma}{Lemma}
\newtheorem{proposition}{Proposition}
\begin{document}

\title{Contact Adaption during Epidemics: A Multilayer Network Formulation Approach\\
\large{(This paper is published in the IEEE Transactions on Network Science and Engineering)}}
\author{Faryad Darabi Sahneh$^{1,*}$, Aram Vajdi$^1$, Joshua Melander$^1$, and~Caterina M.
Scoglio$^1$\thanks{This material is based on work supported by the National
Science Foundation under Grant No. CIF-1423411.}\thanks{$^1$ Department of Electrical and Computer
Engineering, Kansas State University, Manhattan, KS}\thanks{$^*$ Corresponding author: \texttt{faryad@ksu.edu}}}

\IEEEtitleabstractindextext{
\begin{abstract}
People change their physical contacts as a preventive response to infectious disease propagations. Yet, only a few mathematical models consider the coupled dynamics of the disease propagation and the contact adaptation process. This paper presents a model where each agent has a default contact neighborhood set, and switches to a different contact set once she becomes alert about infection among her default contacts. Since each agent can adopt either of two possible neighborhood sets, the overall contact network switches among $2^{N}$ possible configurations. Notably, a two-layer network representation can fully model the underlying adaptive, state-dependent contact network.
Contact adaptation influences the size of the disease prevalence and the epidemic threshold---a characteristic measure of a contact network robustness against epidemics---in a nonlinear fashion. Particularly, the epidemic threshold for the presented adaptive contact network belongs to the solution of a nonlinear Perron-Frobenius (NPF) problem, which does not depend on the contact adaptation rate monotonically. Furthermore, the network adaptation model predicts a counter-intuitive scenario where adaptively changing contacts may adversely lead to lower network robustness against epidemic spreading if the contact adaptation is not fast enough. An original result for a class of NPF problems facilitate the analytical developments in this paper.

\end{abstract}

\begin{IEEEkeywords}
	Epidemics, contact adaptation, state-dependent switching networks, multilayer networks, nonlinear Perron-Frobenius
\end{IEEEkeywords}}

\maketitle

\IEEEdisplaynontitleabstractindextext
\IEEEpeerreviewmaketitle

\IEEEraisesectionheading{\section{Introduction\label{Introduction}}}


\IEEEPARstart{M}{athematical} models of infectious diseases transmission are one of the primary tools for
understanding the propagation of infectious diseases among plant, animal, or human populations \cite{hethcote2000mathematics,keeling2008modeling,anderson1992infectious}. Understanding how spreading dynamics are affected by individual-level transmission characteristics and large-scale properties of interactions aids endeavors to control and mitigate epidemics, making it critical for the public health and security.

In addition to their critical role in public health decision making \cite{knight2016bridging}, infectious disease models are appealing from complex systems perspective. Take for instance the Susceptible-Infected-Susceptible (SIS) model \cite{anderson1992infectious}, where each
individual in the population is either `\emph{Susceptible}' or
`\emph{Infected}'. The SIS model simply states that susceptible
individuals may become infected when interacting with infected individuals, and
infected individuals will become susceptible immediately after recovery. Rich dynamics of the SIS model, such as the phase transition observed between fast die-out of infections and long-term epidemic persistence \cite{pastor2015epidemic}, exemplify the ability of simple individual-level interactions to give rise to emergent phenomena.

Understanding disease transmission dynamics in human social networks is particularly challenging \cite{moran2016epidemic}, partly because humans take preventive measures and alter their interactions in response to disease spreading \cite{bish2010demographic}, which subsequently change the course of the spreading \cite{Book2013HumanBehavior}.  As such, coupled modeling of behavioral change and infection transmission dynamics has seen significant attention recently \cite{funk2010JRSI, Book2013HumanBehavior,verelst2016behavioural,wang2015coupled}. Medical treatments, quarantines, illness management practices, and individual preventive behaviors are a few examples of ways society works to reduce disease spreading.



Common preventive behaviors of individuals to the emergence of an epidemic are (1) adopting hygiene/pharmaceutical actions such as wearing a mask, using condoms, improving bodily/environmental cleanliness, and receiving vaccinations, and (2) altering contacts to avoid infection. In the first case, individuals are intending to reduce the probability of infection by cleansing themselves and their environment --- or at least placing barriers between the two \cite{chen2006JMB, funk2010JTB, funk2009NAS, Nicola2011PLOS, poletti2009JTB2, FaryadCDC11SAIS}. In the second case, when individuals change who they come in contact with, the fundamental topology of the network itself is changing. As individuals remove certain contacts with people, while possibly creating new ones, the structural paths available to dynamic processes are being altered, resulting in rich dynamic interplay between network topology and the spreading process on top of it \cite{reluga2010game, Mina2011JTB,gross2008JRSI, gross2006PRL, marceau2010PRE,
risau2009JTB, demirel2012X, van_segbroeck2010PLoS, Ves2011SciRep}.

Existing approaches to incorporate preventive behaviors in mathematical models of infectious diseases fall into two general categories. First approach incorporates the effect of preventive behaviors directly into disease model parameters \cite{fenichel2011adaptive,liu2015endemic,brauer2011simple,li2015bifurcation,morin2014disease,xiao2012sliding,paarporn2015epidemic}. The second approach introduces additional dynamic states into a disease model to explicitly distinguish those who have adopted a preventive behavior from those who have not \cite{funk2010JTB,FaryadCDC11SAIS,sahneh2012SR,misra2011effect,samanta2013effect,misra2015stability,wang2015interaction}. One example of individual-based models taking the second approach is the susceptible--alert--infected--susceptible (SAIS) framework, first introduced in \cite{FaryadCDC11SAIS}.

The SAIS framework adds an `\emph{Alert}' state to the networked SIS model of \cite{van2009TN}. The alert state represents individuals who (similar to susceptible individuals) can potentially become infected, but has adopted a preventive behavior. In the original SAIS model \cite{FaryadCDC11SAIS}, alert individuals have a lower infection rate compared to the susceptible individuals, and susceptible individuals could become alert in presence of infection among their local contacts. The lower infection rate of alert individuals would correspond to their type-1 preventive behaviors (such as wearing masks or using condoms). This model predicts possibility of total eradication of an epidemics through preventive behaviors\cite{sahneh2012SR}. In a subsequent study \cite{SahnehCDC12}, authors considered an information-dissemination network as an alternative alerting mechanism, and proposed the optimal design solution for an information-dissemination network based on eigenvector centralities \cite{bonacich2007some} in the contact network graph. The SAIS model has been further explored in \cite{Preciado2013SAIS,shakeri2015optimal,juher2015analysis}.

In this paper, we introduce the AC-SAIS model, where AC stands for `\textbf{A}daptive \textbf{C}ontact', to model a scenario in which individuals change their contact neighborhood upon becoming alert. More specifically, each susceptible individual $i$ is in contact with a given set of individuals $(\mathcal{N}^S_i)$, and when she becomes alert, she switches to another set of individuals $(\mathcal{N}^A_i)$. We will use the terms \emph{default neighborhood} and \emph{adapted neighborhood} to distinguish the two. In our model, we assume both of these neighborhoods are known {\it a priori}. Yet, we do no make any restrictive assumptions on these neighborhood sets and deliver our results in the most generic setup. In practice, the default and adapted neighborhood sets might be closely related. For example, in a social distancing scenario \cite{maharaj2012controlling}, the adapted neighborhood would be a subset of the default neighborhood. Social distancing is not the only possible scenario of contact adaptation. In the context of sexually transmitted infections, for example, when a person is notified that one of his sexual partners is infected, in response, he may abandon all or some of his set of partners and seek partnership from a new venue.



When nodes adapt their contacts to a neighborhood constituting a more robust network, one might intuitively expect that the robustness of the network against epidemic spreading increases monotonically with the contact adaptation rate. This is true in the case of social distancing (where the alert neighborhood is a subset of the default ones) as it always help mitigating epidemic spreading, and the faster the social distancing is implemented, the better. However, when the set of adapted contacts of an individual is not restricted to be a subset of their default contacts, the network robustness against epidemic spreading can be a non-monotone function of the contact adaptation rate. Indeed, our model detects a counter-intuitive scenario where adaptively changing contacts may adversely lead to lower network robustness against epidemic spreading if the adaptation is not fast enough. 

From dynamical systems perspective, this study contains several contributions. First, we propose a novel state-dependent switching network framework and show that a multilayer-network \cite{Kivela2014} formulation can be successfully employed. Second, we develop an original result of nonlinear Perron-Frobenius theory, where we find necessary and sufficient conditions for existence and uniqueness of a strictly positive eigenvector for the class of non-negative, concave maps. We apply this tool to find the epidemic threshold for our AC-SAIS model. Furthermore, we introduce a novel notion of connectivity for multilayer networks, which is novel for the new research field of multilayer networks.

The rest of the paper is organized as follows: After the literature review in Section \ref{Sec: LitRev}, Section \ref{Sec: Background} introduces a novel notion of multilayer connectivity and an original result for nonlinear Perron-Frobenius theory, which are pivotal for the subsequent modeling and analysis. Section \ref{Sec: Model} develops the AC-SAIS model, showing that the proposed adaptive contact can be equivalently modeled by multilayer networks. Analyses in Section \ref{Sec: ETE} are followed by numerical experiments in Section \ref{Sec: Numerical Sim}. Several proofs to theorems and lemmas are omitted for the sake of brevity, and can be found in the Supplemental Materials of this article.

\section{Literature Review\label{Sec: LitRev}}

Typical approaches to modeling spreading processes on networks consider network
topologies as independent of individual node states, such is the case when
nodes retain the same set of contacts regardless of whether or not they, or
their neighbors, are infected. This assumption is made for simplicity's sake
and is not representative of real world networks; especially in regards to
social networks where a person's contacts are in constant fluctuation. The notion
of state-dependent topologies is especially poignant in the context of disease
dynamics where a person will adjust who they come in contact with when in the
presence of an infection.
The extent to which this occurs can vary greatly --- from removing a single
contact to completely changing all of them --- depending on the perceived
severity of an infection.

Several formulations of adaptive contact exist in the literature of infectious disease modeling, including: 1) social distancing \cite{valdez2012intermittent}, where healthy individuals lower their contact with the rest of the population, 2) delete-and-reactivate \cite{VanMieghem2013PRE}, where healthy break their contact with infected population and reactivate after some time, 3) rewiring \cite{juher2013outbreak,dong2015can}, where healthy break their contact with infected population and create new links with healthy members \cite{gross2006PRL} or any other randomly chosen individual \cite{risau2009JTB}.


Altering the local contacts can have a strong effect on disease dynamics, which in turn influences the contact adaptation process; a complicated mutual interaction between a time varying network topology and the dynamics of the nodes emerges. For example, Gross et al. \cite{gross2006PRL} presented a model where susceptible individuals rewire their links from other infected individuals toward susceptible ones in an SIS model, resulting in the formation of two loosely connected clusters. Several researches have built on this model: Marceau et al. \cite{marceau2010PRE} additionally include the infection state of its neighbors in the node information. Risau et al. \cite{risau2009JTB} rewire susceptible individuals from infected neighbors to random nodes, which in some cases completely suppresses epidemic spreading.


Most of contact adaptation schemes have been implemented for well-mixed populations or random network models of physical interactions. Studies that work with generic graphs as their contact network are scarce in the literature. Among a few existing research endeavors is the Adaptive-SIS (ASIS) model developed by Guo et al. \cite{VanMieghem2013PRE}, who studied an SIS epidemic model where contacts between susceptible and infected nodes are removed at some rate and reactivated later. They showed the epidemic threshold increases as a function of the link removal rate, while the network topology exhibits binomial-like degree distribution, assortative mixing, and modularity. This approach was rigorously extended by Ogura and Preciado \cite{Ogura2016}, who additionally considered heterogeneous node and edge parameters, as well as a method for optimizing adaptation rates to mitigate epidemic outbreaks. This approach of adaptation for generic graphs considers a dynamic equation for the edge weights which is coupled with the epidemic model. Another approach would be through the notion of switching networks in dynamical systems.

A switching contact network is defined as a set of distinct networks where the ``active" network at any given time is determined by some switching signal. More precisely, we denote a switching network $G(t)=(V,E^{s(t)})$, where
$s(t):\mathbb{R}\rightarrow\{1,2,...,q\}$ is a signal that determines which of the $q$
networks are active at time $t$. Usually this signal is external and
independent of the system states. For example, a common approach is to
consider $s(t)$ as a Markov process independent from the disease states \cite{ogura2015disease,ogura2016stability}. The collection of possible edge sets $\mathcal{E}=\{E^1,E^2,...,E^q\}$ may be given {\it a priori} as in \cite{ogura2015disease}, or they might be generated from local processes as in \cite{ogura2016stability}.  In the latter, Ogura and Preciado considered a base graph with $|E|$ edges where each edge can become active or inactive according to an externally defined Markov process, leading to an overall $2^{|E|}$ possible configurations for the switching contact network. We can also think of a more
complex situation where the switching signal is dependent on the system
states. In this way, the topology of the active network determines the evolution of the dynamic process and in turn, the state of the process itself signals network switching. Here lies our proposed contact adaptation scheme.

We consider a class of switching networks where the neighborhood set of each
node depends on
the state it occupies. Specifically, each node $i$ has one of two contact sets $\mathcal{N}^S_i$ and $\mathcal{N}^A_i$, depending on whether is is `susceptible' or `alert'. Therefore, for a network of size $N$, the entirety of the switching network is composed of $2^{N}$ separate topologies. In this case, not only the network state-space size exponentially increases by $N$, but also the switching signal depends on the collective system state. 






\section{Fundamental Concepts and Tools\label{Sec: Background}}
Before diving into the modeling and analysis, we first start with a novel notion of connectivity for multilayer networks and an original results for a class of nonlinear Perron--Frobenius problems that will facilitate the subsequent developments in this paper.

\subsection{Nonlinear Perron Frobenius}\label{Sec: NLPF}
The classical Perron-Frobenius theorem \cite{van2010graph} concerns the eigenvalue problem $Ax=\lambda x$ for a nonnegative and irreducible matrix $A$. Let $\mathbb{R}_{+}^{n}$  be the non-negative cone in the $n-$dimensional Euclidean space,
\[
\mathbb{R}_{+}^{n}=\lbrace x\in \mathbb{R}^{n}\vert x_{i}\geq 0 ~ ~\text{for}~~ 1\leq i\leq n\rbrace.
\]
Assuming $x,y \in \mathbb{R}^{n}_{+}$, here $x\preceq y$ ($x\prec y$) means $x_{i}\leq y_{i}$ ($x_{i}<y_{i}$) for $1\leq i\leq n$ and $x\precnsim y$ denotes $x\preceq y$ but $x\neq y$.
A matrix $A=[a_{ij}]$ is called non-negative if all of its entries are either positive or zero. We can construct a graph $G(A)$ associated with $A$ such that the edge $(i,j)$ exists if $a_{ij}>0$. The matrix $A$ is irreducible if and only if its associated graph $G(A)$ is strongly connected. The classical Perron-Frobenius theorem may be stated as the following:

\begin{theorem}[Perron--Frobenius Theorem \cite{van2010graph}]\label{Th: PF}
Let $A$ be a nonnegative, irreducible matrix. Then $A$ has a positive eigenvalue $\lambda_1>0$ which has multiplicity one and any  eigenvalue of $A$ has a magnitude smaller than or equal to $\lambda_1$. Furthermore the eigenvector $\boldsymbol{v}_1$ corresponding to $\lambda_1$ is strictly positive (i.e., $\boldsymbol{v}_1\succ 0$) and is the only eigenvector of $A$ in the nonnegative cone.
\end{theorem}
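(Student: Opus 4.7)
The plan is to build all assertions on one fundamental consequence of irreducibility: for nonnegative irreducible $A$, the matrix $(I+A)^{n-1}$ is entrywise strictly positive. This is because its $(i,j)$ entry is a positive weighted sum over walks of length at most $n-1$ from $i$ to $j$ in $G(A)$, and strong connectivity guarantees at least one such walk. I will use this fact repeatedly to upgrade nonnegative conclusions to strictly positive ones.

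For existence of the Perron eigenpair, I would apply Brouwer's fixed point theorem to the continuous self-map $T(x) = Ax/(\mathbf{1}^\top A x)$ on the simplex $\Delta = \{x \in \mathbb{R}_+^n : \mathbf{1}^\top x = 1\}$. This is well-defined since $(I+A)^{n-1} x \succ 0$ for $x \in \Delta$ forces $Ax \neq 0$. A fixed point gives $\boldsymbol{v}_1 \in \Delta$ with $A\boldsymbol{v}_1 = \lambda_1 \boldsymbol{v}_1$ and $\lambda_1 = \mathbf{1}^\top A \boldsymbol{v}_1 > 0$. Strict positivity $\boldsymbol{v}_1 \succ 0$ then follows from $(I+A)^{n-1} \boldsymbol{v}_1 = (1+\lambda_1)^{n-1} \boldsymbol{v}_1$, whose left-hand side is strictly positive.

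For the spectral radius bound, I would take any complex eigenpair $Ay = \mu y$ and apply the triangle inequality componentwise to obtain $A|y| \succeq |\mu|\,|y|$. Let $w \succ 0$ be the Perron eigenvector of $A^\top$ (which exists by the same construction applied to $A^\top$, since it is also nonnegative and irreducible). Pairing against $w$ yields $\lambda_1 w^\top |y| = w^\top A |y| \geq |\mu|\, w^\top |y|$, hence $|\mu| \leq \lambda_1$. Uniqueness among nonnegative eigenvectors follows similarly: if $Az = \mu z$ with $z \succnsim 0$, then $w^\top z > 0$ combined with $w^\top A z = \mu w^\top z = \lambda_1 w^\top z$ forces $\mu = \lambda_1$, reducing uniqueness to simplicity.

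The subtlest step, and the main obstacle, is establishing that $\lambda_1$ is a simple eigenvalue. For geometric simplicity, suppose $Ay = \lambda_1 y$ with $y$ real-valued (splitting into real and imaginary parts if necessary); set $c = \min_i y_i/(\boldsymbol{v}_1)_i$, so that $y - c \boldsymbol{v}_1 \succeq 0$ has a zero entry, yet still lies in the $\lambda_1$-eigenspace. The strict-positivity argument above then forces $y - c \boldsymbol{v}_1 = 0$, so $y$ is a scalar multiple of $\boldsymbol{v}_1$. For algebraic simplicity, I would rule out generalized eigenvectors: if $(A - \lambda_1 I) u = \boldsymbol{v}_1$, then pairing with the left Perron vector $w$ gives $0 = w^\top (A - \lambda_1 I) u = w^\top \boldsymbol{v}_1 > 0$, a contradiction.
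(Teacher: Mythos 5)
The paper does not supply its own proof of this statement: Theorem~\ref{Th: PF} is the classical Perron--Frobenius theorem, which the authors simply quote from the cited reference and treat as background. There is therefore no in-paper argument to compare against, and your proposal should be judged on its own terms.

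On those terms, your proof is a correct and essentially standard presentation of the classical result, centered on the observation that $(I+A)^{n-1}\succ 0$ for irreducible nonnegative $A$, and it covers every clause of the theorem: existence and positivity of the Perron eigenvector via Brouwer's fixed point theorem, dominance of $\lambda_1$ via the triangle inequality and pairing with the left Perron vector $w$, geometric simplicity via the $c=\min_i y_i/(\boldsymbol{v}_1)_i$ subtraction trick, algebraic simplicity via the Jordan-chain contradiction $w^\top\boldsymbol{v}_1>0$, and uniqueness of the nonnegative eigenvector by combining $w^\top z>0$ with simplicity. Two places deserve one more sentence each in a polished write-up. First, when you claim the Perron eigenvector $w$ of $A^\top$ satisfies $w^\top A = \lambda_1 w^\top$, you should note that the Perron eigenvalue of $A^\top$ coincides with $\lambda_1$, e.g.\ by pairing $w^\top A\boldsymbol{v}_1$ two ways and using $w^\top\boldsymbol{v}_1>0$. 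Second, the step ``$(I+A)^{n-1}x\succ 0$ for $x\in\Delta$ forces $Ax\neq 0$'' is true but not immediate: if $Ax=0$ then $(I+A)^{n-1}x=x$, so $x\succ 0$, which combined with $A\succeq 0$ and irreducibility (hence $A$ has a nonzero row when $n\geq 2$) gives $(Ax)_i>0$ for some $i$, a contradiction. With those small elaborations the argument is complete.
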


From mappings perspective, the classical Perron--Frobenius theory concerns solutions to the eigenvalue problem $F(x)=\lambda x$ where $F(x)=Ax$ is a linear self-map of the non-negative cone. By ``self-map of the non-negative cone,'' we mean that $F:\mathbb{R}_{+}^{n}\rightarrow\mathbb{R}_{+}^{n}$ maps the non-negative cone to itself. But what if the map $F(x)$ is not linear? Can we still get powerful results for nonlinear maps analogous to the Perron--Frobenius theorem? The whole area of the nonlinear Perron--Frobenius theory \cite{krasnoselskij1964positive,per1,per2,krause2001concave,nussbaum1999generalizations} seeks answer to these questions. A thorough review of nonlinear Perron--Frobenius theory is out of the scope of this paper. In short, results are usually more limited in that existence, uniqueness, or strictly positivity of an eigenvector is seldom guaranteed unless under restrictive assumptions on the nonlinear map. 

The following properties are among the possibilities to relax the linearity assumption for the non-negative map $F$. Note that the linear map $F(x)=Ax$ with non-negative matrix $A$ has all of these properties. 



\begin{definition}
Assume $F:\mathbb{R}_{+}^{n}\rightarrow\mathbb{R}_{+}^{n}$ is a self-map of nonnegative cone.  We say $F$ is 
\begin{enumerate}
	\item {\em homogeneous}, if for any $x\succeq0$ and $c\geq0$, $F(cx)=cF(x)$,
	\item {\em concave}, if $F(\theta x+(1-\theta)y)\succeq \theta F(x)+(1-\theta)F(y)$
	for all $x,y\succeq0$ and $0\leq \theta \leq 1$,
    \item {\em super-additive}, if $F(x+y)\succeq F(x)+F(y)$
	for all $x,y\succeq0$,
    \item {\em monotone}\footnote{Sometimes, this property is referred to as {\em order-preserving}.}, if $F(y)\succeq F(x)$ for all $y\succeq x \succeq0$.
\end{enumerate}
\end{definition}

The homogeneity property indicates that if $x^*\succeq 0$ is an eigenvector of $F$, so is $cx^*$ for any $c\geq 0$. Furthermore, the following lemma indicates that the class of homogeneous, concave self-maps of the non-negative cone is a special case of homogeneous, monotone maps.

\begin{lemma}\label{Lemma: Monotonicity}
	If $F:\mathbb{R}_{+}^{n}\rightarrow\mathbb{R}_{+}^{n}$ is a homogeneous, concave map of the non-negative cone, then $F$ is also monotone and super-additive. 
\end{lemma}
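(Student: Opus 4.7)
The plan is to first establish super-additivity directly from concavity plus homogeneity, and then deduce monotonicity as an easy corollary of super-additivity together with the fact that $F$ maps into the non-negative cone.

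First I would prove super-additivity. Take arbitrary $x,y\succeq 0$ and apply the concavity inequality to the points $2x$ and $2y$ with coefficient $\theta=1/2$:
\[
F\bigl(\tfrac{1}{2}(2x)+\tfrac{1}{2}(2y)\bigr)\;\succeq\; \tfrac{1}{2}F(2x)+\tfrac{1}{2}F(2y).
\]
The left-hand side simplifies to $F(x+y)$, while homogeneity gives $F(2x)=2F(x)$ and $F(2y)=2F(y)$, so the right-hand side becomes $F(x)+F(y)$. This yields $F(x+y)\succeq F(x)+F(y)$, which is super-additivity.

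Next I would prove monotonicity. Suppose $y\succeq x\succeq 0$; then $z:=y-x\succeq 0$, and by the super-additivity just established,
\[
F(y)=F(x+z)\;\succeq\; F(x)+F(z).
\]
Because $F$ is a self-map of the non-negative cone, $F(z)\succeq 0$, hence $F(y)\succeq F(x)$. (As a side remark, homogeneity applied at $c=0$ also gives $F(0)=0$, which is consistent.)

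I do not anticipate a real obstacle here, since both parts are short consequences of the definitions; the only substantive move is the rescaling trick $x\mapsto 2x$, $y\mapsto 2y$ that lets concavity be combined with homogeneity to eliminate the convex combination coefficients. Everything else is a direct application of the ordering $\succeq$ on the non-negative cone.
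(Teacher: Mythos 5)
Your proof is correct and follows essentially the same route as the paper: derive super-additivity from concavity at $\theta=1/2$ combined with homogeneity, then obtain monotonicity by decomposing $y=x+(y-x)$ and using non-negativity of $F(y-x)$. The only cosmetic difference is that you scale the inputs to $2x,2y$ before applying concavity, whereas the paper applies concavity to $x,y$ and then scales the resulting inequality by $2$ via homogeneity; these are the same maneuver.
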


Several results in the literature concern the more general class of homogeneous, monotone maps \cite{per1,per2}.  While existence and strict positivity of an eigenvector can be proved for this class of maps, uniqueness cannot be guaranteed without quite restrictive assumptions \cite{per1}. For example\footnote{This example is from \cite{per1}.}, for the homogeneous, monotone function $F(x)=[\max\{x_1,\frac{x_2}{2}\},\max\{\frac{x_1}{2},x_2\}]^T$, any vector $[x_1,x_2]^T\in \mathbb{R}_+^2$ with $\frac{x_1}{2}\leq x_2 \leq 2x_1$ is an eigenvector of $F$ with eigenvalue $\lambda=1$. On the contrary, existence and strict positivity of a unique eigenvector can be proved for the special class of homogeneous, concave maps.

The nonlinear map of interest in this paper falls in the special class of homogeneous and concave maps. Therefore, we focus on this class of nonlinear maps and develop a new result.

So far, we relaxed the linearity restriction by assuming that our nonlinear map is homogeneous and concave. The next question is what would be the counter part to irreducibility of $A$ in the linear map $F(x)=Ax$ for a homogeneous, concave map. For homogeneous, concave maps, Krause \cite{krause2001concave} proposes the following condition:

 \begin{definition}[Krause  \cite{krause2001concave}, \S 3] \label{def: KrauseIrred}
 We say the homogeneous, concave self-map $F:\mathbb{R}_{+}^{n}\rightarrow\mathbb{R}_{+}^{n}$ satisfies condition\footnote{In Krause \cite{krause2001concave}, authors refer to this condition as being {\em irreducible}. We choose to avoid this term to avoid any confusion with other notions that tend to extend irreducibility of linear maps to nonlinear domain.} {\bf C1}  in $\mathbb{R}_{+}^{n}$ if for any non-empty subset $\emptyset\neq J\subsetneq \{1,...,n\}$, there exists $j\in J$ and $i\notin J$ such that $F_i(e_j)>0$, where $e_j$ is the $j-$th unit vector in $\mathbb{R}^n$ and $F_i$ denotes the $i-$th component of $F$.
 \end{definition}
 
Furthermore, Krause proves that condition {\bf C1} is a {\em sufficient condition} for existence and uniqueness of a positive eigenvector:
 
 \begin{theorem}[Krause  \cite{krause2001concave}, Theorem 13] \label{Th: Krause}
For the self-map $F:\mathbb{R}_{+}^{n}\rightarrow\mathbb{R}_{+}^{n}$, which is concave, homogeneous, and satisfies condition {\bf C1}, the equation $F(x)=\lambda x$ has a strictly positive solution $x=x^*\succ 0$, $\lambda=\lambda^*>0$, and $x^*$ is the only eigenvector in the non-negative cone (up to scaling).
 \end{theorem}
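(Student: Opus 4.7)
The plan is to obtain a non-negative eigenvector via a normalized fixed-point argument, promote it to a strictly positive one by invoking condition {\bf C1}, and finally establish uniqueness using super-additivity (supplied by Lemma~\ref{Lemma: Monotonicity}) together with a second application of {\bf C1}. The only analytic ingredient beyond these is Brouwer's fixed-point theorem applied on the standard simplex.

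For existence, I would work on $\Delta=\{x\in\mathbb{R}_{+}^{n}:\mathbf{1}^{T}x=1\}$ and define $T(x)=F(x)/\mathbf{1}^{T}F(x)$. To see that $T$ is well-defined and continuous on $\Delta$, note first that applying {\bf C1} to $J=\{1,\ldots,n\}\setminus\{j\}$ produces, for every $j$, some $i\neq j$ with $F_{i}(e_{j})>0$, so $F(e_{j})\neq 0$. Then by super-additivity and homogeneity (Lemma~\ref{Lemma: Monotonicity}), $F(x)\succeq\sum_{j}x_{j}F(e_{j})\neq 0$ for every $x\in\Delta$. Since $F$ is continuous on $\mathbb{R}_{+}^{n}$ (concavity plus homogeneity), $T$ is continuous and Brouwer's theorem furnishes a fixed point $x^{*}\in\Delta$ with $F(x^{*})=\lambda^{*}x^{*}$ and $\lambda^{*}=\mathbf{1}^{T}F(x^{*})>0$.

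Next, if $x^{*}$ had a zero entry, set $J=\mathrm{supp}(x^{*})$, a non-empty proper subset of $\{1,\ldots,n\}$. By {\bf C1} there exist $j\in J$ and $i\notin J$ with $F_{i}(e_{j})>0$, and super-additivity gives $F_{i}(x^{*})\geq x^{*}_{j}F_{i}(e_{j})>0$, contradicting $F_{i}(x^{*})=\lambda^{*}x^{*}_{i}=0$; hence $x^{*}\succ 0$. For uniqueness, let $y^{*}\succ 0$ be any other eigenvector with eigenvalue $\mu^{*}$, set $\alpha=\min_{i}y^{*}_{i}/x^{*}_{i}>0$, and put $v=y^{*}-\alpha x^{*}\succeq 0$, which has at least one zero coordinate by the choice of $\alpha$. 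Super-additivity and homogeneity yield $\mu^{*}y^{*}=F(y^{*})\succeq \alpha\lambda^{*}x^{*}+F(v)$; evaluating at the coordinate where $v$ vanishes forces $\mu^{*}\geq\lambda^{*}$, and the symmetric argument with $\beta=\max_{i}y^{*}_{i}/x^{*}_{i}$ gives $\lambda^{*}\geq\mu^{*}$, so $\lambda^{*}=\mu^{*}$. The previous inequality then reduces to $F(v)\preceq\lambda^{*}v$, and if $v\neq 0$, condition {\bf C1} applied to $J=\mathrm{supp}(v)$ produces $j\in J$, $i\notin J$ with $F_{i}(e_{j})>0$, giving $0<v_{j}F_{i}(e_{j})\leq F_{i}(v)\leq\lambda^{*}v_{i}=0$, a contradiction. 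Hence $v=0$ and $y^{*}=\alpha x^{*}$.

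I expect the uniqueness step to be the main obstacle: existence is a standard normalized-Brouwer argument and strict positivity is a direct one-line contradiction with {\bf C1}, but uniqueness requires pairing the super-additivity inherited from concavity plus homogeneity with the connectivity condition in a non-obvious way. The crucial observation is that the non-negative difference $v=y^{*}-\alpha x^{*}$, if nonzero, would behave like a ``sub-eigenvector'' of $F$ supported on a proper index subset, a configuration that {\bf C1} is precisely designed to forbid; identifying this sub-eigenvector and exploiting its forced zero coordinate is the technical heart of the proof.
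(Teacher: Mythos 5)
Your uniqueness argument is correct and quite clean: subtracting $\alpha x^*$ with $\alpha$ chosen to annihilate a coordinate, establishing $F(v) \preceq \lambda^* v$ for $v = y^* - \alpha x^*$, and then contradicting this sub-eigenvector inequality via condition {\bf C1} on $J = \mathrm{supp}(v)$ is essentially the same mechanism the paper wraps, for its stronger Theorem~\ref{peron}, inside the auxiliary primitive map $P = F_c^{M}$. The strict-positivity step is likewise sound and more elementary than the paper's route through $P$.

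The existence step, however, has a genuine gap. Brouwer's theorem needs $T = F/\mathbf{1}^{T}F$ continuous on all of $\Delta$, and you justify this by asserting that $F$ is continuous on $\mathbb{R}_{+}^{n}$ ``by concavity plus homogeneity.'' That inference fails on the boundary of the cone: concavity only guarantees continuity on the open orthant, and a homogeneous, concave, even monotone self-map of $\mathbb{R}_{+}^{n}$ can jump where coordinates vanish. For instance $g(x_1,x_2)=x_1+x_2\,\mathbbm{1}\{x_1>0\}$ is homogeneous, concave and monotone on $\mathbb{R}_{+}^{2}$ but discontinuous on the ray $x_1=0$, and $F(x)=[x_2,\,g(x)]^{T}$ satisfies {\bf C1}, so the stated hypotheses do not exclude this pathology; your $T$ would then be discontinuous at $(0,1)\in\Delta$ and Brouwer is unavailable. (The paper itself flags this subtlety by manually extending its particular map in Eq.~\ref{Fdef} to the boundary before invoking the theorem.) To repair the argument you either need continuity of $F$ on the closed cone as an explicit hypothesis, or an existence proof that does not lean on boundary continuity --- e.g.\ the variational route in the paper's proof of Theorem~\ref{peron}, which maximizes $\rho(x)=\max\{m:mx\preceq F(x)\}$ over the simplex and shows the maximizer must be an eigenvector. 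There is also a small bookkeeping slip at the very start: to deduce $F(e_j)\neq 0$ you must apply {\bf C1} with $J=\{j\}$, which produces $i\neq j$ with $F_i(e_j)>0$; the set $J=\{1,\dots,n\}\setminus\{j\}$ you actually invoke forces $i=j$ and yields $F_j(e_{j'})>0$ for some $j'\neq j$, a statement about $F_j$ at a different unit vector, not about $F(e_j)$.
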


We argue that the condition {\bf C1} for the notion of irreducibility in \cite{krause2001concave} may be restrictive, and same strong results would be still valid under a more relaxed condition. Indeed, the nonlinear map of our interest in this paper may not satisfy the condition {\bf C1} in Definition \ref{def: KrauseIrred}.

To illustrate, suppose $n=3$ and the nonlinear map is $F(x)=[\min\{x_2,x_3\},x_1+x_3,x_1+x_2]^T$. This map is both homogeneous and concave. However, it does not satisfy condition {\bf C1} of \cite{krause2001concave} stated in Definition \ref{def: KrauseIrred}. To test this, let $J=\{2,3\}$; no $j\in J$ leads to $F_1(e_j)>0$ because $F(e_2)=e_3$ and $F(e_3)=e_2$. However, this map has a unique, strictly positive eigenvector $x^*=[\frac{1}{1+2\lambda^*},\frac{\lambda^*}{1+2\lambda^*},\frac{\lambda^*}{1+2\lambda^*}]^T$ and $\lambda^*=\frac{1+\sqrt{5}}{2}$ with $||x^*||_1=1$. Another example is $F(x)=[\frac{x_2x_3}{x_2+x_3},x_1+x_3,x_1+x_2]$.  Again, $F(e_2)=e_3$ and $F(e_3)=e_2$, so it does not satisfy condition {\bf C1}.  However, this map has a unique, strictly positive eigenvector $x^*=[\frac{1}{1+4\lambda^*},\frac{2\lambda^*}{1+4\lambda^*},\frac{2\lambda^*}{1+4\lambda^*}]^T$ and $\lambda^*=\frac{1+\sqrt{3}}{2}$ with $||x^*||_1=1$.

\begin{definition} \label{C2}
We say the homogeneous, concave self-map $F:\mathbb{R}_{+}^{n}\rightarrow\mathbb{R}_{+}^{n}$ of the non-negative cone satisfies condition {\bf C2} in $\mathbb{R}_{+}^{n}$ if for any choice of $\emptyset\neq J\subsetneq \{1,...,n\}$, there exists $i\notin J$ such that $F_i(e_J)>0$, where $e_J$ is defined as $e_J\triangleq\sum_{j\in J}e_j$.
\end{definition}

The example function $F(x)=[\min\{x_2,x_3\},x_1+x_3,x_1+x_2]^T,$ which does not satisfy condition {\bf C1}, does satisfy {\bf C2}. For instance, selecting $J=\{2,3\}$ yields $F_1(e_J)>0$ because $F(e_J=[0,1,1]^T)=[1,1,1]^T$. The following lemma proves that {\bf C2} is indeed less restrictive than {\bf C1}.

\begin{lemma}\label{Lemma: C1C2}
A homogeneous, concave self-map $F$ of the nonnegative cone that satisfies condition {\bf C1} also satisfies condition {\bf C2}.
\end{lemma}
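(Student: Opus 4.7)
The plan is to reduce C2 to C1 by exploiting the monotonicity guaranteed by Lemma~\ref{Lemma: Monotonicity}. Fix an arbitrary non-empty proper subset $\emptyset \neq J \subsetneq \{1,\dots,n\}$. By hypothesis $F$ satisfies condition \textbf{C1}, so I can extract an index $j \in J$ and an index $i \notin J$ such that $F_i(e_j) > 0$.

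Next, I would observe that since $j \in J$, the vector $e_J = \sum_{k \in J} e_k$ dominates $e_j$ coordinate-wise, i.e., $e_J \succeq e_j \succeq 0$. Because $F$ is homogeneous and concave, Lemma~\ref{Lemma: Monotonicity} guarantees that $F$ is monotone, and therefore $F(e_J) \succeq F(e_j)$. Looking at the $i$-th component yields
\begin{equation*}
F_i(e_J) \;\geq\; F_i(e_j) \;>\; 0,
\end{equation*}
which is precisely the statement of condition \textbf{C2} for this $J$. Since $J$ was arbitrary, \textbf{C2} holds.

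There is essentially no obstacle here: the argument is a one-line application of monotonicity, and the only subtlety is that monotonicity is not assumed directly but is derived from homogeneity and concavity via Lemma~\ref{Lemma: Monotonicity}. (One could alternatively use super-additivity, another consequence of Lemma~\ref{Lemma: Monotonicity}, to write $F(e_J) \succeq \sum_{k \in J} F(e_k) \succeq F(e_j)$, giving the same conclusion.) The real content of the lemma is conceptual rather than technical: it confirms that the examples given before Definition~\ref{C2} --- where \textbf{C1} fails but a unique positive eigenvector still exists --- lie genuinely within a strictly wider class than Krause's setting.
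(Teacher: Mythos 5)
Your proof is correct and essentially matches the paper's: you invoke monotonicity of $F$ (a consequence of Lemma~\ref{Lemma: Monotonicity}) to get $F_i(e_J)\geq F_i(e_j)>0$, whereas the paper uses super-additivity to write $F(e_J)\succeq\sum_{k\in J}F(e_k)\succeq F(e_j)$; you yourself note this alternative in the parenthetical. Both are immediate consequences of the same lemma and constitute the same argument in substance.
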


We would like to emphasize that there is nothing special about usage of $e_J$ in Definition \ref{C2}. The following lemma shows that any vector that has positive values on elements corresponding to $J$ and is zero on other elements would be equivalently applicable. 

\begin{lemma}\label{Lemma: allx}
For any choice $x\succ 0$, we have $F_i(x\circ e_J)>0$ if and only if $F_i(e_J)>0;$ where the symbol $\circ$ denotes the Hadamard (entry-wise) multiplication.
\end{lemma}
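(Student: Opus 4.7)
The plan is to sandwich the vector $x\circ e_J$ between two positive scalar multiples of $e_J$ and then exploit homogeneity together with monotonicity (which is available via Lemma \ref{Lemma: Monotonicity}) to transfer the sandwich to the images under $F$.

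More concretely, since $x\succ 0$, the quantities $c\triangleq \min_{j\in J} x_j$ and $C\triangleq \max_{j\in J} x_j$ are both strictly positive. Comparing entry by entry (both sides vanish outside $J$ and the $J$-entries of $x\circ e_J$ are exactly $x_j\in[c,C]$), we obtain
\begin{equation*}
c\, e_J \;\preceq\; x\circ e_J \;\preceq\; C\, e_J .
\end{equation*}
First I would invoke Lemma \ref{Lemma: Monotonicity} to conclude that $F$ is monotone, so applying $F$ to the sandwich preserves the order: $F(c\, e_J)\preceq F(x\circ e_J)\preceq F(C\, e_J)$. Then homogeneity of $F$ turns this into
\begin{equation*}
c\, F(e_J) \;\preceq\; F(x\circ e_J) \;\preceq\; C\, F(e_J),
\end{equation*}
and in particular $c\, F_i(e_J)\leq F_i(x\circ e_J)\leq C\, F_i(e_J)$ for every index $i$.

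From this two-sided bound the biconditional is immediate. If $F_i(e_J)>0$, then since $c>0$ the lower bound forces $F_i(x\circ e_J)\geq c\,F_i(e_J)>0$. Conversely, if $F_i(x\circ e_J)>0$, then since $C$ is finite the upper bound forces $C\,F_i(e_J)\geq F_i(x\circ e_J)>0$, hence $F_i(e_J)>0$. This is the desired equivalence.

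Honestly, there is no substantive obstacle: the argument is essentially a one-line observation once monotonicity is in hand. The only place where care is required is in justifying monotonicity of $F$, which is not an assumption in the hypothesis of the lemma but a consequence of concavity plus homogeneity via Lemma \ref{Lemma: Monotonicity}; flagging this explicitly is what makes the sandwich step legitimate.
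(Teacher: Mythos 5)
Your proof is correct and essentially matches the paper's: you sandwich $x\circ e_J$ between $c\,e_J$ and $C\,e_J$ and then use monotonicity plus homogeneity, while the paper writes $x\circ e_J=(x-x_{\min}u)\circ e_J+x_{\min}e_J$ and applies super-additivity plus homogeneity; both devices come directly from Lemma~\ref{Lemma: Monotonicity}, so the two arguments are interchangeable. You do make the converse direction explicit via the upper bound $C\,e_J$, which the paper's one-line display leaves implicit.
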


In the linear domain, we know that if a non-negative matrix $A$ is irreducible, the matrix $A+cI$ is primitive for any $c>0$ \cite[Theorem 9]{wood2004always}, and vice versa. How would be the extension of this idea to the nonlinear domain? 
First, let us precisely define a primitive map.

\begin{definition}
The self-map $H:\mathbb{R}_{+}^{n}\rightarrow\mathbb{R}_{+}^{n}$ of the non-negative cone is called  {\em primitive} if there exists $M$ such that $H^m(x)\succ 0$ for all $m\geq M$ and $x\succnsim 0$. Here, $H^{m}$ denotes the $m-$th iterate of $H$, i.e., $H^m(x)=H(H^{m-1}(x))$ and $H^0(x)\triangleq x$.
\end{definition}

The following theorem states that $F$ satisfying {\bf C2} and $F(x)+cx$ being primitive are equivalent.

\begin{theorem}\label{Th: Primitive}
The map $F_c(x)\triangleq cx+F(x)$ with $c>0$ is primitive if and only if the homogeneous, concave self-map $F:\mathbb{R}_{+}^{n}\rightarrow\mathbb{R}_{+}^{n}$ of the non-negative cone satisfies condition {\bf C2}. 
\end{theorem}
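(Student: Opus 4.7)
The plan is to prove both implications separately. Throughout, $F_c$ inherits homogeneity, concavity, and monotonicity from $F$ (via Lemma \ref{Lemma: Monotonicity}), and satisfies $F_c(y)\succeq cy$ for every $y\succeq 0$.

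Proving the contrapositive of ``$F_c$ primitive $\Rightarrow$ {\bf C2}'', I assume {\bf C2} fails: there is $\emptyset\neq J\subsetneq\{1,\ldots,n\}$ with $F_i(e_J)=0$ for every $i\notin J$. I would prove by induction that $\mathrm{supp}(F_c^m(e_J))\subseteq J$ for all $m\geq 0$, which forces $F_c^m(e_J)\not\succ 0$ and so rules out primitivity. The induction splits the iterate as $F_c(y)=cy+F(y)$: the $cy$ piece keeps the support inside $J$ trivially, and for $y\succeq 0$ supported in $J$ the bound $y\preceq Ce_J$ with $C=\max_j y_j$, together with monotonicity and homogeneity of $F$, yields $F_i(y)\leq CF_i(e_J)=0$ for each $i\notin J$.

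For ``{\bf C2} $\Rightarrow$ $F_c$ primitive'', I would show that the support of the iterates strictly grows until it fills $\{1,\ldots,n\}$, giving $M=n-1$. Fix $x\succnsim 0$ and set $J=\mathrm{supp}(x)$. Since $F_c(x)\succeq cx$, already $J\subseteq\mathrm{supp}(F_c(x))$. If $J\neq\{1,\ldots,n\}$, {\bf C2} produces some $i\notin J$ with $F_i(e_J)>0$; because $x\succeq\varepsilon e_J$ for $\varepsilon=\min_{j\in J}x_j>0$, monotonicity and homogeneity give $F_i(x)\geq\varepsilon F_i(e_J)>0$, so $\mathrm{supp}(F_c(x))\supsetneq J$. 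Iterating, $|\mathrm{supp}(F_c^m(x))|$ increases by at least one per step until it reaches $n$, so $F_c^{n-1}(x)\succ 0$. Once $y\succ 0$, the inequality $F_c(y)\succeq cy\succ 0$ preserves strict positivity indefinitely, so $M=n-1$ works.

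The main technical subtlety is the ``support-transfer'' step, where I need to convert $F_i(e_J)=0$ (resp.\ $>0$) into $F_i(y)=0$ (resp.\ $>0$) for vectors $y$ whose support is contained in (resp.\ equal to) $J$. Since concave maps on $\mathbb{R}_{+}^{n}$ need not be continuous at the boundary, I prefer the monotonicity-plus-homogeneity sandwich $\varepsilon e_J\preceq y\preceq Ce_J$ used above; Lemma \ref{Lemma: allx} could be invoked instead after augmenting $y$ on $J^{c}$ by a small positive perturbation. Beyond this, the argument is combinatorial bookkeeping on supports.
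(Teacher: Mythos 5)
Your proof is correct and takes essentially the same route as the paper: in both directions you track the set of strictly positive coordinates of the iterates $F_c^m(x)$ and use the decomposition $F_c(y)=cy+F(y)$ together with the monotonicity/homogeneity sandwich (the paper packages that step as Lemma~\ref{Lemma: allx}) to show the support either strictly grows under {\bf C2} or stays trapped in $J$ when {\bf C2} fails. Your explicit remark that $F_c(y)\succeq cy\succ 0$ keeps the iterate strictly positive once achieved is a small but welcome point the paper leaves implicit in verifying the ``for all $m\geq M$'' clause of primitivity.
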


The duality between $F$ satisfying {\bf C2} and $F(x)+cx$ being primitive leads to the main theorem in this paper:

\begin{theorem}\label{peron}
	Statements of Theorem \ref{Th: Krause} still holds if condition {\bf C1} is replaced with condition {\bf C2}. Furthermore, if $x^*\succ 0$ is a unique eigenvector of the homogeneous concave map $F$ in $\mathbb{R}_+^N,$ then $F$ must satisfy condition {\bf C2}. Moreover, iterations of $F_c(x)$ with $c>0$ converge to $x^*,$ i.e.,
	\begin{equation}
	\lim\limits_{k\rightarrow\infty} \bar{F}_c^k(x)=x^*,\text{ for all $x\succnsim 0$, and } \bar{F}_c(x)\triangleq\frac{F_c(x)}{||F_c(x)||}.\label{NPF_iteration}
	\end{equation}
\end{theorem}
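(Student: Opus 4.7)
The plan is to replace the $F$-eigenvalue problem by an equivalent fixed-point problem for $F_c(x)=cx+F(x)$, exploiting the duality established in Theorem~\ref{Th: Primitive}: $F$ satisfies \textbf{C2} iff $F_c$ is primitive. Since eigenvectors of $F_c$ at eigenvalue $\mu$ are exactly eigenvectors of $F$ at eigenvalue $\mu-c$, everything proved for $F_c$ transfers to $F$. The normalized iteration $\bar{F}_c$ already appears in (\ref{NPF_iteration}), so convergence will fall out of the same contraction argument used for uniqueness.

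For sufficiency of \textbf{C2}, I would first invoke Brouwer's fixed point theorem: the self-map $\bar{F}_c$ on the compact convex unit simplex is continuous because $F_c(x)\succeq cx\succnsim 0$ for every $x\succnsim 0$, so it admits a fixed point $x^*$. This gives $F_c(x^*)=\mu\,x^*$ for some $\mu\geq c$, hence $F(x^*)=\lambda^* x^*$ with $\lambda^*=\mu-c$. Since $F_c^m(x^*)=\mu^m x^*$ and $F_c$ is primitive (Theorem~\ref{Th: Primitive}), we conclude $x^*\succ 0$; moreover $\lambda^*=0$ is ruled out because it would force $F$ to vanish on a strictly positive vector and thus, by homogeneity and monotonicity (Lemma~\ref{Lemma: Monotonicity}), to vanish identically, contradicting \textbf{C2}. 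For uniqueness, given another positive eigenvector $y^*$ with eigenvalue $\lambda_y$, I would introduce the multiplicative bounds $\alpha=\min_i y_i^*/x_i^*$ and $\beta=\max_i y_i^*/x_i^*$; monotonicity and homogeneity together with the extremality of $\alpha,\beta$ force $\lambda_y=\lambda^*$, and then a Birkhoff--Hopf style contraction of an iterate of $F_c$ in Hilbert's projective metric (which follows from primitivity combined with concavity) forces $\alpha=\beta$, i.e., $y^*\propto x^*$. Any merely non-negative eigenvector must actually be strictly positive by the same primitivity iteration, so uniqueness in the entire non-negative cone follows.

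For necessity of \textbf{C2}, I would argue by contraposition. If \textbf{C2} fails, there is a proper non-empty $J\subsetneq\{1,\dots,n\}$ with $F_i(e_J)=0$ for every $i\notin J$; Lemma~\ref{Lemma: allx} together with monotonicity and homogeneity then implies $F_i(x)=0$ whenever $\mathrm{supp}(x)\subseteq J$ and $i\notin J$. The restriction of $F$ to vectors supported in $J$ is therefore a homogeneous, concave self-map of a lower-dimensional non-negative cone, and the existence half (applied inductively on dimension, with the one-dimensional base case being trivial) yields a non-zero non-negative eigenvector $y^*$ of the full map $F$ with $\mathrm{supp}(y^*)\subseteq J$. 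Since $y^*$ is not strictly positive, it cannot be a scalar multiple of the positive $x^*$, contradicting uniqueness.

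For convergence of the normalized iterates, for any seed $x\succnsim 0$ we have $F_c(x)\succeq cx\succnsim 0$ and, by primitivity, $F_c^M(x)\succ 0$ for all sufficiently large $M$. Once inside the interior of the cone, the Hilbert-metric contraction of $F_c^M$ --- the same ingredient used for uniqueness --- makes $\{\bar{F}_c^k(x)\}$ a Cauchy sequence on the positive simplex, so it converges, and any accumulation point is a fixed point of $\bar{F}_c$, which by uniqueness must be $x^*$. The hard part of the whole plan is the strict Hilbert-metric contraction step: the classical Birkhoff--Hopf inequality is stated for \emph{linear} positive maps, so in the nonlinear setting I expect to have to combine concavity (which supplies non-expansiveness of $F$ in Hilbert's metric) with primitivity of $F_c$ (which, via the duality in Theorem~\ref{Th: Primitive}, injects the ``strict'' aspect) in a careful iterated comparison argument on interior rays.
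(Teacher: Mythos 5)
Your overall architecture is sound, and the necessity and strict-positivity pieces track the paper's proof closely, but there is a real gap at the heart of uniqueness (and, by the same token, convergence) that you candidly flag but do not close.

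For \textbf{existence}, you invoke Brouwer's fixed-point theorem on $\bar{F}_c$ restricted to the unit simplex. That is a genuine alternative to the paper's Collatz--Wielandt-style maximization of $\rho(x)=\max\{m \mid mx\preceq F(x)\}$ over the unit sphere, and both routes work. However, your continuity justification (``$F_c(x)\succeq cx\succnsim 0$'') only ensures the denominator of $\bar{F}_c$ is nonzero; it does not address continuity of $F$ itself. Continuity of a homogeneous concave self-map of the cone \emph{including boundary points} needs its own (short) argument, and the paper's formulation of the map $F$ in Eq.~(\ref{Fdef}) even has to define $F$ on the boundary by fiat, so this is not a vacuous concern. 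The paper's maximization argument avoids touching this issue.

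For \textbf{uniqueness} and \textbf{convergence}, you outsource the decisive step to a Birkhoff--Hopf strict contraction in Hilbert's projective metric for the nonlinear iterate $F_c^M$, and you yourself note that ``the classical Birkhoff--Hopf inequality is stated for linear positive maps'' and that the nonlinear extension would require ``a careful iterated comparison argument.'' That is precisely the gap: a strict contraction coefficient is neither stated nor derived, so as written the proposal does not prove uniqueness. The paper takes a more elementary route that you actually already have all the ingredients for: from primitivity plus super-additivity one gets the \emph{strict} monotonicity $0\preceq u\precnsim v\Rightarrow P(u)\prec P(v)$ for $P=F_c^M$ (Eq.~\ref{p}), and then setting $r=\max_i(x_0)_i/(x_1)_i$ and $z=rx_1-x_0\succnsim 0$ forces $P(z)\succ 0$, hence $z\succ 0$, contradicting that some coordinate of $z$ vanishes. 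No projective metric is needed. Your own ``extremality of $\alpha,\beta$'' step can be finished the same way and does not actually require Hilbert-metric contraction. For convergence, the paper simply cites Krause's Theorem~9 (stated as Proposition~\ref{Proposition: Convergence}), which handles normalized iterates of homogeneous, concave, primitive maps; re-deriving that via a nonlinear contraction estimate is an unnecessary and, in your sketch, unfinished detour.

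For \textbf{necessity of C2}, your contraposition argument is essentially the paper's: restrict $F$ to the invariant face supported on a violating set $J$ and produce a second, not-strictly-positive eigenvector. The paper is a bit more explicit in choosing a \emph{minimal} violating set $J_*$ so that the restricted map $H$ satisfies \textbf{C2} and the already-proved sufficiency half applies cleanly; your ``apply the existence half inductively on dimension'' accomplishes the same thing, so this part is fine.
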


Compared with Theorem \ref{Th: PF}, it is evident that results for the nonlinear Perron--Frobenius problem in case of homogeneous, concave maps are very strong; existence and uniqueness of a strictly positive eigenvector can be guaranteed. Our contribution to the theory of nonlinear Perron--Frobenius theory for homogeneous, concave maps is that we relaxed the \underline{sufficient} condition of \cite{krause2001concave} (through replacing {\bf C1} by {\bf C2}) and proved that this new\footnote{We got the inspiration for our definition of condition {\bf C2} for homogeneous, concave maps from a notion in \cite{per1} for homogeneous, monotone maps. Gaubert and Gunawardena \cite{per1} refer to the homogeneous, monotone map $F$ as {\em indecomposable} if for any choice of $\emptyset\neq J\subsetneq \{1,...,n\}$, there exists $i\notin J$ such that $\lim_{a\rightarrow \infty}F_i(r_J(a))=\infty$, where $r_J(a)$ is defined as $(r_J(a))_j=a$ if $j\in J,$ and $(r_J(a))_j=1$ otherwise. While this may look very similar (or perhaps equivalent) to condition {\bf C2}, we would like to point out a subtle difference which can be very consequential. Consider the function $F(x)=[\frac{4x_1^{\frac{1}{2}}x_2^{\frac{3}{2}}}{x_1+x_2}, x_1+x_2]^T$. This function is homogeneous, concave, and monotone. It does not satisfy condition {\bf C2} because for $J=\{2\}$ we get $F_1(e_J)=F_1(e_2)=0$. However, it falls in the category of indecomposable maps of \cite{per1} because $\lim_{a\rightarrow \infty}F_1(r_2(a))=\lim_{a\rightarrow \infty}\frac{4a^{\frac{3}{2}}}{1+a}=\infty$ and $\lim_{a\rightarrow \infty}F_2(r_1(a))=\lim_{a\rightarrow \infty}1+a=\infty$. The nonlinear eigenvalue problem for this function gives two eigenvectors in $\mathcal{R}_+^2$, namely, $x_1^*=[1,1]^T$ with $\lambda_1=2$, and $x_2^*=[0,1]^T$ with $\lambda_2=1$; which is consistent with the fact that it does not satisfy condition {\bf C2}.} condition is also the \underline{necessary} condition for uniqueness of the eigenvector in the non-negative cone. 

\subsection{Multilayer Networks and an Algorithmic Notion of Connectivity}\label{Sec: Multilayer}

Graph theory is the mathematics of networks. In graph theory, a directed graph is formally defined as an ordered pair $G=(V,E)$, where $V$ is the set of nodes and $E\subset V\times V$ is the set of ordered pairs of nodes representing their directed relation. We say node $j$ is a neighbor of node $i$, if $(i,j)\in E$. The set $\mathcal{N}_i=\{j|(i,j)\in E\}$ denotes the neighbors of node $i$. A path $(v_0=i,v_1,...,v_{l-1},v_l=j)$ of length $l$ is an ordered tuple of edges than connects $i$ to $j$, i.e., $(v_{k-1},v_k)\in E$. A directed graph is strongly connected if there exists a path between all ordered pair of nodes in the network\cite{van2010graph}.

Several natural and technological systems show complex patterns of interactions among their heterogeneous entities. To capture the complexities of such systems, the network science community has recently shown substantial interest in the notion of multilayer networks \cite{Kivela2014,boccaletti2014structure} and developing proper mathematics for them beyond the classical graph theory \cite{de2013mathematical}.

In this paper, we denote a \textit{multilayer} network\footnote{In some literature, this may be referred to as a \emph{multiplex} network.} as an ordered tuple $\mathcal{G}=(V,E_A,E_B)$ where nodes in $V$ are connected through two link types $E_A$ and $E_B$. Corresponding to the multilayer network $\mathcal{G}$, we define $G_A=(V,E_A)$ and $G_B=(V,E_B)$ as the \textit{layers} of $\mathcal{G}$. Motivated by the notion of strong connectivity for directed graphs, we propose a novel notion of connectivity for multilayer networks in the following.

Our proposed notion of multilayer connectivity, which from now on we will refer to it as \textit{M--connectivity}, has an algorithmic definition. To motivate and acquaint our definition to the reader, we first point out a straight-forward property of simple strongly connected graphs. Suppose for the graph $G=(V,E)$ we have an arbitrary partition $\mathcal{P}$ of the node set $V$, i.e., members of $\mathcal{P}$ are non-empty disjoint subsets of $V$ that cover $V$, more precisely:
\begin{equation*}
\begin{split}
&(1)\, \emptyset\notin \mathcal{P},\\
&(2)\, I\cap J=\emptyset \text{ for any } I\neq J\in \mathcal{P}\\
&(3)\bigcup_{I\in \mathcal{P}}I=V.
\end{split}
\end{equation*}
We can build a graph $\boldsymbol{G}=(\mathcal{P},\mathcal{L})$, where the partition
set $\mathcal{P}$ is the node set of $\boldsymbol{G}$. Note that each node $I\in \mathcal{P}$ of $\boldsymbol{G}$ is a partitioning subset of $V$. As such, to avoid possible confusion, we will refer to nodes of $\boldsymbol{G}$ as \emph{hypernodes} from now on. We assign a directed link from one hypernode $I\in\mathcal{P}$ to another hypernode $J\in\mathcal{P}$, if there is a node $i\in I$ of $G$ that is connected to a node $j\in J$, i.e., $(i,j)\in E$. Trivially, yet importantly, strong connectivity of $G$ implies strong connectivity of $\boldsymbol{G}$. For a multilayer network $\mathcal{G}$, we use a related notion to define connectivity\footnote{We have been inspired by the notions of indecomposability for nonlinear maps and the method of aggregated graphs  from Gaubert and Gunawardena \cite[\S 1.3 \& \S 3.4]{per1}. }. The main difference is that connection among subsets must be through both layers. Following provides a formal definition.

For a multilayer network $\mathcal{G}=(V,E_A,E_B)$, we iteratively build graphs $\boldsymbol{G}^k=(\mathcal{P}_k,\mathcal{L}_k)$, starting with $\boldsymbol{G}^0=(\mathcal{P}_0,\emptyset)$, where $\mathcal{P}_0=\{\{1\},\{2\},...,\{N\}\}$ is the trivial partition of $V$ singletons. From the graph $\boldsymbol{G}^{k-1}$, we build $\boldsymbol{G}^k=(\mathcal{P}_k,\mathcal{L}_k)$ in the following way:

\textbf{Step 1:} Define the hypernode set $\mathcal{P}_k$ of cardinality equal to the number of strongly connected components of $\boldsymbol{G}^{k-1}$ where each element $I\in \mathcal{P}_k$ groups one and only one strongly connected component of $\boldsymbol{G}^{k-1}$ (i.e., $I$ is the union of all the hypernodes in that strongly connected component). Note that, doing so, the hypernode set $\mathcal{P}_k$ always denotes a partitioning of the node set $V$.

\textbf{Step2:} We assign the directed link $(I,J)\in \mathcal{L}_k$ if at least one \textbf{single} node in $I$ is connected to $J$ through both layers \textbf{simultaneously}\footnote{Note that this is different from, $\exists i_1,i_2\in I~~\text{s.t.}~~(i_1,j_{1})\in E_A,\,(i_2,j_{2})\in E_B~~\text{for some}~~j_{1},j_{2}\in J$, which basically indicates that both individual layers $G_A$ and $G_B$ are strongly connected.}, i.e.
\begin{equation*}
\begin{split}
\mathcal{L}_k=\bigg\{(I,J)\in&\mathcal{P}_k\times\mathcal{P}_k\Big\vert \exists i\in I~~\text{s.t.}~~(i,j_{1})\in E_A,\\
&~~~~~(i,j_{2})\in E_B~~\text{for some}~~j_{1},j_{2}\in J\bigg\}.
\end{split}
\end{equation*}

Figure \ref{scg} illustrates the iterative procedure explained above. 

\begin{definition}\label{def: sc}
	A multilayer network $\mathcal{G}=(V,E_A,E_B)$ is \emph{M-- connected}, if starting with with $\boldsymbol{G}^0=(\mathcal{P}_0,\emptyset)$---where $\mathcal{P}_0=\{\{1\},\{2\},...,\{N\}\}$ is the trivial partition of $V$ singletons---and inductively building $\boldsymbol{G}^1,\boldsymbol{G}^2,...$ following Step 1 and Step 2 described above, there exists an iteration step $k_*$ such that $\boldsymbol{G}^{k_*}$ is strongly connected. 
\end{definition}

Intuitively, M--connectivity of $\mathcal{G}$ implies that if we split the node set $V$ into any two subsets $V_a$ and $V_b$, there is always a node in $V_a$ (resp. $V_b$) that is connected to $V_b$ (resp. $V_a$) through both edge types. A necessary condition for M--connectivity of $\mathcal{G}$ is that both individual layers $G_A$ and $G_B$ are strongly connected. Moreover, a sufficient condition for M--connectivity of $\mathcal{G}$ is that the intersection graph $G_c\triangleq(V,E_A\cap E_B)$ is strongly connected (because, $\boldsymbol{G}^1$ which is similar to $G_c$, will be already strongly connected).

\begin{figure}[h!]
	\centering
	\begin{subfigure}{1.5in}
		\includegraphics[width=1\textwidth]{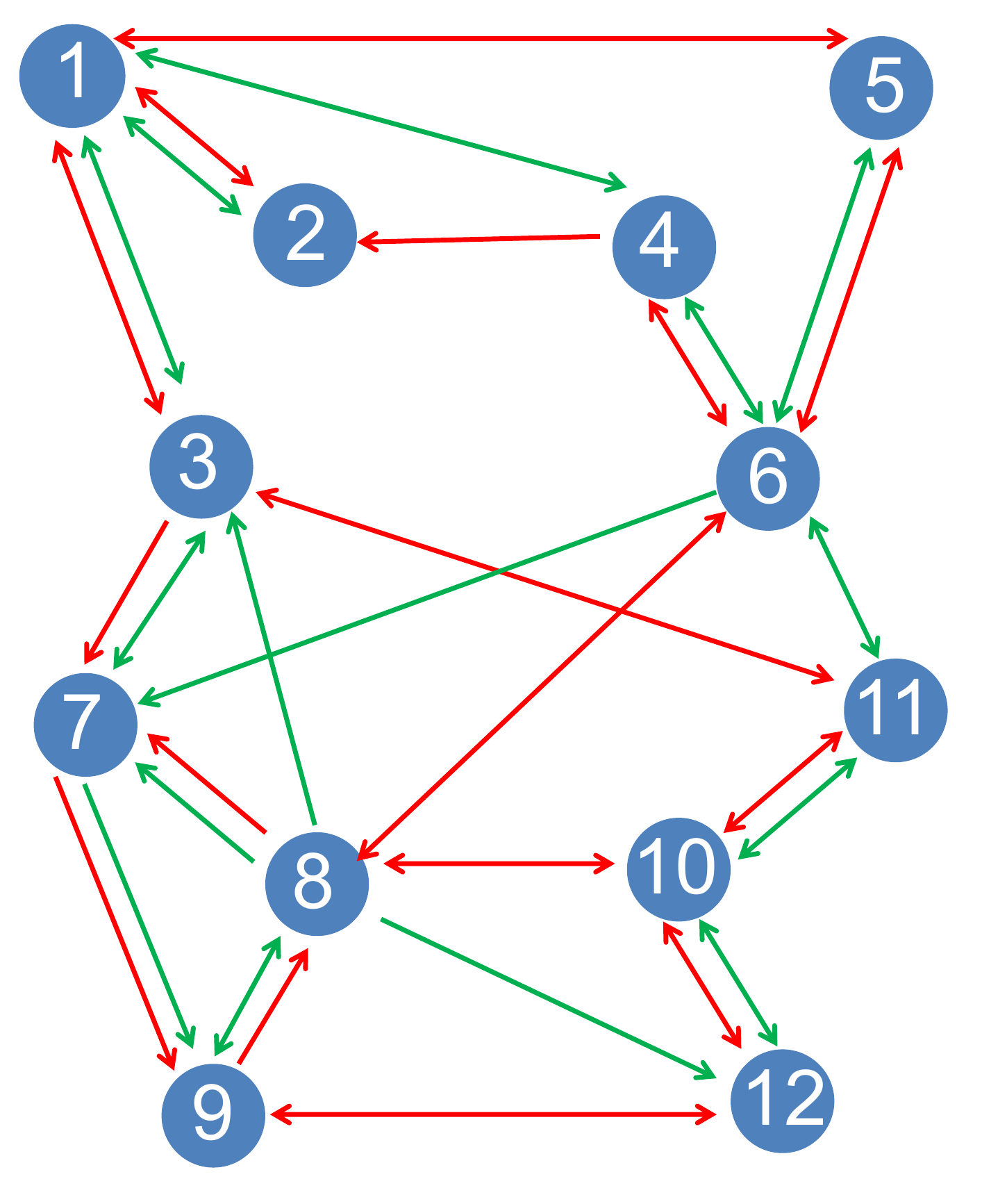}
		\caption{}
		\label{sca}%
	\end{subfigure} \hspace{0.23in}
	\begin{subfigure}{1.5in}
		\includegraphics[width=1\textwidth]{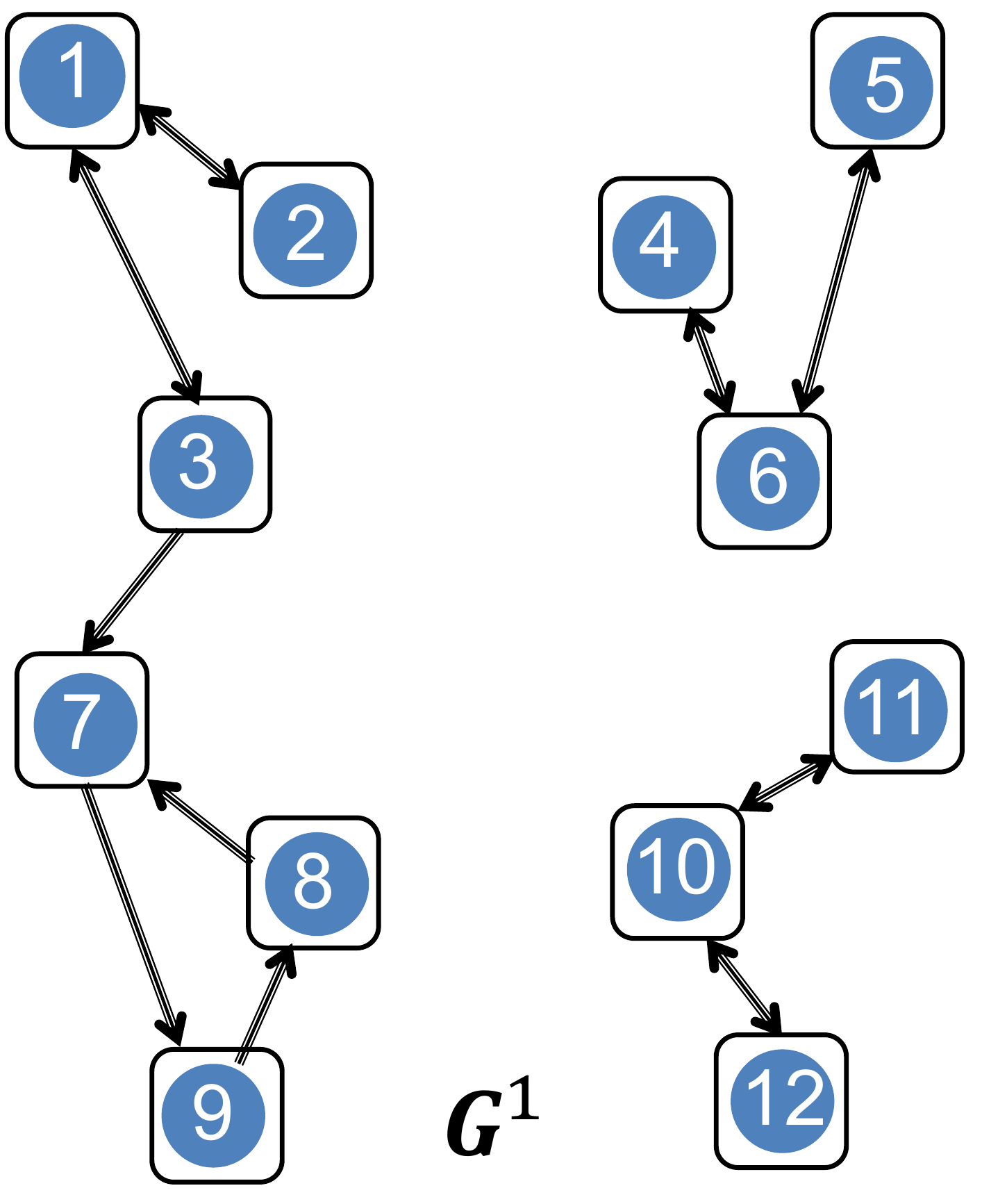}
		\caption{}
		\label{scb}%
	\end{subfigure}
	\begin{subfigure}{1.5in}
		\includegraphics[width=1\textwidth]{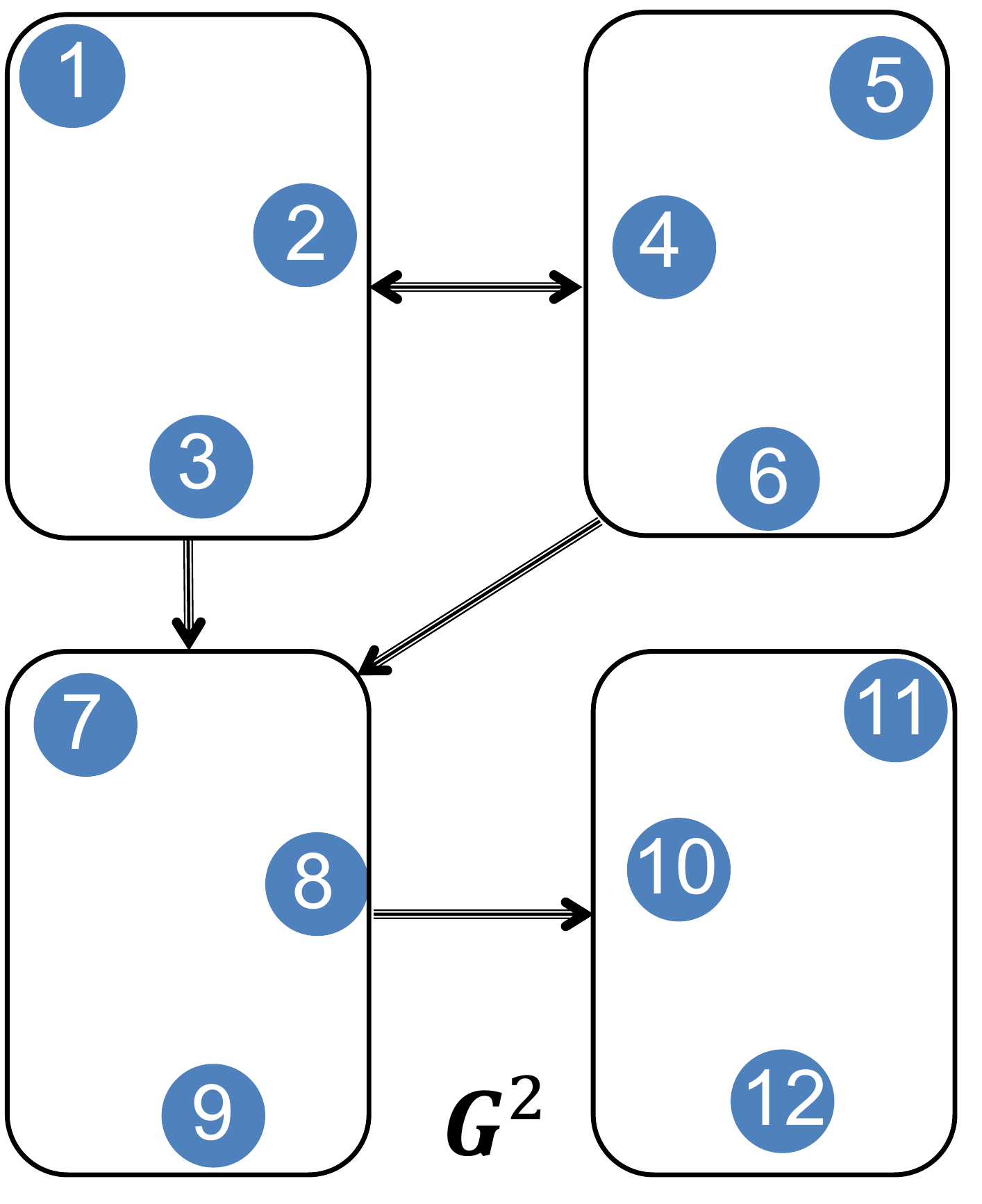}
		\caption{}
		\label{scc}%
	\end{subfigure} \hspace{0.05in}
	\begin{subfigure}{1.5in}
		\includegraphics[width=1\textwidth]{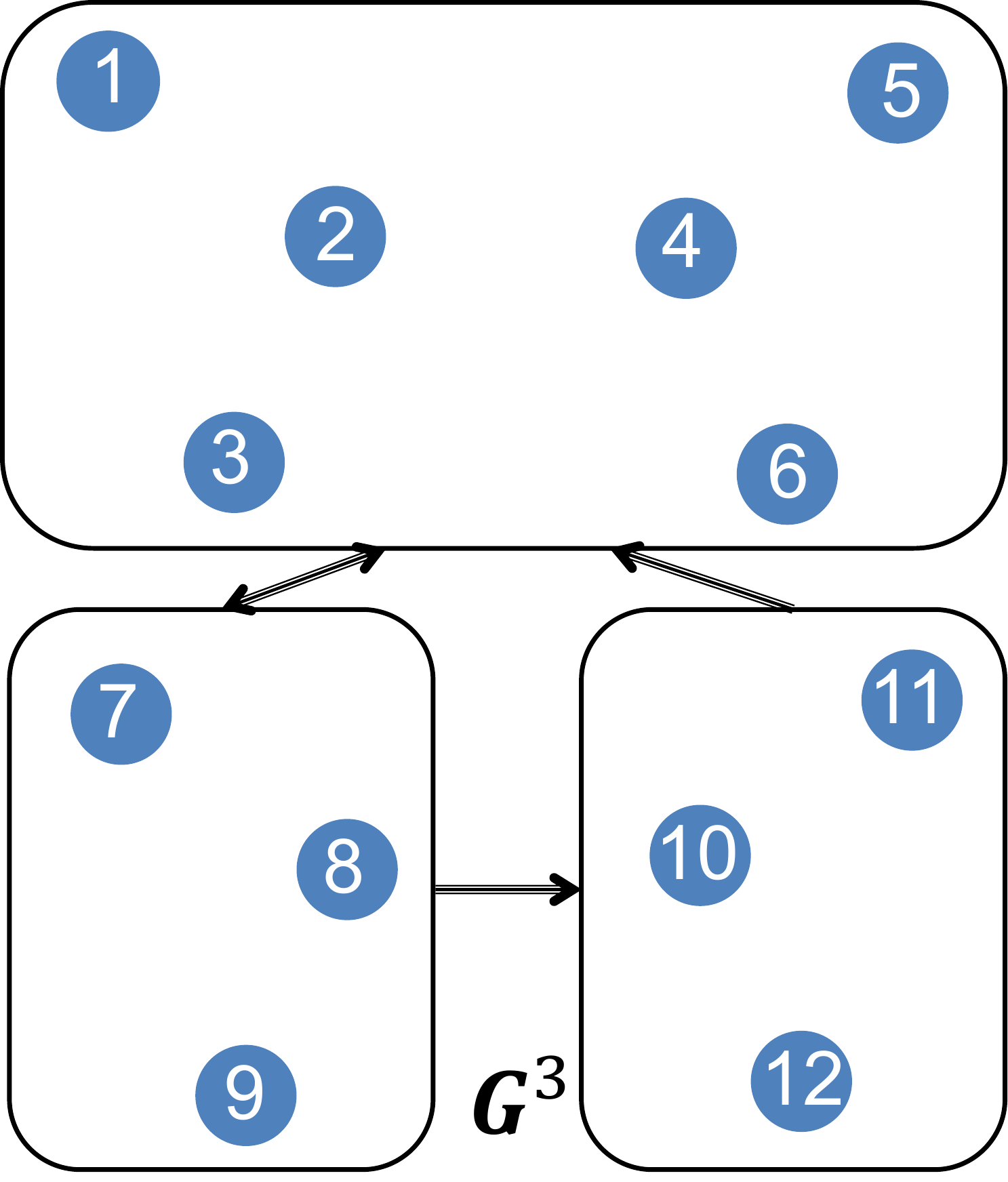}
		\caption{}
		\label{scd}%
	\end{subfigure}
	\caption{Example of an M-connected multilayer network according to Definition \ref{sc}. (a) In the two-layer graph, the red arrows represent $E_A$ edges and green arrows represent $E_B$ edges. (b) The first graph $\boldsymbol{G}^1$ has the hypernodes set $\mathcal{P}_1=\{\{1\},\{2\},...,\{12\}\}$, and its links are the intersection of $E_A$ and $E_B$ edges. Hypernodes are depicted by black squares, and links between them are shown by black arrows. The graph $\boldsymbol{G}^1$ is not strongly connected. (c) The second aggregate graph $\boldsymbol{G}^2$  has the strongly connected components of $\boldsymbol{G}^1$ as its hypernodes set $\mathcal{P}_2=\{\{1,2,3\},\{4,5,6\},\{7,8,9\},\{10,11,12\}\}$. The links among the hypernodes is according to Step 2 in Section \ref{Sec: Multilayer}. For example, the directed link $(\{4,5,6\},\{7,8,9\})\in \mathcal{L}_2$ is due to node $6$, a member of $\{4,5,6\}$, being connected to the hypernode $\{7,8,9\}$ through both layers (because, $(6,8)\in E_A$ and $(6,7)\in E_B.$). The graph $\boldsymbol{G}^2$ is not strongly connected either. (d) The third aggregated graph $\boldsymbol{G}^3$ groups strongly connected components of $\boldsymbol{G}^2$ as its hypernodes set $\mathcal{P}_3=\{\{1,2,3,4,5,6\},\{7,8,9\},\{10,11,12\}\}$. The graph $\boldsymbol{G}^3$ is strongly connected. Therefore, the two-layer network in (a) is M--connected according to Definition \ref{sc}.}
	\label{scg}%
\end{figure}

{\bf M--Connectivity and Condition C2.} Consider a multilayer network $\mathcal{G}=(V,E_A,E_B)$, where the node set are labeled from 1 to $N$, i.e., $V=\{1,2,...,N\}$. By defining a real valued vector $x:V\rightarrow\mathbb{R}_+^N$ on node set $V$, we show the relation between M--connectivity and condition {\bf C2} for functions $F(x)$.



\begin{theorem}\label{Th: MC2}
Associated with the multilayer network $\mathcal{G}=(V,E_A,E_B),$ where $V=\{1,2,...,N\}$, suppose a homogeneous, concave map $F$ of the non-negative cone is such that for any nontrivial subset $J$ of $V$ and $i\notin J$, $F_i(e_J)>0$ if and only if there exists $j_1,j_2\in J$ for which $(i,j_1)\in E_A$ and $(i,j_2)\in E_B$. Then, $F$ satisfies condition {\bf C2} if and only if the multilayer network $\mathcal{G}$ is M-connected.
\end{theorem}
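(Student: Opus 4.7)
The plan is to prove both directions of the equivalence by unpacking condition {\bf C2} into the purely combinatorial statement it is equivalent to under the standing hypothesis on $F$: namely, that for every nontrivial proper subset $J\subsetneq V$ there exist $i\in V\setminus J$ and $j_1,j_2\in J$ with $(i,j_1)\in E_A$ and $(i,j_2)\in E_B$. It is this graph-theoretic condition that must be shown equivalent to M-connectivity of $\mathcal{G}$.

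For the direction M-connectivity $\Rightarrow$ {\bf C2}, I would fix an arbitrary nontrivial proper $J\subsetneq V$ and let $\ell$ denote the largest index at which $J$ is a union of hypernodes of $\mathcal{P}_{\ell}$. This is well defined because $\mathcal{P}_0$ consists of singletons while $\mathcal{P}_{k_*+1}=\{V\}$ under M-connectivity, so $\ell\leq k_*$. Two subcases then arise. If $\boldsymbol{G}^{\ell}$ is already strongly connected, then $\{H\in\mathcal{P}_{\ell}:H\subseteq J\}$ is a nontrivial proper sub-collection of $\mathcal{P}_{\ell}$, and strong connectivity forces a link $(H',H)\in\mathcal{L}_{\ell}$ with $H'\subseteq V\setminus J$ and $H\subseteq J$; unpacking Step 2 yields the required $i\in H'$ and $j_1,j_2\in H$. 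Otherwise $J$ fails to align with $\mathcal{P}_{\ell+1}$, so some strongly connected component of $\boldsymbol{G}^{\ell}$ contains a hypernode inside $J$ and another inside $V\setminus J$. Walking a directed path inside this SCC from a hypernode contained in $V\setminus J$ toward one contained in $J$, the first link that crosses the boundary has tail $C\subseteq V\setminus J$ and head $D\subseteq J$, and Step 2 again supplies the required witnesses $i\in C$ and $j_1,j_2\in D$.

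For the direction {\bf C2} $\Rightarrow$ M-connectivity, I would argue by contradiction. Since each $\mathcal{P}_k$ is obtained from $\mathcal{P}_{k-1}$ by collapsing strongly connected components, the partitions are monotonically coarsening, and finiteness of $V$ forces the iteration to stabilize at some $\boldsymbol{G}^{k_\infty}$ in which every SCC is a single hypernode, i.e., $\boldsymbol{G}^{k_\infty}$ is a DAG. If $\mathcal{G}$ were not M-connected, then $|\mathcal{P}_{k_\infty}|\geq 2$, so the DAG admits a source hypernode $H$ with no incoming edges. Taking $J=H$, which is a nontrivial proper subset of $V$, condition {\bf C2} produces $i\in V\setminus H$ together with $j_1,j_2\in H$ satisfying the two edge conditions. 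Letting $H'\in\mathcal{P}_{k_\infty}$ be the hypernode containing $i$, Step 2 would then force $(H',H)\in\mathcal{L}_{k_\infty}$, contradicting the choice of $H$ as a source.

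The main obstacle is the straddling subcase in the forward direction: one must extract a witness $i$ lying in $V\setminus J$, which requires traversing the SCC in the correct direction (from the $V\setminus J$-side toward the $J$-side, since Step 2 places the source of the crossing link in its tail). Traversal in the opposite direction would produce a witness inside $J$, which corresponds to {\bf C2} applied to $J^c$ and does not on its own establish the required condition for the specific $J$ at hand.
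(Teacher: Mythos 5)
Your proof is correct but takes a genuinely different route from the paper's. The paper establishes the theorem by reducing condition {\bf C2} to primitivity of $F_c(x)=cx+F(x)$ via Theorem~\ref{Th: Primitive}, and then proves primitivity constructively: iterating $F_c$ on $e_j$, the support of the iterates expands outward through the SCC hierarchy $\boldsymbol{G}^1,\boldsymbol{G}^2,\ldots$ until it covers all of $V$, after which super-additivity extends the conclusion from $e_j$ to any $e_J$. Your argument stays entirely at the combinatorial level: you first translate {\bf C2}, under the standing hypothesis on $F$, into the purely graph-theoretic statement that every nontrivial $J\subsetneq V$ admits a hyperedge with tail contained in $V\setminus J$ and head contained in $J$, and you then prove this statement directly by identifying the largest level $\ell$ at which $J$ is a union of $\mathcal{P}_\ell$-hypernodes and extracting the crossing link either from the strongly connected $\boldsymbol{G}^{\ell}$ itself or from an SCC of $\boldsymbol{G}^{\ell}$ that straddles the boundary of $J$. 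This bypasses the primitivity machinery and verifies {\bf C2} ``per $J$'' in a single pass rather than through iterated positivity propagation; it arguably isolates the combinatorial content more cleanly, at the cost of the careful case analysis on $\ell$ and the boundary-crossing walk. The backward directions essentially coincide, both exploiting a source hypernode of the stabilized DAG, though yours reaches a direct contradiction with {\bf C2} rather than arguing via non-primitivity of $F_c$. Your closing remark about traversal direction is well taken and worth stating explicitly: since Step~2 places the witness node $i$ in the \emph{tail} of a hyperedge, the walk inside the straddling SCC must proceed from $V\setminus J$ into $J$; walking the other way would verify {\bf C2} for $J^c$, which is logically independent of the instance for $J$.
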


\section{Model Development \label{Sec: Model}}
Before introducing our model, we first review a background on the networked SIS epidemic process.

\subsection{A Background on Networked SIS Model}\label{Sec: SIS}

Susceptible--infected--susceptible (SIS) model is a paradigmatic epidemic spreading model. In the SIS model, each individual is either \emph{susceptible} or \emph{infected}, and individuals are assumed to immediately become susceptible to the disease after recovery. SIS model is thus suitable for modeling sexually transmitted infections such as Gonorrhea and Syphilis \cite{keeling2008modeling}.

Classical compartmental epidemic models assume homogeneous (fully mixed) interactions among individuals. In networked epidemic models, interactions among individuals are explicitly modeled using a \emph{contact network}, represented by the graph $G=(V,E)$, where individuals are represented by nodes $V$ of a graph and possible interactions are the edges $E$ of a graph. Node $j$ is a neighbor of node $i$, denoted as $(i,j)\in E$, if she can infect node $i$ directly. We can also use weighted graphs to represent contact networks. Doing so, the weight value of a link would serve as a proxy for heterogeneity of contact levels among pairs of individuals. For example, if both nodes $j$ and $k$ are infected and $w_{ij}=2w_{ik}$, the likelihood that a susceptible node $i$ contracts the disease from node $j$ is double the likelihood of contracting it from node $k$. In this paper, we allow the contact graph be directed and weighted.

In the networked SIS model \cite{ganesh2005effect}, the state of node $i$ at time $t$ is denoted by $X_{i}(t)\in\{S,I\}$, where $X_{i}(t)=S$ if the node is susceptible or $X_{i}(t)=I$ if it is infected. In this model, a susceptible nodes becomes infected if it is exposed to an infected individual. Moreover, an infected individual recovers and becomes susceptible again after a recovery period. The infection and curing times are commonly assumed to have a memoryless property, leading to exponentially distributed time intervals in continuous time descriptions. More general time distributions are also possible and addressed in the literature to some extent \cite{lloyd2001viruses,neal2014endemic,van2013non,cator2013susceptible,ogura2016stability}.

The overall evolution of the nodes states are due to their interactions with each other. Hence, mathematical description of the SIS model requires utilization of the collective state $\boldsymbol{X}=[X_{1},...,X_{N}]$, which is the joint state of all $N$ nodes in the network. The network state is a continuous-time Markov process that undergoes transition over a space consisting of $2^{N}$ possible network states. In this description, we say an event has occurred if the state of a single node changes. Furthermore, the time interval for the event occurrence is exponentially distributed. This time interval can equivalently be described as the minimum of transition times of a set of statistically independent processes on node states, denoted by $X_i$, and pair states, denoted by $(X_i,X_j)$, as the following:
\begin{align*}
X_i&: I\rightarrow S \text{, for }i\in V,&T\sim exp(\delta),\label{SIS_node}\\
(X_{i},X_{j})&:(S,I)\rightarrow(I,I)\text{, if }(i,j)\in E,&T\sim exp(\beta w_{ij}),
\end{align*}
where $\delta$ and $\beta$ are called \textit{curing} and \textit{infection} rates, respectively, and $T$ represents the corresponding exponentially distributed transition duration.

Describing the network Markov process as competition among statistically independent nodal and edge-based transitions, similar to the above formulation of the SIS process, allows for a much more general framework for modeling networked epidemic processes (see, \cite{Sahneh2013TON}). We will use this approach to describe our adaptive contact epidemic model.

Finally, the Kolmogorov equation, which governs probability distribution of the SIS Markov process, is a system of $2^{N}$ coupled differential equations which is neither computationally nor analytically tractable for large number of nodes. Moment closure approximations \cite{gleeson2012accuracy,taylor2012markovian,van2009TN,Sahneh2013TON} or Monte Carlo simulations are thus necessary to study the networked SIS process. The SIS process shows a phase transition behavior where initial infections die out quickly for small values of $\beta/\delta$, while infections can persist in the network for long time (coined as \emph{metastable state}) for large values of $\beta/\delta$ \cite{Pastor-Satorras2015}. The critical value separating these regions is called the \textit{epidemic threshold}. As such, epidemic threshold suggests a measure of networks robustness against epidemic spreading. In this paper, whenever we say network $a$ is more robust against epidemic spreading than network $b$, we mean network $a$ has a larger value of epidemic threshold that network $b$.

\subsection{AC-SAIS Markov Model\label{Sec: MarkovModel}}

Consider a population of $N$ individuals, where each individual is either
\emph{susceptible}, \emph{alert}, or \emph{infected}. For each
individual $i\in\{1,...,N\}$, let the random variable $X_{i}(t)=S$ if the
individual $i$ is susceptible at time $t$, $X_{i}(t)=A$ if alert, and
$X_{i}(t)=I$ if infected. In the AC-SAIS model of this paper, contacts of a node depends on her state. Specifically, we
define $\mathcal{N}_{i}^{S}$ as the neighbors of node i when she is
susceptible, and $\mathcal{N}_{i}^{A}$ as her neighbors when she is alert. Associated with these neighborhood sets, we consider weight values $w^S_{ij}>0$ if $j\in\mathcal{N}_{i}^{S}$ and $w^A_{ij}>0$
if $j\in\mathcal{N}_{i}^{A}$ as a proxy for heterogeneity of the contact levels.

Four competing stochastic transitions describe the AC-SAIS model, as Fig. \ref{schematicsais} depicts:

\begin{enumerate}
	[leftmargin=*]
	
	\item \textbf{Infection of susceptible nodes:} A susceptible individual becomes infected from her infected neighbor (among $\mathcal{N}_i^S$) after an exponentially distributed random time duration with the infection rate $\beta$.
	\[
	\!\!\!(X_{i},X_{j}):(S,I)\rightarrow(I,I)\text{, if }(i,j)\in E_S,\quad T\sim exp(\beta w^S_{ij}).
	\]
	
	\item \textbf{Alerting of susceptible nodes:} A susceptible individual becomes alert from her infected neighbor (among $\mathcal{N}_i^S$) after an exponentially distributed random time duration with the {\em alerting} rate $\kappa$.
	\[
	\!\!\!(X_{i},X_{j}):(S,I)\rightarrow(A,I)\text{, if }(i,j)\in E_S,\quad T\sim exp(\kappa w^S_{ij}).
	\]
	
	\item \textbf{Infection of alert nodes:} An alert individual becomes infected due to having an infected neighbor among her switched neighborhood set $\mathcal{N}_i^A$ after an exponentially distributed random time duration with the infection rate $\beta$.
	\[
	\!\!\!(X_{i},X_{j}):(A,I)\rightarrow(I,I)\text{, if }(i,j)\in E_A,\quad T\sim exp(\beta w^A_{ij}).
	\]
	
	\item \textbf{Recovering of infected nodes:} An infected individual recovers to the susceptible state after an exponentially distributed random time duration with recovery rate $\delta$:
	\[
	X_i: I\rightarrow S \text{, for }i\in V,\qquad T\sim exp(\delta).
	\]
\end{enumerate}
	

\begin{figure}[ptb]
\includegraphics[scale=.37,trim={0 4.5cm 0 10mm}]{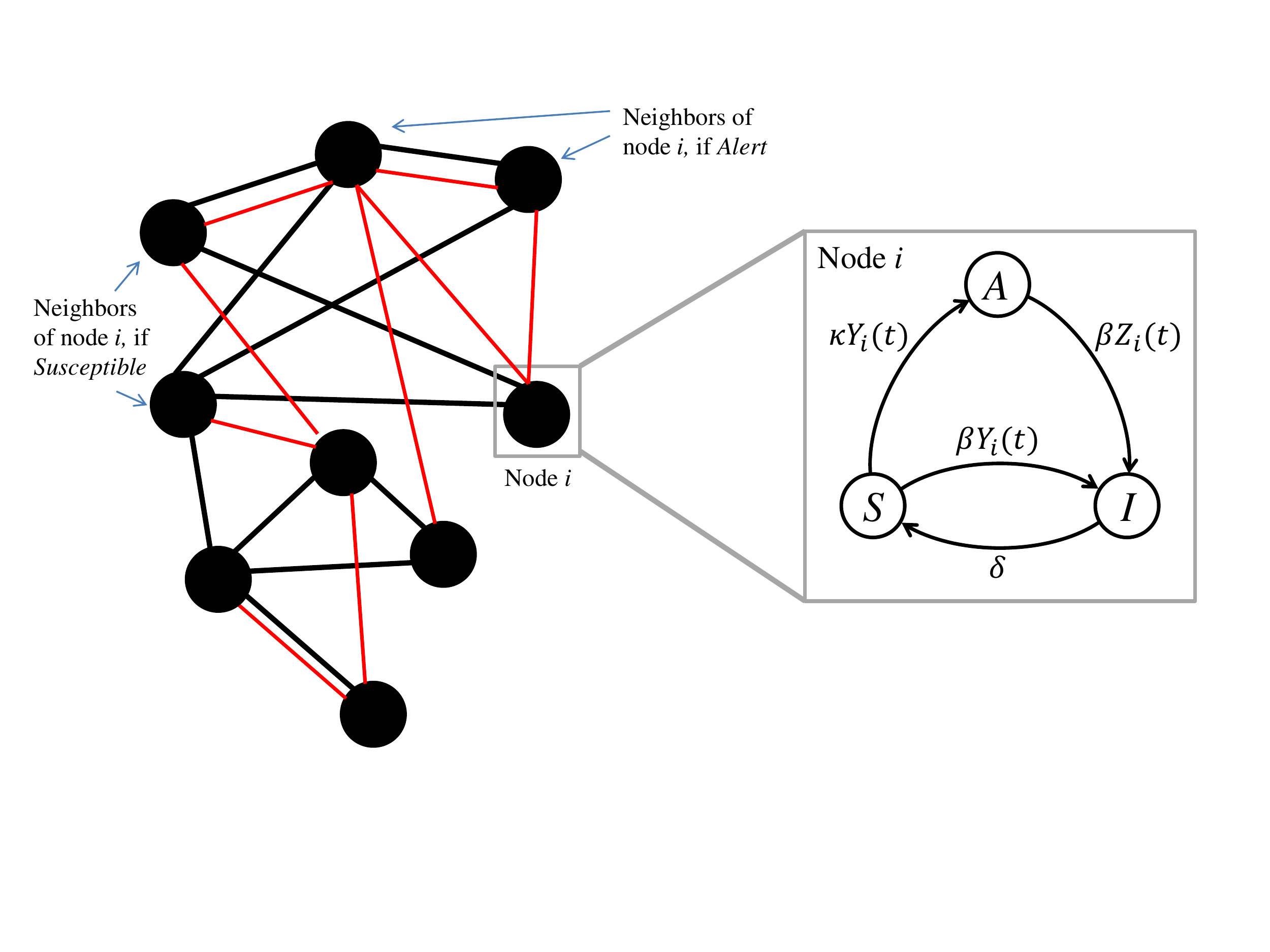} \caption{Schematic of the AC-SAIS
model. Black edges correspond to neighborhood $\mathcal{N}_{i}^{S}$ of susceptible 
node $i$, while red edges represent the neighborhood
$\mathcal{N}_{i}^{A}$ when node $i$ is in the alert state. Here, $\beta$, $\delta$, and
$\kappa$ are the infection rate, curing rate, and alerting rate, respectively.
$Y_{i}(t)$ is the number of infected neighbors of $i$ in $\mathcal{N}_{i}^{S}$
at time $t$ and $Z_{i}(t)$ is the number of infected neighbors of $i$ in
$\mathcal{N}_{i}^{A}$ at time $t$.}%
\label{schematicsais}%
\end{figure}

\vspace{.1 in}
{\bf A few remarks on the AC-SAIS model:} The disease dynamics component of the AC-SAIS model is according to the networked SIS model, elaborated in Section \ref{Sec: SIS}. Therefore, a representative example would be the spread of Syphilis or Gonorrhea for which the sexual contact network is well-defined and disease dynamics are SIS-type. In this scenario, the alerting process can be the result of a partner notification effort.

In the AC-SAIS model, we assume that if an alert individual never gets infected, she will remain in the alert state indefinitely. In other words, we do not consider an awareness decay process where alert individuals can transition to the susceptible state directly. In practice, we are assuming that the awareness decay process is so much slower than the disease dynamics that it becomes irrelevant for the disease spreading. Interested readers can refer to \cite{juher2015analysis} for analysis of an SAIS model with awareness decay.

The current setup of the AC-SAIS model only considers a type-2 preventive behavior of altering contacts, whereas the original SAIS model considered a type-1 preventive behavior by assuming a lower infection rate for alert individuals. It would be possible to also incorporate type-1 behaviors in the AC-SAIS model by lowering the infection rate for the alert individuals to $\beta_a<\beta$. In order to isolate the role of network adaptation, we do not change the infection rate in this study.

The contact alteration scheme in the AC-SAIS model assumes the contacts set of an individual only depends on her own state; it is the default set when susceptible, and the adapted set when alert. Particularly, the contact set of an alert individual is fixed and is independent of the health state of those contacts. Such contact adaptation scheme is most sensible when the identity of infected contacts are not known to the individual. For example, in the context of sexually transmitted diseases and partner notification, the identity of the infectious patient is not revealed to their partners. So, a subsequent contact adaptation may not necessarily lead to definite avoidance of infectious partners.

\subsection{An Equivalent Multilayer Representation\label{Sec: MultLayer}}


From a networked dynamical system perspective, the network topology in the AC-SAIS model is time-varying and switches among $2^{N}$ different possibilities because each node $i$ may adopt one of the two neighborhood sets $\mathcal{N}_i^S$ and $\mathcal{N}_i^A$. However, the AC-SAIS model can be equivalently interpreted as a spreading process on a two-layer network. The AC-SAIS Markov process described in Section \ref{Sec: MarkovModel} falls in the broad class of generalized epidemic modeling framework (GEMF) introduced in \cite{Sahneh2013TON} for spreading processes on multilayer networks. In essence, the switching contact network of the AC-SAIS model can be equivalently described as a spreading process on multilayer network $\mathcal{G}=(V,E_{S},E_{A})$, where each layer determines the interaction neighborhood that induces state change in a node, depending on its current state. Note that we only need to include layers for susceptible and alert nodes,  because the transition of an infected node towards the susceptible state is spontaneous and does not depend on other nodes states.

Significantly, a multilayer network formulation of adaptive contact reduces the problem from defining a process between $2^{N}$ separate topologies to defining a process on top of a static two--layer network, effectively modeling complex switching dynamics with a conceptually
 straightforward framework. The network layers $G_{S}=(V,E_S)$ and $G_{A}=(V,E_A)$ represent the two extreme cases among all possible $2^{N}$ configurations. The network layer $G_S$ would be physical contact network if none of the nodes were alert, and the network layer $G_A$ would be the physical contact network if none of the nodes were susceptible. Associated with network layers $G_S$ and $G_A$, we define the weighted adjacency matrices $W_{S}=[w_{ij}^S]$ and $W_{A}=[w_{ij}^A]$, respectively. The realized topology at a given time will be a mixture of the two network layers according to the collective node states at that time.  Interestingly though, we show it is
possible to characterize the behavior of the AC-SAIS model in terms of the spectral properties of $W_{S}$ and $W_{A}$ and their interrelation.

{\bf Remark.} The actual, physical/social contact between the network agents is fundamentally different from those represented by the multilayer network $\mathcal{G}$. For example, a directed edge $(i,j)\in E_S$ is physically relevant only if node $i$ is susceptible and node $j$ is infected. Otherwise, node $i$ and $j$ might have a different interaction if, for instance, both are susceptible. However, the later is not relevant for disease spreading and thus no need to be incorporated in our epidemic model. Take for instance the contact between node $i$, who is a nurse, and node $j$, who is a student. These two might not have any social contact in normal situation, however, when node $j$ (the student) is sick, she can possibly pass infection to node $i$ (the nurse); and this is the contact important for epidemic modeling purpose. Also, realize that this contact is directional because when the nurse is sick, she may not have a physical contact with the student. This is why in our state-dependent contact network formulation we do not make ``undirectedness'' assumption on the underlying graph.

\subsection{Mean-Field AC-SAIS Model\label{Sec: Approx}}

Similar to the networked SIS model described in Section \ref{Sec: SIS}, the collective state $\boldmath{X}(t)$ in the AC-SAIS model is a Markov process. However, this Markov process is both
analytically and numerically intractable due to its exponential state space
size of $3^{N}$ (each node can be in one of three states). We can leverage the observation that the AC-SAIS model falls in the GEMF class of stochastic spreading processes on multilayer networks---for which Sahneh et al. \cite{Sahneh2013TON} have derived a system of nonlinear differential equations describing the evolution of state-occupancy probabilities after adopting a first-order, mean-field-type approximation. 



Following procedures explained in \cite{Sahneh2013TON}, we find the first order mean-field-type approximate model for the AC-SAIS model as
\begin{align}
\dot{p}_{i}  &  =-\delta p_{i}+\beta(1-q_{i}-p_{i})\sum w_{ij}^{S}p_{j}+\beta
q_{i}\sum w_{ij}^{A}p_{j},\label{pi_dot}\\
\dot{q}_{i}  &  =\kappa(1-q_{i}-p_{i})\sum w_{ij}^{S}p_{j}-\beta q_{i}\sum
w_{ij}^{A}p_{j}, \label{qi_dot}%
\end{align}
for $i\in\{1,...,N\}$, where $p_{i}$ corresponds to the probability that individual $i$ is
infected, and $q_{i}$ corresponds to the probability that she is alert.

It is worthwhile to acknowledge the limitations of mean-field models. Statistical physics tells us that MF approximations function suitably for infinite-dimensional networks. While, they can perform very poorly for sparse or highly structured networks, such as rings or low-dimensional lattices, particularly close to critical model parameters. Despite, the approximation allows for investigating extremely complex dynamics, and discovering intriguing phenomena and key network characteristics influencing them.

\section{Analysis of AC-SAIS Model\label{Sec: ETE}}
In this section, we compute and study the epidemic
threshold of the mean-field AC-SAIS model in Eqs. (\ref{pi_dot}--\ref{qi_dot}) through analyzing its equilibrium points. Our motivation for this approach stems from the mean-field SIS model which exhibits a threshold phenomena in its equilibrium where a stable (see, \cite{khanafer2014stability,khanafer2014stability2,bonaccorsi2015epidemic}) endemic equilibrium emerges \cite{van2009TN}.

To facilitate the subsequent analysis, we make the following assumption on the structure of the default and adapted neighborhoods throughout this article.
\begin{condition}
	\label{sc}The edge sets $E_{S}$ and $E_{A}$ are such that the two-layer network $\mathcal{G}=(V,E_{S},E_{A})$ is M--connected according to Definition \ref{def: sc}. 
\end{condition}

\subsection{Mean-Field Epidemic Threshold Equation}
 Our approach to finding the critical value $\tau_c$ for AC-SAIS model (Eqs. \ref{pi_dot} and \ref{qi_dot}) is through examining the equilibrium points; as used by Van Mieghem for the SIS model in \cite{van2009TN}. The idea is to show that for $\tau>\tau_c$ an endemic equilibrium ($p^*_i>0,\, \forall i)$ exists aside from the disease-free equilibrium\footnote{Note that $\tau_c$ is the mean-field model threshold value which is a lower bound of the actual value in the exact stochastic SIS process.}. In this approach, strong connectivity of the underlying contact network is pivotal. In case of the SIS model, Van Mieghem \cite{van2009TN} showed that if the contact graph is strongly connected, equilibriums of the mean-field model must either be all zero---the disease-free equilibrium---or they must be strictly positive---the endemic equilibrium. Following lemma shows that similar argument holds for the AC-SAIS model (Eqs. \ref{pi_dot},\ref{qi_dot}) under the M--connectivity assumption of the multilayer network $\mathcal{G}$ as in Definition \ref{def: sc}.

\begin{lemma}
Under Assumption \ref{sc}, the equilibrium value of the infection
probability $p_{i}^{\ast}$ is either zero for all individuals, or strictly positive for all individuals. Moreover, a positive equilibrium satisfies:
\begin{equation}
\frac{p_{i}^{\ast}}{1-p_{i}^{\ast}}=\tau\{\frac{(\bar{\kappa}+1)\sum
	w_{ij}^{A}p_{j}^{\ast}\sum w_{ij}^{S}p_{j}^{\ast}}{\bar{\kappa}\sum w_{ij}^{S}p_{j}^{\ast}+\sum
	w_{ij}^{A}p_{j}^{\ast}}\}, \label{pi_ss}
\end{equation}
with {\emph effective infection rate} $\tau$ and {\emph relative alerting rate} $\bar{\kappa}$ respectively defined as\footnote{According to Poisson processes theory, the effective infection rate $\tau=\beta/\delta$ is equal to the expected number of attempts per link that an infected node makes to infect her neighbor during her infectious period \cite{van2009performance}. The relative alerting rate $\bar{\kappa}=\kappa/\beta$ indicates the ratio of the chance that an infected neighbor cause her neighbor to become alert versus the causing her to become infected. For instance, $\bar{\kappa}=\frac{1}{2}$ means the chance that a node becomes infected from her infected neighbor is twice the chance of becoming alert as the result of interacting with the same neighbor.}
\[
\tau\triangleq\beta/\delta,\quad
\bar{\kappa}\triangleq\frac{\kappa}{\beta}.
\]
\end{lemma}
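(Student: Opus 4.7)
The plan is to prove the two claims in sequence, with the dichotomy being the essential step and the formula being an algebraic consequence.

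For the dichotomy, I would start by setting the right-hand sides of (\ref{pi_dot}) and (\ref{qi_dot}) to zero and abbreviating $s_i^* \triangleq 1-p_i^*-q_i^*$, $S_i \triangleq \sum_j w_{ij}^S p_j^*$, and $A_i \triangleq \sum_j w_{ij}^A p_j^*$. At any node with $p_i^* = 0$, the $\dot p_i = 0$ equation reduces to $\beta s_i^* S_i + \beta q_i^* A_i = 0$, and non-negativity of all factors forces $s_i^* S_i = 0$ \emph{and} $q_i^* A_i = 0$ simultaneously. The pivotal observation is that if both $S_i>0$ and $A_i>0$ held at such a node, then necessarily $s_i^* = q_i^* = 0$, giving $p_i^* = 1$, a contradiction. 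Letting $J \triangleq \{i : p_i^* = 0\}$ and supposing $\emptyset \neq J \subsetneq V$, I would invoke Assumption \ref{sc} via the characterization stated after Definition \ref{def: sc} (formalized through condition \textbf{C2} and Theorem \ref{Th: MC2}) applied to the proper non-empty subset $V\setminus J$. This produces a node $i \in J$ having at least one $S$-layer neighbor \emph{and} at least one $A$-layer neighbor in $V\setminus J$; for that $i$, positivity of $p_j^*$ on $V\setminus J$ and of the weights yields $S_i>0$ and $A_i>0$, triggering the contradiction above. Hence $J \in \{\emptyset, V\}$.

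For the positive branch I would take $p_i^* > 0$ for all $i$, which combined with strong connectivity of each layer (a necessary consequence of M-connectivity) guarantees $S_i, A_i > 0$ throughout. From $\dot q_i = 0$ I read off $\kappa s_i^* S_i = \beta q_i^* A_i$, and substituting this identity into $\dot p_i = 0$ collapses the first equation to $\delta p_i^* = (\beta + \kappa)\, s_i^* S_i$. Solving for $s_i^*$, then for $q_i^*$, and plugging into $s_i^* + q_i^* + p_i^* = 1$ yields, after clearing denominators,
\begin{equation*}
\frac{p_i^*}{1-p_i^*} \;=\; \frac{\beta(\beta+\kappa)\, S_i A_i}{\delta(\beta A_i + \kappa S_i)},
\end{equation*}
which, upon dividing numerator and denominator by $\beta^2$ and substituting $\tau = \beta/\delta$ and $\bar\kappa = \kappa/\beta$, is exactly (\ref{pi_ss}).

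The main obstacle I anticipate is ensuring that Assumption \ref{sc} is used in its full strength in the dichotomy step. A naive argument relying only on strong connectivity of each individual layer fails: the factored condition $s_i^* S_i = 0$ at a zero node admits the boundary branch $s_i^* = 1$, $S_i = 0$ (a ``still fully susceptible'' node whose $S$-neighbors all have $p_j^*=0$), and symmetrically $q_i^* A_i = 0$ admits $q_i^*=0$ with $A_i>0$; these independent escape hatches mean single-layer connectivity cannot force $S_i > 0$ and $A_i > 0$ at a \emph{common} node of $J$. The refinement that closes the argument is precisely the bipartition property built into M-connectivity (equivalently \textbf{C2} via Theorem \ref{Th: MC2}): a single node in $J$ that is both-layer visible from $V\setminus J$. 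Identifying that this is exactly the ingredient needed, and applying it to the partition $(V\setminus J, J)$ rather than $(J, V\setminus J)$, is the only delicate point.
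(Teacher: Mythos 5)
Your proposal is correct, and the dichotomy step takes a genuinely different route from the paper's own proof. The paper proves the all-zero-or-all-positive property \emph{constructively}: it picks a single node $j$ with $p_j^\ast>0$, derives the fixed-point identity, then propagates positivity outward by walking through the aggregated graphs $\boldsymbol{G}^1,\boldsymbol{G}^2,\dots,\boldsymbol{G}^{k_\ast}$ from the M-connectivity definition --- essentially re-executing, inside this lemma, the same iterative argument that underlies Theorems~\ref{Th: Primitive} and~\ref{Th: MC2}. You instead run a one-shot contradiction: set $J=\{i:p_i^\ast=0\}$, invoke the bipartition property that M-connectivity delivers (via condition \textbf{C2} and Theorem~\ref{Th: MC2}) for the nontrivial subset $V\setminus J$, obtain an $i\in J$ with $S_i>0$ and $A_i>0$ simultaneously, and hit the conservation constraint $s_i^\ast+q_i^\ast+p_i^\ast=1$. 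This is cleaner in that it reuses the abstract machinery the paper has already built rather than rederiving the propagation, and it cleanly decouples the dichotomy from the formula. Your identification of the key subtlety is exactly right: the condition $s_i^\ast S_i=0$ and $q_i^\ast A_i=0$ at a zero node has two independent ``escape hatches,'' which is why per-layer strong connectivity alone fails and why \textbf{C2} must be applied to $V\setminus J$ (not $J$) to produce a node of $J$ that is both-layer visible from the positive side. One small imprecision worth tightening: to invoke Theorem~\ref{Th: MC2} you must exhibit \emph{some} homogeneous concave map meeting its hypothesis (e.g.\ $F_i(x)=\min\{\sum_j a^S_{ij}x_j,\sum_j a^A_{ij}x_j\}$ with $0/1$ adjacencies, or the paper's own Eq.~(\ref{Fdef})); stating this explicitly removes any appearance of circularity since the lemma verifying that Eq.~(\ref{Fdef}) satisfies \textbf{C2} appears later in the paper. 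Your algebra for Eq.~(\ref{pi_ss}) --- eliminating $q_i^\ast$ via $\kappa s_i^\ast S_i=\beta q_i^\ast A_i$, collapsing $\dot p_i=0$ to $\delta p_i^\ast=(\beta+\kappa)s_i^\ast S_i$, and closing with the conservation law --- is equivalent to the paper's substitution of $q_i^\ast$ from its Eq.~(\ref{qi_ss}) into Eq.~(\ref{pi_ss0}), and both routes land on the same expression.
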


\begin{proof}
Assume $p_{j}^{\ast}>0$. Letting $\dot{q}_i=0$ in Eq. (\ref{qi_dot}) for any node $i$ with $w_{ij}^{S}>0$ or
$w_{ij}^{A}>0$, yields
\begin{equation}
q_{i}^{\ast}=\frac{\bar{\kappa}\sum w_{ij}^{S}p_{j}^{\ast}}{\bar{\kappa}\sum
w_{ij}^{S}p_{j}^{\ast}+\sum w_{ij}^{A}p_{j}^{\ast}}(1-p_{i}^{\ast
}),\label{qi_ss}%
\end{equation}
Therefore, according to Eq. (\ref{pi_dot}), the equilibrium infection
probabilities $p_{i}^{\ast}$ satisfy%
\begin{equation}
p_{i}^{\ast}=\beta\frac{(1-q_{i}^{\ast})\sum w_{ij}^{S}p_{j}^{\ast}%
+q_{i}^{\ast}\sum w_{ij}^{A}p_{j}^{\ast}}{\delta+\beta\sum w_{ij}^{S}%
p_{j}^{\ast}}.\label{pi_ss0}%
\end{equation}
Replacing $q^*_i$ from Eq. (\ref{qi_ss}) in Eq. (\ref{pi_ss0}) yields the formula in Eq. (\ref{pi_ss}).

The rest of the proof concerns choosing $i$ deliberately, so that $p_{j}^{\ast}>0$ guarantees $p_{i}^{\ast}>0$, and repeating the process until concluding positive equilibrium probabilities for all nodes. We employ the definition of graphs $\boldsymbol{G}^k=(\mathcal{P}_k,\mathcal{L}_k)$ associated with the multilayer network $\mathcal{G}=(V,E_S,E_A)$ as explained in Section \ref{Sec: Multilayer}. According to Definition \ref{def: sc}, if $\mathcal{G}$ is M--connected, there exists $k_*$ such that $\boldsymbol{G}^{k_*}$ is a strongly connected graph. Eq. (\ref{pi_ss}) indicates that in order to get $p^*_i>0$, both $\sum
w_{ij}^{A}p_{j}^{\ast}$ and $\sum w_{ij}^{S}p_{j}^{\ast}$ must be positive. Therefore, choosing $i$ such that $(i,j)\in \mathcal{L}_1$ (for which $w_{ij}^S>0$ and $w_{ij}^A>0$) necessitates $p_{i}^{\ast}>0$. Repeating this process yields the equilibrium probability of all the nodes in the strongly connected component of $G^{1}$ that contains $j$ are all positive. This strongly connected component of $G^{1}$ becomes a single hypernode, which we call $J\in \mathcal{P}_2$, for graph $\boldsymbol{G}^2$. So far, we have proved that $\forall j\in J,\, p^*_j>0$. According to the definition of $\boldsymbol{G}^k$, for graph $\boldsymbol{G}^2$, there is a directed link from component $J$ to component $I$, i.e., $(I,J)\in\mathcal{L}_2$, if and only if 
\begin{equation}\label{ir}
	\exists i\in I~~\text{such that}~~w^{A}_{ij_{1}}, w^{S}_{ij_{2}}>0 ~~\text{for some}~~j_{1},j_{2}\in J.
\end{equation}
Since $\forall j\in J,\, p^*_j>0$, we get $p^*_i>0$ for the above choice of $i$, which further indicates all the nodes of $I$ have positive equilibrium values. As a result, all the nodes belonging to the strongly connected component of $\boldsymbol{G}^2$ that contains $J$ have positive equilibrium values. This procedure can be repeated for $\boldsymbol{G}^3,...,\boldsymbol{G}^{k_*}$. Since, $\boldsymbol{G}^{k_*}$ is strongly connected, all the nodes of the network must have positive equilibrium values.
\end{proof}

We can find the epidemic threshold by examining the
equilibrium points in Eq. (\ref{pi_ss}). For $\tau<\tau_{c}$, the disease-free state is the only equilibrium. However, for $\tau>\tau_{c}$, another equilibrium point
$\boldsymbol{p}^{\ast}\triangleq\lbrack p_{1}^{\ast},...,p_{N}^{\ast}]^{T}\succ0$,
also exists in the positive orthant. Therefore, we find the threshold value of $\tau_{c}$ if we can find a critical value $\tau=\tau_c$ such that
$p_{i}^{\ast}|_{\tau=\tau_{c}}=0$ while
$\frac{dp_{i}^{\ast}}{d\tau}|_{\tau=\tau_{c}}>0$ for all $i\in\{1,...,N\}$. We have the following theorem regarding the value of the epidemic threshold. We would like to emphasize that such a threshold corresponds to the mean-field approximate model (Eqs. \ref{pi_dot}--\ref{qi_dot} ) and should not be confused as the actual threshold value in the exact AC-SAIS Markov model.

\begin{theorem}
\label{Theorem: NPF}The threshold value $\tau_{c}$ for AC-SAIS model
(\ref{pi_dot}-\ref{qi_dot}) is such that the equation%
\begin{equation}
\boldsymbol{z}=\tau_{c}(\bar{\kappa}+1)F(\boldsymbol{z}),
\label{ThresholdEquation}
\end{equation}
with
\begin{equation}
F(\boldsymbol{z})_{i}\triangleq\frac{\sum w_{ij}^{A}z_{j}\sum w_{ij}^{S}z_{j}}{\bar{\kappa}\sum
	w_{ij}^{S}z_{j}+\sum w_{ij}^{A}z_{j}},
\label{Fdef}
\end{equation}
has a nontrivial solution $\boldsymbol{z}\triangleq\lbrack z_{1},...,z_{N}%
]^{T}\succ0$.
\end{theorem}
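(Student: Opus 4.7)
The plan is to regard $\tau_c$ as a bifurcation value of the equilibrium equations and reduce the linearization at the disease-free equilibrium to a nonlinear eigenvalue problem that falls under the scope of Theorem \ref{peron}. Following the strategy indicated in the paragraph preceding the statement, I would track the derivative $\boldsymbol{z} \triangleq [dp_1^*/d\tau,\ldots,dp_N^*/d\tau]^T|_{\tau=\tau_c}$ and use the fact that $\boldsymbol{p}^*|_{\tau=\tau_c} = \boldsymbol{0}$ with $\boldsymbol{z} \succ 0$ to extract a scalar eigenvalue relation between $\boldsymbol{z}$ and $F(\boldsymbol{z})$.

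Concretely, writing $p_i^* = \epsilon z_i + O(\epsilon^2)$ with $\epsilon \triangleq \tau - \tau_c$ and Taylor-expanding both sides of (\ref{pi_ss}), the left-hand side becomes $\epsilon z_i + O(\epsilon^2)$. On the right-hand side I would invoke the homogeneity $F(c\boldsymbol{z}) = cF(\boldsymbol{z})$ (immediate from (\ref{Fdef}) since the numerator scales quadratically and the denominator linearly in $c$), together with continuity of $F$, to conclude $F_i(\boldsymbol{p}^*) = \epsilon F_i(\boldsymbol{z}) + O(\epsilon^2)$. Matching the leading $O(\epsilon)$ terms yields precisely (\ref{ThresholdEquation}). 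The nontrivial step is therefore to show that such a strictly positive $\boldsymbol{z}$ actually exists at a well-defined $\tau_c > 0$, which is where Theorem \ref{peron} enters.

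Applying Theorem \ref{peron} to $F$ requires verifying homogeneity, concavity, and Condition \textbf{C2}. Homogeneity was just observed. For concavity I would write $F_i(\boldsymbol{z}) = h(\sum_j w_{ij}^S z_j,\sum_j w_{ij}^A z_j)$ with $h(x,y) \triangleq xy/(\bar{\kappa}x + y)$, and note that on the open positive orthant $1/h(x,y) = 1/x + \bar{\kappa}/y$. Hence $h$ is (up to a linear change of variables) the classical harmonic mean, which is concave on $\mathbb{R}_+^2$ and extends continuously to the boundary by $h(0,0)=0$; composition with the nonnegative linear forms $z\mapsto\sum_j w_{ij}^S z_j$ and $z\mapsto\sum_j w_{ij}^A z_j$ preserves concavity, so each $F_i$ is concave. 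For Condition \textbf{C2}, inspection of (\ref{Fdef}) shows that $F_i(e_J)>0$ if and only if both $\sum_{j\in J} w_{ij}^S>0$ and $\sum_{j\in J} w_{ij}^A>0$, i.e.\ if and only if there exist $j_1,j_2\in J$ with $(i,j_1)\in E_S$ and $(i,j_2)\in E_A$. This is exactly the hypothesis of Theorem \ref{Th: MC2}, which together with Assumption \ref{sc} delivers Condition \textbf{C2}.

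With all three hypotheses in place, Theorem \ref{peron} produces a unique (up to scaling) strictly positive eigenvector $\boldsymbol{z}^* \succ 0$ with a positive eigenvalue $\lambda^* > 0$ satisfying $F(\boldsymbol{z}^*) = \lambda^* \boldsymbol{z}^*$. Setting $\tau_c \triangleq 1/[(\bar{\kappa}+1)\lambda^*]$ then realizes (\ref{ThresholdEquation}) with $\boldsymbol{z} = \boldsymbol{z}^*$, completing the proof. The main obstacle I anticipate is the verification of concavity of $F$: the rational form of $h$ makes a direct Hessian computation awkward, so I would lean on the harmonic-mean representation rather than compute second derivatives; the remaining steps are a clean deployment of the machinery assembled in Section \ref{Sec: Background}.
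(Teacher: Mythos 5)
Your derivation of the threshold equation is essentially the paper's: both arguments perturb the equilibrium relation (\ref{pi_ss}) at $\tau=\tau_c$, where $p_i^*\downarrow 0$ and $z_i\triangleq dp_i^*/d\tau|_{\tau_c}>0$, and extract the leading-order relation $z_i=\tau_c(\bar{\kappa}+1)F_i(\boldsymbol{z})$; the paper phrases this as an application of L'H\^opital's rule to the rewritten equilibrium equation, whereas you phrase it as a Taylor expansion $p_i^*=\epsilon z_i+O(\epsilon^2)$ together with homogeneity and continuity of $F$ on the interior of the cone, and these are the same computation. The one organizational difference is that the paper's proof of this theorem stops at deriving (\ref{ThresholdEquation}) and defers the verification of homogeneity, concavity, and Condition~{\bf C2} of $F$ (and the consequent existence and uniqueness of $\boldsymbol{z}\succ 0$ via Theorem~\ref{peron}) to the paragraph and lemma that follow, culminating in Corollary~\ref{Corollary: AC_SAIS NPF}; you fold that material directly into the proof, which is logically fine but goes beyond what the paper establishes at this step.
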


\begin{proof}
	Equation (\ref{pi_ss}) can be rewritten as
	\[
	\frac{p_{i}^{\ast}}{\sum w_{ij}^{S}p_{j}^{\ast}}=\tau({1-p_{i}^{\ast}})\{\frac{(\bar{\kappa}+1)\sum
		w_{ij}^{A}p_{j}^{\ast}}{\bar{\kappa}\sum w_{ij}^{S}p_{j}^{\ast}+\sum
		w_{ij}^{A}p_{j}^{\ast}}\}.
\]
Now, we take the limit of both sides as $\tau\downarrow\tau_c$, for which $p_i^*\downarrow 0$ for $\forall i$ according to the definition of an epidemic threshold. Since the limit of numerator and denominator of fraction terms of both sides goes to zero, we apply the L'H\^opital's rule for limits \cite{rudin1964principles}:
	\[
	\lim\limits_{\tau\downarrow\tau_c}\frac{ \frac{d}{d\tau}p_{i}^{\ast}}{{\sum w_{ij}^{S}\frac{d}{d\tau}p_{j}^{\ast}}}=\tau_c\lim\limits_{\tau\downarrow\tau_c}\frac{(\bar{\kappa}+1)\sum
		w_{ij}^{A}\frac{d}{d\tau}p_{j}^{\ast}}{\bar{\kappa}\sum w_{ij}^{S}\frac{d}{d\tau}p_{j}^{\ast}+\sum
		w_{ij}^{A}\frac{d}{d\tau}p_{j}^{\ast}}.
	\]
Defining $z_{i}\triangleq\frac{d}{d\tau}p_{i}^{\ast}|_{\tau
	=\tau_{c}}$, the above equation will lead to (\ref{ThresholdEquation}).	The value of $\tau_{c}$ that solves (\ref{ThresholdEquation}) is the critical
value for which $p_{i}^{\ast}=0$, however, $dp_{i}^{\ast}/d\tau>0$, denoting a
second-order phase transition at $\tau=\tau_{c}$. Therefore, $\tau_{c}$ is the
epidemic threshold for AC-SAIS model (\ref{pi_dot}-\ref{qi_dot}).
\end{proof}

Letting $\bar{\kappa}=0$ in Eq. (\ref{Fdef}) yields $F(\boldsymbol{z})=W_S\boldsymbol{z}$, which reduces Eq. (\ref{ThresholdEquation}) to the Perron Frobenius problem $\boldsymbol{z}=\tau_{c}W_{S}\boldsymbol{z}$, suggesting $\tau_{c}=1/\lambda_{1}(W_{S})$; the SIS mean-field threshold. For the AC-SAIS model, the epidemic threshold condition pertains to the
nonlinear Perron-Frobenius problem (\ref{ThresholdEquation}). Though an analytical solution is not expected, we can employ the tools of Section \ref{Sec: NLPF}.

In order to employ Theorem \ref{Theorem: NPF}, we should prove our nonlinear map $F$ in Eq. (\ref{Fdef}) is homogeneous and concave, and it satisfies condition \textbf{C2} defined in Definition \ref{C2}. The map $F$ in Eq. (\ref{Fdef}) is defined for interior of the nonnegative cone. We extend the definition to the boundary of the nonnegative cone by letting $F(\boldsymbol{z})_i=0$ whenever $\sum w_{ij}^{S}z_{j}=0$ and $\sum w_{ij}^{A}z_{j}=0$. In this way, $F(\boldsymbol{z})$ is well defined for all $\boldsymbol{z}\succeq0$. It is obvious that $F$ in Eq. (\ref{Fdef}) is a homogeneous map. Concavity of $F$ can be also deduced from the concavity of the function $g:\mathbb{R}^{2}_{+}\rightarrow \mathbb{R}_{+} $ defined as $g(u,v)=\frac{uv}{u+v}$ (which is half of the harmonic average) because the arguments of $u$ and $v$ are linear transformation of $z_i$'s. Next lemma proves that it also satisfies condition \textbf{C2}.

\begin{lemma}
	Function $F$, defined in Eq. (\ref{Fdef}), satisfies condition \textbf{C2} if and only if the multilayer graph $\mathcal{G}=(V,E_S,E_A)$ is M--connected.
\end{lemma}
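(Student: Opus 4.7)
The plan is to reduce the claim directly to Theorem \ref{Th: MC2} by verifying that the specific map $F$ in Eq. (\ref{Fdef}) satisfies the structural hypothesis required there, identifying the multilayer graph of that theorem as $\mathcal{G}=(V,E_S,E_A)$ (i.e.\ with $E_A$ of the theorem playing the role of $E_S$, and $E_B$ the role of $E_A$).

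First I would evaluate $F$ on the indicator vector $e_J = \sum_{j\in J} e_j$. Plugging into the definition gives
\begin{equation*}
F_i(e_J) \;=\; \frac{\bigl(\sum_{j\in J} w_{ij}^{A}\bigr)\bigl(\sum_{j\in J} w_{ij}^{S}\bigr)}{\bar\kappa \sum_{j\in J} w_{ij}^{S}+\sum_{j\in J} w_{ij}^{A}},
\end{equation*}
with the convention $F_i(e_J)=0$ when both sums in the numerator vanish (as stipulated just before the lemma). Since all weights $w_{ij}^S,w_{ij}^A$ are non-negative and the denominator is strictly positive whenever either sum in the numerator is, this fraction is strictly positive if and only if \emph{both} $\sum_{j\in J} w_{ij}^{S}>0$ and $\sum_{j\in J} w_{ij}^{A}>0$. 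Unwinding the definition of the edge sets ($w_{ij}^S>0 \Leftrightarrow (i,j)\in E_S$, and similarly for $E_A$), this is equivalent to the existence of $j_1,j_2\in J$ with $(i,j_1)\in E_S$ and $(i,j_2)\in E_A$.

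This equivalence is precisely the hypothesis of Theorem \ref{Th: MC2}. It was already noted in the text immediately preceding the lemma statement that $F$ is homogeneous and concave (concavity following from the concavity on $\mathbb{R}_+^2$ of the half-harmonic mean $g(u,v)=uv/(u+v)$ composed with the linear maps $u=\sum w_{ij}^S z_j$, $v=\sum w_{ij}^A z_j$), so all the standing assumptions of Theorem \ref{Th: MC2} are in place. Invoking that theorem then yields $F$ satisfies condition \textbf{C2} iff $\mathcal{G}$ is M--connected, completing the proof.

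The argument is essentially a verification, so no step presents a genuine obstacle; the only mild subtlety I would be careful about is the boundary extension of $F$, since the fraction $uv/(u+v)$ is not defined at $(0,0)$. The convention $F_i(e_J)=0$ when both $\sum w_{ij}^S z_j$ and $\sum w_{ij}^A z_j$ vanish needs to be consistent with the logical equivalence "$F_i(e_J)>0 \Leftrightarrow$ both sums are positive," and I would note that in the remaining cases (exactly one of the two sums vanishes) the fraction itself is $0$, so the equivalence still holds and matches the graph-theoretic criterion required by Theorem \ref{Th: MC2}.
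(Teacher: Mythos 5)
Your proof is correct and takes essentially the same approach as the paper: verify that the map $F$ of Eq.~(\ref{Fdef}) satisfies the structural hypothesis of Theorem~\ref{Th: MC2}---namely that $F_i(e_J)>0$ if and only if there exist $j_1,j_2\in J$ with $(i,j_1)\in E_S$ and $(i,j_2)\in E_A$---and then invoke that theorem. You are somewhat more explicit about the direct evaluation of $F_i(e_J)$ and about the boundary convention, but the structure and key step are identical to the paper's proof.
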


\begin{proof}
We just argued that $F$, as defined in Eq. (\ref{Fdef}), is homogeneous and concave. Also, for any set $J$, $F_i(e_J)>0$ if and only if there exists $i\notin J$ and $j_1,j_2\in J$ for which $w_{i,j_1}^S>0$ and $w_{i,j_2}^A>0$, i.e., $(i,j_1)\in E_S$ and $(i,j_2)\in E_A$. Therefore, Theorem \ref{Th: MC2} is applicable and proves the lemma.
\end{proof}

Since we showed $F$ in Eq. (\ref{Fdef}) is homogeneous and concave, and satisfies condition \textbf{C2}, we can apply Theorem \ref{peron} to prove existence and uniqueness of a strictly positive solution for $\boldsymbol{z}$ to the nonlinear Perron--Frobenius problem (\ref{ThresholdEquation}).

\begin{corollary}\label{Corollary: AC_SAIS NPF}
If the multilayer graph $\mathcal{G}=(V,E_S,E_A)$ is M--connected, the nonlinear Perron--Frobenius problem (\ref{ThresholdEquation}) has a unique solution $\boldsymbol{z}=\boldsymbol{z}_*\succ 0$ with $||\boldsymbol{z}_*||_2=1$. Furthermore, the following numerical update law will converge asymptotically to $\boldsymbol{z}_*$:
\begin{equation}
\boldsymbol{z}_{k+1}\triangleq\frac{F(\boldsymbol{z}_{k})+c\boldsymbol{z}_{k}}{\left\Vert F(\boldsymbol{z}_{k})+c\boldsymbol{z}_{k}\right\Vert _{2}},\label{Wk_Law}%
\end{equation}
with $c>0$, and the initial state $\boldsymbol{z}_{0}\succ 0$ and $||\boldsymbol{z}_{0}||_2=1$. Moreover, the threshold value is $\tau_{c}=\frac{1}{(\bar{\kappa}+1)(\boldsymbol{z}_*^{T}F(\boldsymbol{z}_*)-c)}$.	
\end{corollary}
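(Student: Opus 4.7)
The plan is to recognize this corollary as a direct application of Theorem \ref{peron} to the specific map $F$ defined in (\ref{Fdef}), followed by a short calculation to read off $\tau_c$ from the resulting eigenpair. I will proceed in three steps.

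First, I would verify the three hypotheses of Theorem \ref{peron} for $F$. Homogeneity is immediate since both the numerator and the denominator of each $F_i$ scale linearly in $\boldsymbol z$. Concavity on $\mathbb{R}_+^N$ is inherited from concavity of $g(u,v)=uv/(u+v)$ on $\mathbb{R}_+^2$ composed with the non-negative linear forms $u_i=\sum_j w_{ij}^S z_j$ and $v_i=\sum_j w_{ij}^A z_j$; the boundary extension $F_i(\boldsymbol z)=0$ whenever $u_i=v_i=0$ preserves concavity and homogeneity. Condition \textbf{C2} under Assumption \ref{sc} is supplied by the lemma immediately preceding the corollary. So all the hypotheses of Theorem \ref{peron} are in place.

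Second, I would invoke Theorem \ref{peron}. It yields a strictly positive eigenvector $\boldsymbol z_*$ of $F$, unique up to positive scaling, which I normalize by $\|\boldsymbol z_*\|_2=1$. Writing $F(\boldsymbol z_*)=\lambda^*\boldsymbol z_*$ and matching against the threshold equation (\ref{ThresholdEquation}) gives $\tau_c=1/\big((\bar\kappa+1)\lambda^*\big)$. The same theorem also provides convergence of the normalized iterate $\bar F_c^k(\boldsymbol z_0)\to\boldsymbol z_*$ for any $c>0$ and any $\boldsymbol z_0\succnsim 0$. Since (\ref{Wk_Law}) is exactly this iteration in the 2-norm and the hypothesis $\boldsymbol z_0\succ 0$ is stronger than $\succnsim 0$, convergence is immediate.

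Third, I would compute $\lambda^*$ from the fixed point to obtain the stated $\tau_c$. Since $F(\boldsymbol z_*)=\lambda^*\boldsymbol z_*$ and $\|\boldsymbol z_*\|_2=1$, inner-producting with $\boldsymbol z_*$ gives $\lambda^*=\boldsymbol z_*^TF(\boldsymbol z_*)$, and equivalently $F_c(\boldsymbol z_*)=(\lambda^*+c)\boldsymbol z_*$ so the normalizing constant in (\ref{Wk_Law}) converges to $\|F_c(\boldsymbol z_*)\|_2=\boldsymbol z_*^TF(\boldsymbol z_*)+c$; subtracting the shift $c$ from this limit recovers $\lambda^*$, which plugged into $\tau_c=1/\big((\bar\kappa+1)\lambda^*\big)$ produces the closed form. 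I do not anticipate a substantial obstacle here; the corollary is essentially a repackaging of Theorem \ref{peron}. The one subtlety worth double-checking is that the boundary extension of $F$ preserves the concavity and \textbf{C2} hypotheses, and that the iterates remain in the positive cone at every step, which follows because $F_c(\boldsymbol z)\succ 0$ whenever $\boldsymbol z\succ 0$ thanks to the $c\boldsymbol z$ term.
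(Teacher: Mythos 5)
Your overall approach is exactly what the corollary's placement invites: after the lemma just above shows $F$ in (\ref{Fdef}) is homogeneous, concave, and satisfies condition \textbf{C2} under Assumption~\ref{sc}, the corollary is a direct instance of Theorem~\ref{peron}, with the normalizing iteration~(\ref{NPF_iteration}) specialized to the $2$-norm giving~(\ref{Wk_Law}). Your verification of the hypotheses and invocation of the theorem is correct, and you rightly observe that $\boldsymbol z_0\succ 0$ is a stronger hypothesis than $\succnsim 0$ so convergence is immediate.

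The one place you should not have passed silently is the final formula. Your derivation actually yields $\tau_c = 1/\bigl((\bar\kappa+1)\lambda^*\bigr)$ with $\lambda^* = \boldsymbol z_*^T F(\boldsymbol z_*)$ (since $\|\boldsymbol z_*\|_2=1$), i.e.
\[
\tau_c=\frac{1}{(\bar\kappa+1)\,\boldsymbol z_*^T F(\boldsymbol z_*)},
\]
with no $-c$. This is not the closed form written in the corollary, which reads $\tau_c=\frac{1}{(\bar\kappa+1)(\boldsymbol z_*^T F(\boldsymbol z_*)-c)}$; for any $c>0$ the two differ. The stated formula is consistent only if the inner product is taken with $F_c$ rather than $F$, since $\boldsymbol z_*^T F_c(\boldsymbol z_*)=\lambda^*+c$, so that $\boldsymbol z_*^T F_c(\boldsymbol z_*)-c=\lambda^*$. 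In other words, the corollary as printed appears to contain a typo ($F$ where $F_c$ was intended, or an extraneous $-c$). You implicitly performed this correction in the phrase ``subtracting the shift $c$ from this limit recovers $\lambda^*$,'' but then asserted that this ``produces the closed form,'' which it does not as written. A careful proof should either derive $\tau_c=\frac{1}{(\bar\kappa+1)\boldsymbol z_*^T F(\boldsymbol z_*)}$ and flag the discrepancy, or explicitly note that the statement's denominator must be read as $\boldsymbol z_*^T F_c(\boldsymbol z_*)-c$. A quick sanity check at $\bar\kappa=0$ makes this unambiguous: there $F(\boldsymbol z)=W_S\boldsymbol z$, $\boldsymbol z_*=\boldsymbol v_S$, $\boldsymbol z_*^T F(\boldsymbol z_*)=\lambda_1(W_S)$, and $\tau_c=1/\lambda_1(W_S)$, which the $-c$ version fails to reproduce.
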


\subsection{Possible Solutions to MF Epidemic Threshold\label{sec: approx}}
Corollary \ref{Corollary: AC_SAIS NPF} proves the existence and uniqueness of a solution for the AC-SAIS threshold formula in Eq. (\ref{ThresholdEquation}). Furthermore, the update law of Eq. (\ref{Wk_Law}) suggests a numerical algorithm for finding the threshold value. Interestingly, a numerical experiment in the next section (see, Fig. \ref{tc}) shows that the epidemic threshold value is a non-monotone function of contact adaptation rate (quantified by $\bar{\kappa}$); indicating faster contact adaptation is not necessarily always better in suppressing epidemics. Here, we aim to predict such scenarios without numerically solving the nonlinear Perron-Frobenius problem for the epidemic threshold.

 The idea is perturbing the threshold equation (\ref{ThresholdEquation}) around two extreme cases of $\bar{\kappa}=0$ and $\bar{\kappa}\rightarrow\infty$, for which we know the exact solutions. Specifically, 1) for $\bar{\kappa}=0$, the epidemic threshold is $\tau_{c}|_{\bar{\kappa}=0}=\frac{1}{\lambda_{1}(W_{S})}$ and $\boldsymbol{z}|_{\bar{\kappa}=0}=\boldsymbol{v}_{S}$ is a solution,
where $\boldsymbol{v}_{S}$ is the dominant eigenvector of matrix $W_{S}$; and 2) for $\bar{\kappa}\rightarrow\infty$, the epidemic threshold is $\tau_{c}|_{\bar{\kappa}\rightarrow\infty}=\frac{1}{\lambda_{1}(W_{A})}$ and $\boldsymbol{z}|_{\bar{\kappa}\rightarrow\infty}=\boldsymbol{v}_{A}$ is a solution, where $\boldsymbol{v}_{A}$ is the dominant eigenvector of matrix $W_{A}$. Thus, employing spectral perturbation techniques, we can approximate the threshold value for
small and large values of relative alerting rate $\bar{\kappa}$.

\begin{theorem}
\label{TheoremAprox}The value of the epidemic threshold solving
(\ref{ThresholdEquation}) has the forms%
\begin{equation}
\label{tawc1}
\tau_{c}(\bar{\kappa})  =\frac{1}{\lambda_{1}(W_{S})}(1+\bar{\kappa}(\Psi(W_{S}%
,W_{A})-1))+o(\bar{\kappa}),
\end{equation}
suitable for small values of $\bar{\kappa}$, and
\begin{equation}
\label{tawc2}
\tau_{c}(\bar{\kappa})=\frac{1}{\lambda_{1}(W_{A})}(1+\bar{\kappa}^{-1}(\Psi(W_{A}%
,W_{S})-1))+o(\bar{\kappa}^{-1})
\end{equation}
suitable for large values of $\bar{\kappa}$, where $\Psi(A,B)$ is%
\begin{equation}
\label{PHIAB}
\Psi(A,B)\triangleq\sum\nolimits_{i=1}^{N}u_{i}v_{i}\frac{\sum\nolimits_{j=1}%
^{N}a_{ij}v_{j}}{\sum\nolimits_{j=1}^{N}b_{ij}v_{j}},
\end{equation}
and $\boldsymbol{v}_A=[v_{1},...,v_{N}]^{T}$ and $\boldsymbol{u}_A=[u_{1}%
,...,u_{N}]^{T}$ are the right and left dominant eigenvectors of $A$
corresponding to $\lambda_{1}(A)$ with $\boldsymbol{v}_A^T\boldsymbol{u}_A=1$.
\end{theorem}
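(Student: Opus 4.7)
The plan is to derive both asymptotic formulas by spectral perturbation of the nonlinear Perron--Frobenius equation~(\ref{ThresholdEquation}) about the two degenerate regimes. Written out componentwise with $F$ from~(\ref{Fdef}), the fixed-point equation reads
\[
z_i \;=\; \tau_c(\bar{\kappa}+1)\,\frac{\bigl(\sum_j w_{ij}^A z_j\bigr)\bigl(\sum_j w_{ij}^S z_j\bigr)}{\bar{\kappa}\sum_j w_{ij}^S z_j + \sum_j w_{ij}^A z_j}.
\]
At $\bar{\kappa}=0$ the right-hand side collapses to $\tau_c\sum_j w_{ij}^S z_j$, so the classical Perron theorem identifies the base solution as $(\tau_c^{(0)},\boldsymbol{z}^{(0)})=(1/\lambda_1(W_S),\boldsymbol{v}_S)$; dividing numerator and denominator by $\bar{\kappa}$ and sending $\bar{\kappa}\to\infty$ yields the opposite base solution $(1/\lambda_1(W_A),\boldsymbol{v}_A)$. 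Corollary~\ref{Corollary: AC_SAIS NPF} already guarantees that the positive unit-norm solution $\boldsymbol{z}_*(\bar{\kappa})$ is unique for every $\bar{\kappa}\ge 0$, so an implicit-function-theorem argument on $\{\boldsymbol{z}\succ 0:\|\boldsymbol{z}\|_2=1\}$---using the simplicity of the Perron roots of $W_S$ and $W_A$ to invert the linearization on $\boldsymbol{v}_{S/A}^{\perp}$---will supply the analytic dependence of both $\tau_c$ and $\boldsymbol{z}_*$ on the perturbation parameter.

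For the small-$\bar{\kappa}$ expansion, I would postulate $\tau_c=\tau_c^{(0)}+\bar{\kappa}\tau_c^{(1)}+o(\bar{\kappa})$ and $\boldsymbol{z}=\boldsymbol{v}_S+\bar{\kappa}\boldsymbol{z}^{(1)}+o(\bar{\kappa})$, factor $\sum_j w_{ij}^A z_j$ out of the denominator, and expand via $(1+\bar{\kappa}\alpha_i)^{-1}=1-\bar{\kappa}\alpha_i+O(\bar{\kappa}^2)$ with $\alpha_i = \sum_j w_{ij}^S v_{S,j}/\sum_j w_{ij}^A v_{S,j}$. Collecting order-$\bar{\kappa}$ contributions and using $W_S\boldsymbol{v}_S=\lambda_1(W_S)\boldsymbol{v}_S$ yields the vector identity
\[
(I-\tau_c^{(0)}W_S)\boldsymbol{z}^{(1)} \;=\; \lambda_1(W_S)\tau_c^{(1)}\boldsymbol{v}_S + \boldsymbol{v}_S - \lambda_1(W_S)\boldsymbol{d}_S,
\]
where $(\boldsymbol{d}_S)_i = v_{S,i}^2/\sum_j w_{ij}^A v_{S,j}$. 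Since the left null-space of $I-\tau_c^{(0)}W_S$ is spanned by the dominant left eigenvector $\boldsymbol{u}_S$, the Fredholm solvability condition together with $\boldsymbol{u}_S^T\boldsymbol{v}_S=1$ and $\lambda_1(W_S)v_{S,i}=\sum_j w_{ij}^S v_{S,j}$ reduces to $\lambda_1(W_S)\tau_c^{(1)}+1-\Psi(W_S,W_A)=0$, which rearranges to~(\ref{tawc1}).

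For the large-$\bar{\kappa}$ case, the change of variable $\epsilon=1/\bar{\kappa}$ recasts the fixed-point equation as
\[
z_i \;=\; \tau_c(1+\epsilon)\,\frac{\bigl(\sum_j w_{ij}^A z_j\bigr)\bigl(\sum_j w_{ij}^S z_j\bigr)}{\sum_j w_{ij}^S z_j + \epsilon \sum_j w_{ij}^A z_j},
\]
and rerunning the same Neumann-plus-Fredholm computation about $(1/\lambda_1(W_A),\boldsymbol{v}_A)$, with the roles of $W_S$ and $W_A$ exchanged throughout, produces $\lambda_1(W_A)\tau_c^{(1)}+1-\Psi(W_A,W_S)=0$, which is~(\ref{tawc2}). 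The main obstacle is conceptual rather than algebraic: once the analytic dependence of $\bigl(\tau_c(\bar{\kappa}),\boldsymbol{z}_*(\bar{\kappa})\bigr)$ on the perturbation parameter is cleanly established, the remainder of the argument collapses into two parallel Fredholm computations at the two distinct base points, so the care lies in justifying the expansion---specifically, in verifying that the Fr\'echet derivative of the normalized fixed-point map has one-dimensional kernel at each base point thanks to the simplicity of $\lambda_1(W_S)$ and $\lambda_1(W_A)$.
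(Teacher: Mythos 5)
Your proposal is correct and follows essentially the same route as the paper: both perform a first-order spectral perturbation of the nonlinear eigenvalue problem~(\ref{ThresholdEquation}) around the base points $(1/\lambda_1(W_S),\boldsymbol{v}_S)$ and $(1/\lambda_1(W_A),\boldsymbol{v}_A)$, project the linearized equation onto the dominant left eigenvector to eliminate $\boldsymbol{z}^{(1)}$ (the Fredholm solvability condition you invoke is exactly the paper's multiplication by $\boldsymbol{u}^T$), and obtain the large-$\bar{\kappa}$ formula by the substitution $\epsilon=1/\bar{\kappa}$, which symmetrizes the equation under exchange of $W_S$ and $W_A$. The only cosmetic difference is that the paper differentiates~(\ref{ThresholdEquation}) implicitly at $\bar{\kappa}=0$ and collects the result in a diagonal matrix $D_0$, while you posit a power series and package the same quantity in the vector $\boldsymbol{d}_S$; your remark about justifying analyticity via the simplicity of the Perron roots is a correct and slightly more careful framing of what the paper does without comment.
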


Fortunately, spectral perturbation of the nonlinear Perron-Frobenius problem (Eq. \ref{ThresholdEquation}) leads to analytically tractable formulas expressed in terms of spectral properties of individual layers $G_S$ and $G_A$, and their interrelation (as manifested by $\Psi$ terms in Eq. \ref{tawc1} and Eq. \ref{tawc2}). Using Eq. \ref{tawc1} for small values of $\bar{\kappa}$ and  Eq. \ref{tawc2} for large values of $\bar{\kappa}$, we can categorize several solution possibilities for the full range of $\bar{\kappa}$ values.

To reflect more realistic scenarios, we impose the constraint
$\lambda_{1}(W_{S})>\lambda_{1}(W_{A})$, that is, we assume that if all healthy individuals adopted their alert neighborhood simultaneously, they would collectively raise the epidemic threshold value, making their network more robust against epidemics than the default contact graph $G_{S}$.

The three scenarios for the dependency of the threshold value on contact adaptation rate---as shown in Fig. \ref{tc}---can be characterize as the following:
\begin{enumerate}
	\item \textbf{Monotone scenario (}the faster, the better\textbf{):} This is the simplest case where the value of the epidemic threshold increases monotonically with $\bar{\kappa}$, as simulated in Section \ref{Sec: Numerical Sim} and shown in Fig. \ref{tc} by the black curve. The monotone behavior happens if $\Psi(W_{S},W_{A})>1$ and $\Psi(W_{A},W_{S})<1$. Such monotonically increasing curve occurs, for instance, in contact-avoidance cases\footnote{This is because having $w_{ij}^A\leq w_{ij}^S$ yields  $\sum\nolimits_{j=1}^{N}w_{ij}^{S}v_{Sj}>\sum\nolimits_{j=1}^{N} w_{ij}^{A}v_{Sj}$ and $\sum\nolimits_{j=1}^{N}w_{ij}^{S}v_{Aj}>\sum
	\nolimits_{j=1}^{N}w_{ij}^{A}v_{Aj}$, which according to Eq. (\ref{PHIAB}), leads to $\Psi(W_{S},W_{A})>1$ and $\Psi(W_{A},W_{S})<1$.} where $w_{ij}^A\leq w_{ij}^S$. In other words, if individuals only reduce contact with their neighbors upon becoming alert, the higher the rate they do so, the better; because the epidemic threshold increases with the alerting rate in this scenario.
	\item \textbf{Overshooting scenario (}moderate even better than fast\textbf{):}  It is possible that an optimal alerting rate $\bar{\kappa}$ exists for which the adaptive network is most robust with respect to spreading infection. In other words, having a moderate contact adaptation rate is even better than than the case where the alerting rate is so large that alerting processes is almost instantaneous. The blue curve in Fig. \ref{tc} corresponds to this case. This scenario happens if $\Psi(W_{A},W_{S})>1$.
	\item \textbf{Undershooting scenario (}adaptation goes wrong if slow\textbf{):} An interesting and important scenario is when $\Psi(W_{S},W_{A})<1$. In this case, the value of the epidemic threshold initially decreases as the value of $\bar{\kappa}$ increases. If the switching rate is not fast enough, the alerting process can unintendedly worsen the infection spreading compared to keeping the default contacts! The red curve in Fig. \ref{tc} depicts such scenario.
\end{enumerate}

The following lemma shows that asymmetry of contacts is critical for observing the latter scenario. 

\begin{lemma}
\label{Theorem: Adverse}If $W_{S}$ and $W_{A}$ are both symmetric, $\Psi
(W_{S},W_{A})$ is lower-bounded as%
\begin{equation}
\Psi(W_{S},W_{A})\geq\frac{\lambda_{1}(W_{S})}{\lambda_{1}(W_{A})}.
\label{psiLB}
\end{equation}

\end{lemma}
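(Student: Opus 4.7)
My plan is to exploit the symmetry of $W_S$ to simplify $\Psi(W_S,W_A)$ into a sum involving only the Perron eigenvector of $W_S$, and then bound that sum via a Cauchy--Schwarz step paired with the Rayleigh quotient for the symmetric matrix $W_A$.

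First, because $W_S$ is symmetric and irreducible (by M--connectivity of $\mathcal{G}$), its left and right dominant eigenvectors coincide, so I may take $\boldsymbol{u}=\boldsymbol{v}$ with the normalization $\boldsymbol{v}^T\boldsymbol{v}=1$ forced by the condition $\boldsymbol{v}^T\boldsymbol{u}=1$ in the definition of $\Psi$. Moreover $W_S\boldsymbol{v}=\lambda_1(W_S)\boldsymbol{v}$, so $\sum_j (W_S)_{ij}v_j=\lambda_1(W_S)v_i$. Substituting into the definition of $\Psi$ yields
\begin{equation*}
\Psi(W_S,W_A)=\lambda_1(W_S)\sum_{i=1}^N \frac{v_i^3}{[W_A\boldsymbol{v}]_i},
\end{equation*}
where $[W_A\boldsymbol{v}]_i\triangleq \sum_j (W_A)_{ij}v_j>0$ by strict positivity of $\boldsymbol{v}$ and irreducibility of $W_A$ (both guaranteed because $\mathcal{G}$ is M--connected, which forces each layer to be strongly connected).

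Next, the key inequality. Applying the Cauchy--Schwarz inequality to the split $v_i^2 = \bigl(v_i^{3/2}/\sqrt{[W_A\boldsymbol{v}]_i}\bigr)\bigl(v_i^{1/2}\sqrt{[W_A\boldsymbol{v}]_i}\bigr)$ gives
\begin{equation*}
1=\Bigl(\sum_i v_i^2\Bigr)^2\leq\Bigl(\sum_i\frac{v_i^3}{[W_A\boldsymbol{v}]_i}\Bigr)\Bigl(\sum_i v_i[W_A\boldsymbol{v}]_i\Bigr)=\Bigl(\sum_i\frac{v_i^3}{[W_A\boldsymbol{v}]_i}\Bigr)\bigl(\boldsymbol{v}^T W_A\boldsymbol{v}\bigr).
\end{equation*}
Because $W_A$ is symmetric, the Rayleigh quotient bound gives $\boldsymbol{v}^T W_A\boldsymbol{v}\leq\lambda_1(W_A)\|\boldsymbol{v}\|_2^2=\lambda_1(W_A)$. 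Combining these two inequalities yields $\sum_i v_i^3/[W_A\boldsymbol{v}]_i\geq 1/\lambda_1(W_A)$, and multiplying by $\lambda_1(W_S)$ delivers the desired bound $\Psi(W_S,W_A)\geq\lambda_1(W_S)/\lambda_1(W_A)$.

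The only delicate point is the choice of the Cauchy--Schwarz splitting, which must be arranged so that the harmonic-looking sum $\sum_i v_i^3/[W_A\boldsymbol{v}]_i$ is bounded below by the reciprocal of the Rayleigh quotient $\boldsymbol{v}^T W_A\boldsymbol{v}$; once the exponents $3/2$ and $1/2$ are chosen as above everything falls out. A worthwhile remark is that equality holds precisely when $v_i^3/[W_A\boldsymbol{v}]_i$ is proportional to $v_i[W_A\boldsymbol{v}]_i$ and $\boldsymbol{v}$ is simultaneously a Perron eigenvector of $W_A$, i.e., when the two layers share the same Perron eigenvector up to a scalar.
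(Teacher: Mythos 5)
Your proof is correct and follows essentially the same route as the paper's: use symmetry of $W_S$ to set $\boldsymbol{u}=\boldsymbol{v}$ and reduce $\Psi(W_S,W_A)$ to $\lambda_1(W_S)\sum_i v_i^3/[W_A\boldsymbol{v}]_i$, apply Cauchy--Schwarz to pair this sum against $\boldsymbol{v}^T W_A\boldsymbol{v}$, and close with the Rayleigh-quotient bound $\boldsymbol{v}^T W_A\boldsymbol{v}\leq\lambda_1(W_A)$. The only cosmetic difference is that the paper multiplies and divides by $\boldsymbol{v}^T W_A\boldsymbol{v}$ before invoking Cauchy--Schwarz, whereas you state the inequality directly; the underlying argument is identical.
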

Given $\lambda_{1}(W_{S})>\lambda_{1}(W_{A})$ (the alert layer is more robust than the susceptible layer), the right hand side of Eq. (\ref{psiLB}) will be always greater than 1. Hence, for undirected network layers, it is impossible for the critical threshold of the adaptive contact network to go below the critical threshold of the default contacts layer, $G_{S}$. We can conclude that asymmetry of contacts is in part responsible for this unexpected behavior.

\section{Numerical Experiments\label{Sec: Numerical Sim}}

In this section, we perform a numerical study to evaluate our findings. 
For $E_{S}$ edges, we consider the well-known ``Football'' network from \cite{Girvan2002} with $N=115$ nodes and $|E_S|=615$ edges, and spectral radius $\lambda_{1}(W_{S})=10.8$. 
Given $G_S$, we synthesize three adapted contact layers $G_{A1}$, $G_{A2}$, 
and $G_{A3}$ as described bellow, and compute their corresponding threshold values as a function of the relative alerting rate as shown in Fig. \ref{tc}.
\begin{itemize}
\item The spectral radii of $G_{Ai}$ graphs are all equal to $\frac{2}{3}$ of
the spectral radius of $G_{S}$, i.e. $\lambda_{1}(W_{Ai}) =  \frac{2}{3}
\lambda_{1}(W_{S})$. In this way, we ensure that the adapted contacts layers are more robust to epidemic 
spreading compared to the default contacts layer. This can be verified in Fig. \ref{tc} where $\tau_c(\infty)=\frac{3}{2}\tau_c(0)$. Note that $\tau_c(0)$ is the threshold value when $\bar\kappa=0$, i.e., no adaptation occurs, and $\tau_c(\infty)$ corresponds to $\bar\kappa=\infty$ where the contact adaptation occurs instantaneously.

\item For $G_{A1}$, $\Psi(W_{S},W_{A1}) < 1$. From Eq. (\ref{tawc1}), we can predict
 that for small values of $\bar{\kappa}$, the epidemic threshold decreases 
 below $\tau_c(0)$, the threshold if no contact adaptation was in place at all . Therefore, we expect an 
 undershoot in $\tau_{c}(\bar{\kappa})$ as a function of $\bar{\kappa}$. This is the configuration where contact adaptation
 can ``go wrong"; despite the fact that the alert contact network is more 
 robust, switching to it can adversely aid in the spread of infection. The red curve in Fig. \ref{tc} corresponds to this scenario.

\item For $G_{A2}$, $\Psi(W_{A2},W_{S})>1$. From Eq. (\ref{tawc2}), we can predict it is possible to get $\tau_c(\bar{\kappa})>\tau_c(\infty)$, an thus there is a value for $\bar{\kappa}$ for which $\tau_{c}(\bar{\kappa})$ 
is maximum. This is in contrast to $G_{A1}$ in that the epidemic threshold 
for the multilayer network is greater than its constituent layers. In this 
configuration, the characteristics are such that an enhanced robustness is 
created synergistically. The blue curve in Fig. \ref{tc} corresponds to this scenario.

\item Graph $G_{A3}$ is made by decreasing the link weights from $G_{S}$, representing a social-avoidance scenario. As discussed in Section \ref{sec: approx}, we expect to see a monotonic 
increase in the epidemic threshold as the contact adaptation rate increases. The black curve in Fig. \ref{tc} corresponds to this scenario.

\end{itemize}

In order to synthesize $G_{A1}$ and $G_{A2}$, we performed a greedy search to obtain desired values of $\Psi$ functions. For each alert contact graph, $G_{Ai}$, and subsequent multilayer network representation, $\mathcal{G}_i=(V,E_{S},E_{Ai})$, we examine spreading behavior at three effective infection rates $\tau_1=0.9\tau_c(0)$, $\tau_2=1.3\tau_c(0)$, and $\tau_3=1.7\tau_c(0)$, as seen in Fig. \ref{tc} (dotted lines). In our numerical simulations, we have set $\delta=1$, which without loss of generality, chooses the unit of time equal to the expected curing period.  Steady-state solutions to the mean-field AC-SAIS Eqns. \ref{pi_dot} \& \ref{qi_dot} are calculated for $10^{-2} \leq \bar{\kappa} \leq 10^2$ and fraction of population infected $\bar{p}=\frac{1}{N}\sum_{i=1}^{N}p_i$---as the indicator of severity of the spreading---is plotted as a function of the alerting rate in Figures \ref{ga1kappa}--\ref{ga3kappa}.

\begin{figure}[ht]
\includegraphics[width=3in]{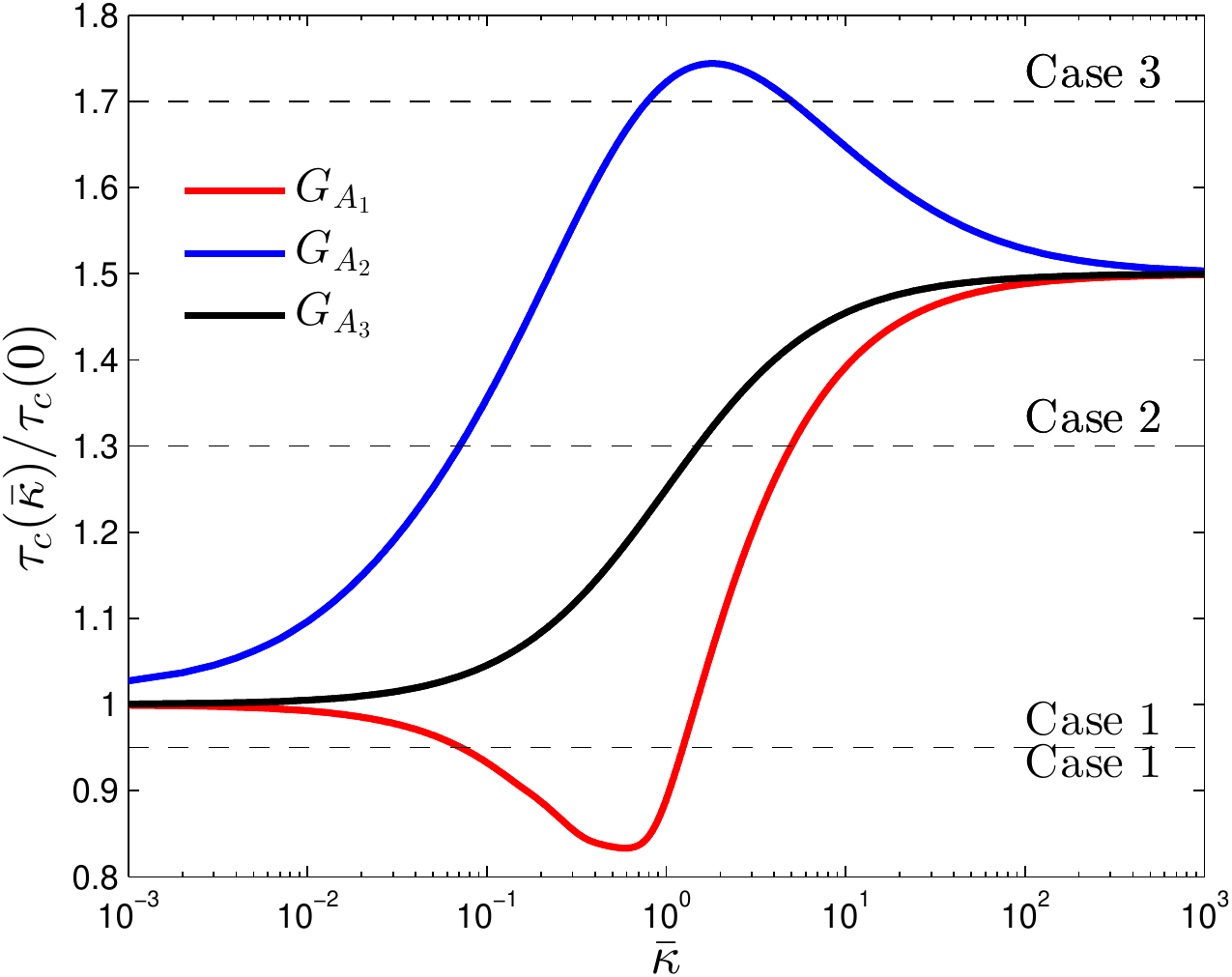}
\caption{Normalized epidemic threshold $\tau_{c}(\bar{\kappa})/\tau_c(0)$ as a function of relative alerting rate $\bar{\kappa}=\frac{\kappa}{\beta}$, showing three dependency scenarios. All three alert layers have the same spectral radius with respect to $G_{S}$ i.e. $\lambda_{1}(W_{S})/\lambda_{1}(W_{Ai})=1.5$. Therefore, in all of them the threshold value $\tau_c(\bar{\kappa})$ starts from $\tau_c(0)=1/\lambda_1(W_S)$ and converges to $\tau_c(\infty)=1.5\tau_c(0)$. Graph $G_{A1}$ is synthesized such that $\Psi(W_S,W_{A1})<1$. From the red curve we can observe that $\tau_{c}(\bar{\kappa})$ decreases for small $\bar{\kappa}$ values after which it increases. Graph $G_{A2}$ is synthesized such that $\Psi(W_{A2},W_S)>1$. In this case the blue curve $\tau_{c}(\bar{\kappa})$ is maximal around $\bar{\kappa}\approx2$. The topology of graph $G_{A3}$ is $G_{S}$ with reduced weights and is represented by the black epidemic threshold curve which increases monotonically by $\bar{\kappa}$.}
\label{tc}
\end{figure}

\subsection{Adaptation gone wrong\label{sec: wrong}}
For the multilayer network with $G_{A1}$ as the adaptive contact layer, we expect to observe increased epidemic sizes --- due to a decreased threshold (red curve of Fig. \ref{tc})--- for a range of low alerting rates.

In \textbf{Case 1}, the infection rate is chosen so that $\tau<\tau_c(0)<\tau_c(\infty)$. In the top plot of Fig. \ref{ga1kappa}, we can see that for most $\bar{\kappa}$ values there is no outbreak, as one would expect since the effective infection rate is below the either extreme values. However, for $0.1 \leq \bar{\kappa} \leq 1.2$ an epidemic is sustained due entirely to inter-layer dynamics creating conditions where an epidemic is more effectively carried throughout the population. In the context of persons altering who they come into contact with, although in an effort to avoid becoming infected, may in fact unintentionally contribute to the opposite outcome.

For \textbf{Case 2}, with $\tau_c(0)<\tau<\tau_c(\infty)$, we observe two regimes of behavior as depicted in the middle plot of Fig. \ref{ga1kappa}: for lower alerting rates, where the effective infection rate is above the epidemic threshold $\tau_c(\bar{\kappa})$, an infection is sustained. For higher alerting rates the reverse is true since the critical threshold goes above $\tau$.

In \textbf{Case 3}, effective infection rate is set above the critical threshold (red curve of Fig. \ref{tc}) for all values of $\bar{\kappa}$, i.e., $\tau_c(\bar{\kappa})<\tau$. Therefore, persistent infections are observed regardless of contact adaptation rate in bottom plot of Fig. \ref{ga1kappa}.

\begin{figure}[ptb]
	\includegraphics[scale=.44,trim={15mm 0 0mm 0mm},clip]{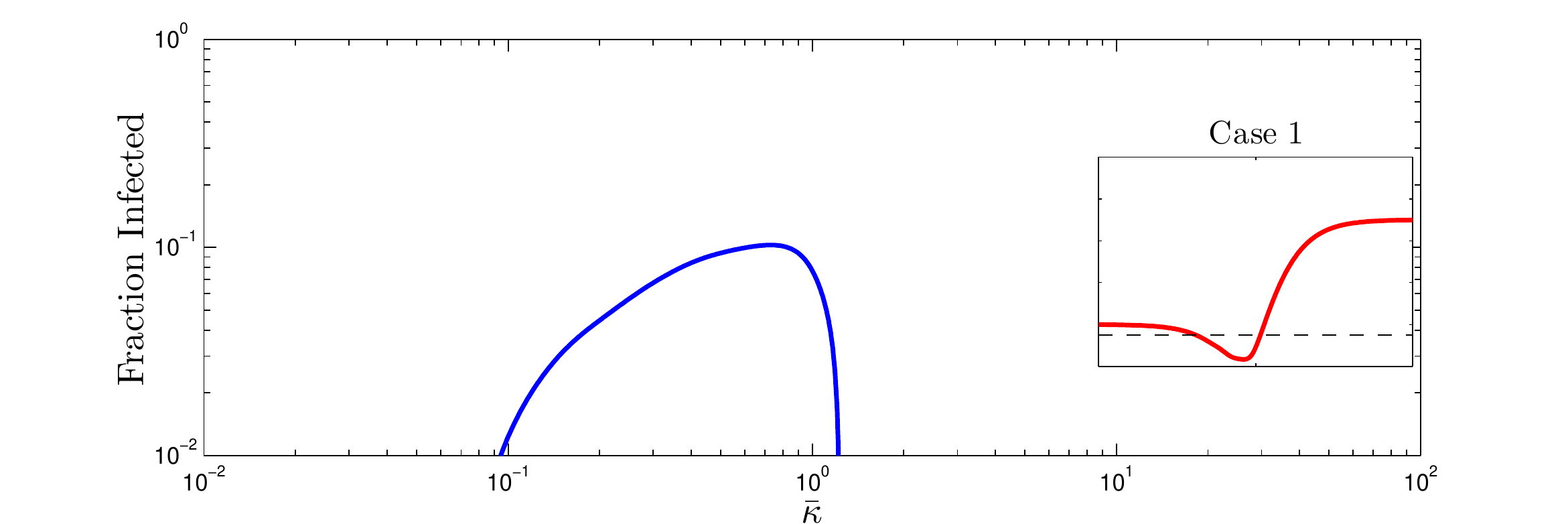} 
	\includegraphics[scale=.44,trim={15mm 0 0mm 0mm},clip]{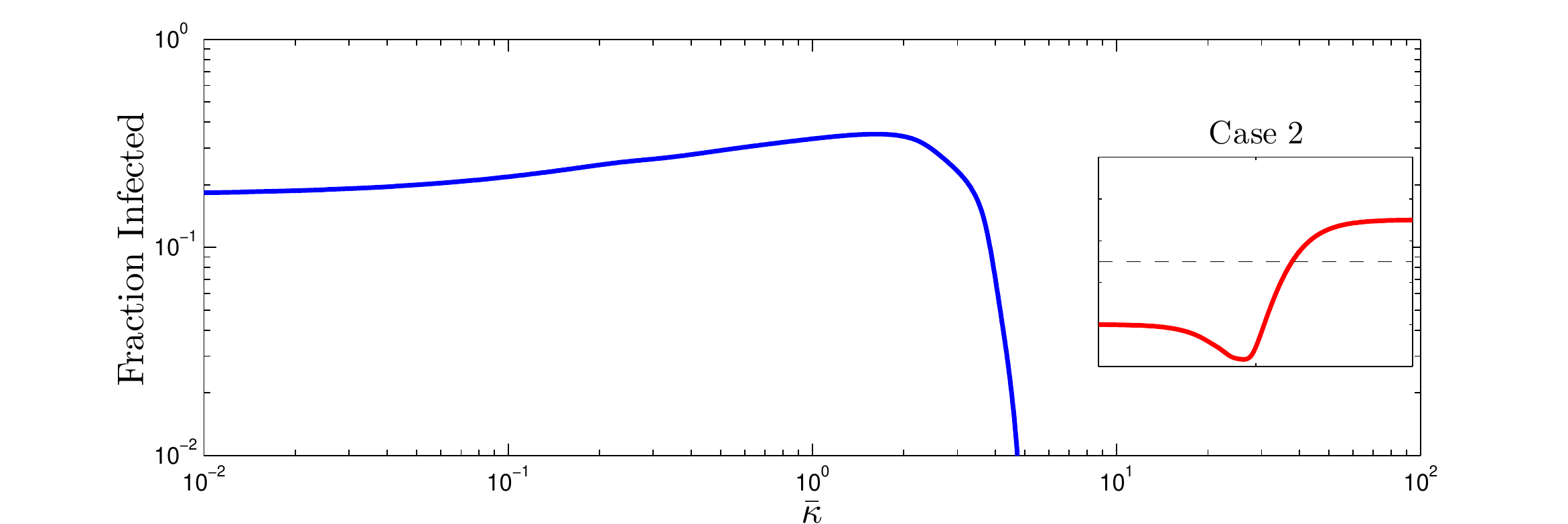} 
	\includegraphics[scale=.44,trim={15mm 0 0mm 0mm},clip]{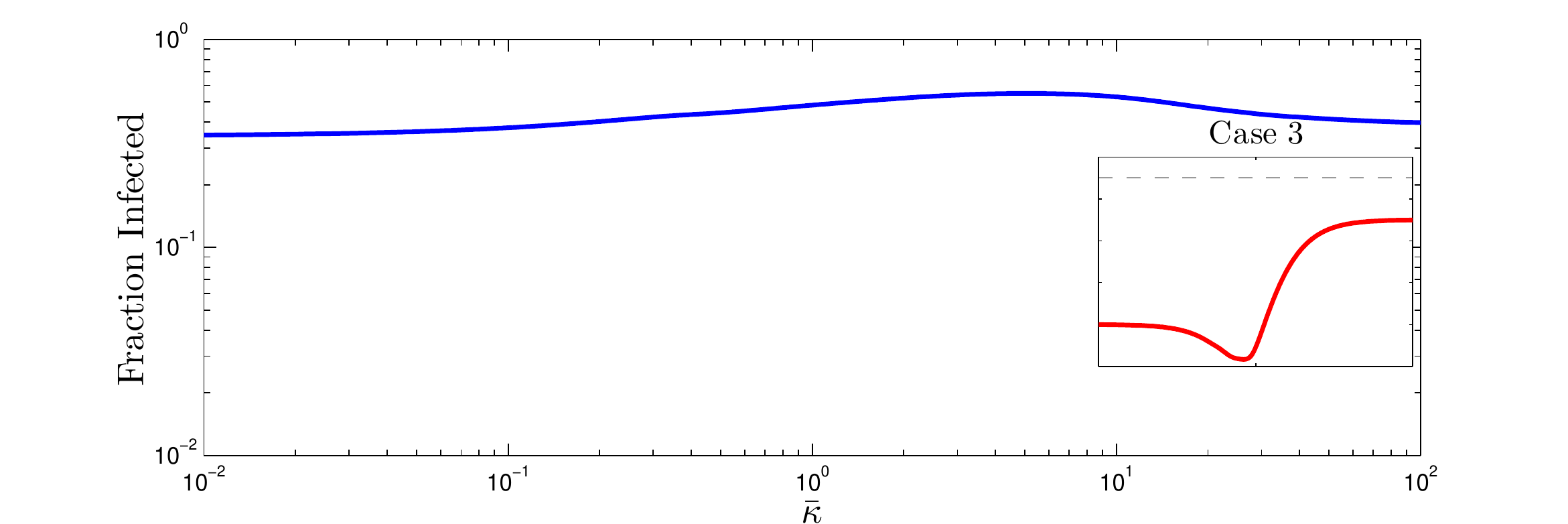} 
	\caption{The effect of alerting rate on infection size for the
		undershooting scenario, for which the epidemic threshold dependence on $\bar\kappa$ is depicted by the red curve in Fig. \ref{tc}. \newline
		\textbf{Case 1} (top) Despite setting the effective infection rate below that of the extreme cases, i.e., $\tau<\tau_c(0)<\tau_c(\infty)$, an epidemic outbreak is still observed for small alerting rates because $\tau$ is larger than the minimum of $\tau_c(\bar{\kappa})$.\newline
		\textbf{case 2} (middle) Effective infection rate lies in between the two extreme values, i.e., $\tau_c(0)<\tau<\tau_c(\infty)$. There is a slight increase in infected individuals after which the infection size drops to $0$ due to the increase in the critical threshold.\newline
		\textbf{Case 3} (bottom) Persistent infections are observed regardless of contact adaptation rate because $\tau_c(\bar{\kappa})<\tau$ for all $\bar{\kappa}$.}
	\label{ga1kappa}
\end{figure}

\subsection{Enhanced robustness\label{sec: enhance}}
We perform the same computations on when the adapted contact layer is $G_{A_2}$. \textbf{Case 1} yields trivially zero infection size. For \textbf{case 2}, shown in the top plot of Fig. \ref{ga2kappa}, we observe that increasing alerting rate beyond a certain value successfully suppresses the infection spreading. \textbf{Case 3}, shown in the bottom plot of Fig. \ref{ga2kappa}, provides an interesting observation in that the critical threshold raises even larger than the alert contacts layer, indicating that a moderate rate of contact adaptation is indeed better than fast rates in enhancing the robustness of the network. Therefore, for $0.8<\bar{\kappa}<5$, the critical threshold increases such that no infection is sustained. While for larger values an outbreak occurs, and the infection size increases as contact adaptation rate increases.

\begin{figure}[htb]
\includegraphics[scale=.44,trim={15mm 0 0mm 0mm},clip]{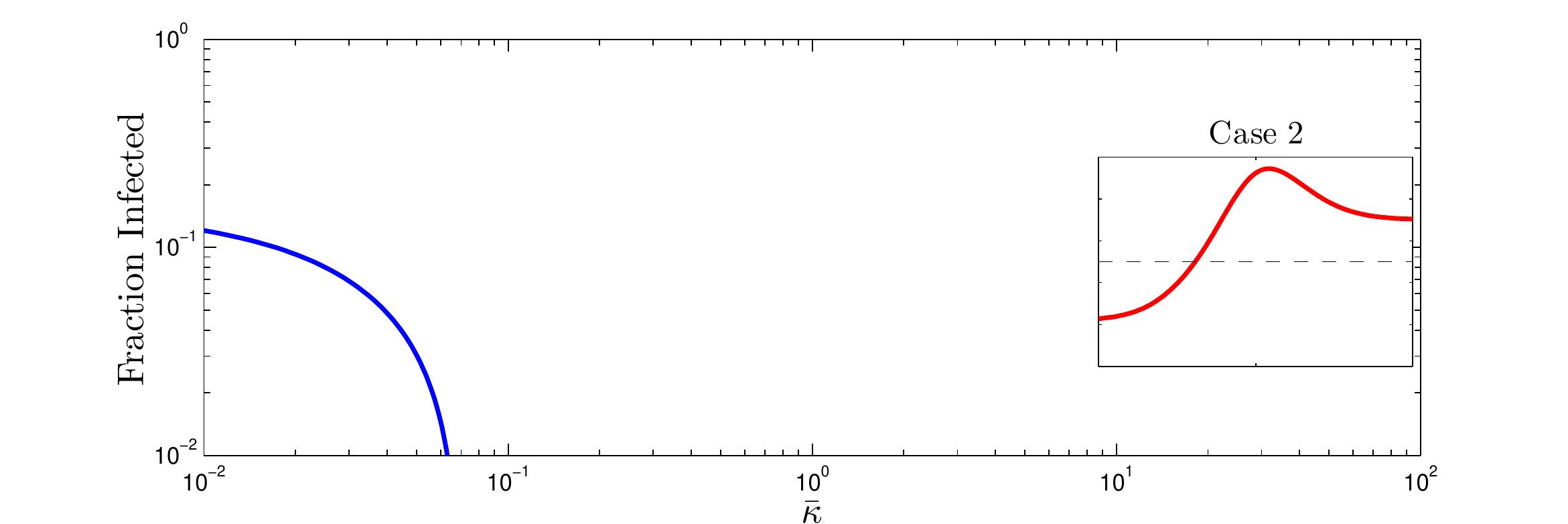} 
\includegraphics[scale=.44,trim={15mm 0 0mm 0mm},clip]{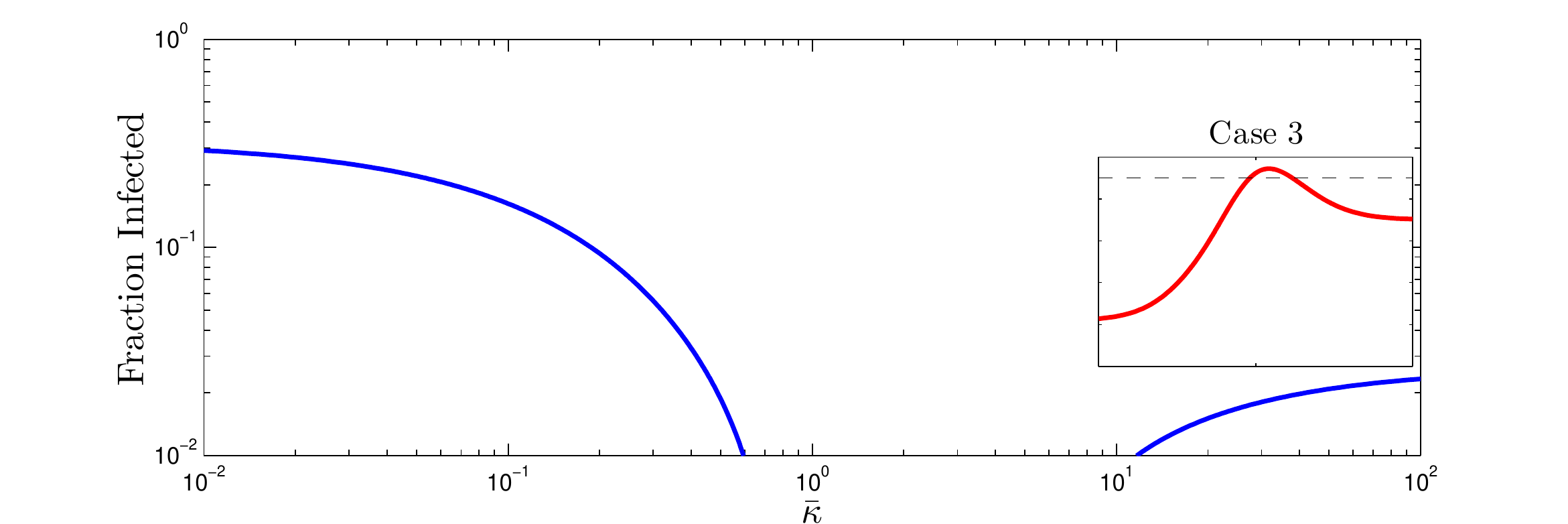} 
 \caption{The effect of alerting rate on infection size for the overshooting scenario, for which the epidemic threshold dependence on $\bar\kappa$ is depicted by the blue curve in Fig. \ref{tc}. \newline
 \textbf{Case 1} This case is omitted since the infection size would be $0$ regardless of the alerting rate\newline
 \textbf{Case 2} (top) The behavior is similar to case 2 with $G_{A1}$ (middle graph in Fig. \ref{ga1kappa}) though the transition to zero infection size occurs at a smaller alerting rate.\newline
 \textbf{Case 3} (bottom) This is a scenario when the effective infection rate is larger than the extreme values $(\tau_c(0)<\tau_c(\infty)<\tau)$, yet it is less than the maximum of the threshold curve $\tau_c(\bar{\kappa})$ as seen by the blue curve in Fig. \ref{tc}. A non-zero infection size is observed for small alerting rates, eventually $\tau_c(\bar{\kappa})$ raises above $\tau$ so that an epidemic cannot be sustained. As the threshold converges towards $\tau_c(\infty)$, an epidemic can once again persist, and the infection size even increases by the contact adaptation rate.}
\label{ga2kappa}
\end{figure}

\subsection{Monotonic Dependency\label{sec: mono}}
In the third scenario, the adapted contact layer is constructed by lowering the edge weights of $G_S$. This would correspond to a social distancing scenario, where individuals limit or abandon their contacts when they become alert. As can be seen by the black curve in Fig. \ref{tc}, the threshold value increases monotonically by the alerting rate. Fig. \ref{ga3kappa} depicts the second case where $\tau_c(0)<\tau<\tau_c(\infty)$. As expected, there is a certain value of $\bar\kappa*$ so that the epidemic infection is controlled for alerting rates $\bar\kappa\geq\bar\kappa*$. Case 1 and 3 are omitted for trivial behavior.

\begin{figure}[h]
\includegraphics[scale=.44]{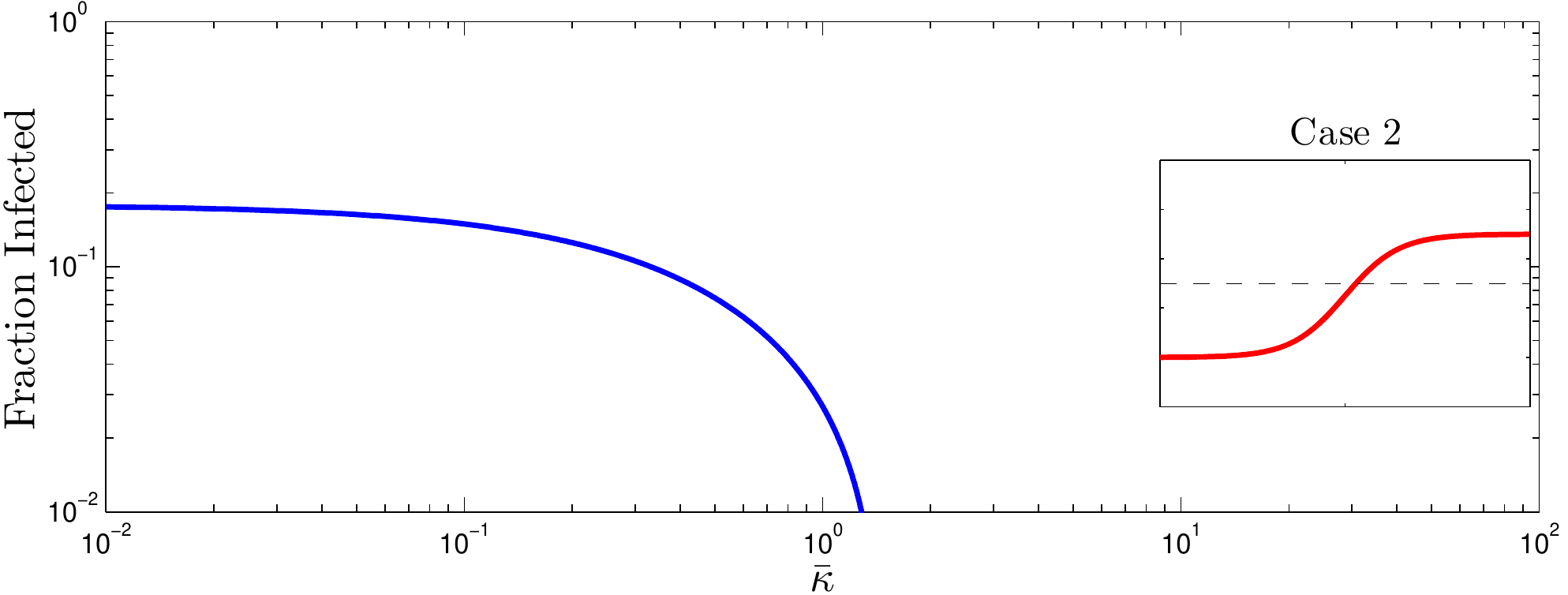} 
 \caption{The effect of alerting rate on infection size for
 the monotone scenario, for which the epidemic threshold dependence on $\bar\kappa$ is depicted by the black curve in Fig. \ref{tc}. \newline
 \textbf{Case 2} Similar to Sections \ref{sec: wrong} and \ref{sec: enhance}, case 2 shows a transition between low and high alerting rates where epidemic outbreaks occur for the former and not the latter.\newline
Cases 1 and 3 are omitted for trivial behavior.}
\label{ga3kappa}
\end{figure}

\section{Discussions and Conclusion}

The state-dependent switching (adaptive) contact network in the AC-SAIS model leads to rich dynamics for the epidemic spreading process and behavior not yet identified in literature (to the best of the authors' knowledge). Intuitively, when nodes can ``switch'' to a neighborhood constituting a more robust network, the expected effect on the overall robustness of the network would be to increase monotonically with the alerting rate. As shown in Section \ref{sec: enhance} and \ref{sec: wrong}, this is not always the case. Indeed, we observed non-monotone dependency of the epidemic threshold in most of our experiment trials. We show how the adaptive switching topology of the contact network is different from fixed static graphs and can lead to regimes of extreme or unexpected behavior. In particular, it is possible that adaptive behavior towards a supposedly more resilient network can in fact worsen the severity of an outbreak, or enable the possibility where it did not exist before. On the other hand, it is possible to configure network layers such that the multilayer network is more robust than either individual layer.


It is noteworthy to mention that some results in the literature point to the observation that contact adaptation do not always aid suppressing the infection. For example, Meloni et al. \cite{Ves2011SciRep} considered a self-initiated behavior where individuals change their mobility patterns. When travelers decide to avoid locations with high levels of infection and travel through locations with low levels of infections, this behavioral change may facilitate disease spreading because individuals effectively act as vectors of the disease transmission. It is very important to highlight the difference of the underlying mechanism between these formerly reported results and the ``adaptation-gone-wrong'' behavior in this paper. In our model, individuals who adapt their contacts (alerts) do not act as vectors for propagating the infection because they do not carry infection. This comes purely as a result of the adaptive behavior, signifying the importance of further analysis of state-dependent networks.

Finally, we would like to highlight several aspects of this study that go beyond the specific epidemic model considered in this paper. We developed a necessary and sufficient condition for existence and uniqueness of a positive eigenvector for homogeneous, concave maps. Furthermore, our utilization of multilayer networks to formulate dynamics on state-dependent switching networks is novel and can facilitate analysis of many networked dynamical systems. In our analysis, we come up with concepts that are genuine and novel to multilayer networks. Specifically, our analysis leads to 1) critical phenomena characterized by a nonlinear Perron Frobenius equation, and 2) the definition of M--connectivity. Our proposed concept of M--connectivity can easily scale to more than two layers. Furthermore, the joint descriptor $\Psi$  in Eq. \ref{PHIAB} emphasizes the importance of joint descriptors when characterizing dynamics over multilayer networks. While network science has flourished in understanding intra-layer network topologies, intrinsic descriptors of inter-layer connectivity of multilayer networks are yet to be investigated.

\section*{Acknowledgment}
\addcontentsline{toc}{section}{Acknowledgment}
We are grateful to anonymous reviewers for their constructive feedback to improve this manuscript. We would like to thank Victor Preciado for inspiring conversations regarding use of the SAIS model in a contact adaptation context. Also, we would like to thank Heman Shakeri and Rad Niazadeh for their helpful discussions on the nonlinear Perron-Frobenius theory.

\bibliography{acsais} 

\begin{thebibliography}{10}
\providecommand{\url}[1]{#1}
\csname url@samestyle\endcsname
\providecommand{\newblock}{\relax}
\providecommand{\bibinfo}[2]{#2}
\providecommand{\BIBentrySTDinterwordspacing}{\spaceskip=0pt\relax}
\providecommand{\BIBentryALTinterwordstretchfactor}{4}
\providecommand{\BIBentryALTinterwordspacing}{\spaceskip=\fontdimen2\font plus
\BIBentryALTinterwordstretchfactor\fontdimen3\font minus
  \fontdimen4\font\relax}
\providecommand{\BIBforeignlanguage}[2]{{%
\expandafter\ifx\csname l@#1\endcsname\relax
\typeout{** WARNING: IEEEtran.bst: No hyphenation pattern has been}%
\typeout{** loaded for the language `#1'. Using the pattern for}%
\typeout{** the default language instead.}%
\else
\language=\csname l@#1\endcsname
\fi
#2}}
\providecommand{\BIBdecl}{\relax}
\BIBdecl

\bibitem{hethcote2000mathematics}
H.~W. Hethcote, ``The mathematics of infectious diseases,'' \emph{SIAM review},
  vol.~42, no.~4, pp. 599--653, 2000.

\bibitem{keeling2008modeling}
M.~J. Keeling and P.~Rohani, \emph{Modeling infectious diseases in humans and
  animals}.\hskip 1em plus 0.5em minus 0.4em\relax Princeton University Press,
  2008.

\bibitem{anderson1992infectious}
R.~M. Anderson, R.~M. May, and B.~Anderson, \emph{Infectious diseases of
  humans: dynamics and control}.\hskip 1em plus 0.5em minus 0.4em\relax Wiley
  Online Library, 1992, vol.~28.

\bibitem{knight2016bridging}
G.~M. Knight, N.~J. Dharan, G.~J. Fox, N.~Stennis, A.~Zwerling, R.~Khurana, and
  D.~W. Dowdy, ``Bridging the gap between evidence and policy for infectious
  diseases: How models can aid public health decision-making,''
  \emph{International journal of infectious diseases}, vol.~42, pp. 17--23,
  2016.

\bibitem{pastor2015epidemic}
R.~Pastor-Satorras, C.~Castellano, P.~Van~Mieghem, and A.~Vespignani,
  ``Epidemic processes in complex networks,'' \emph{Reviews of modern physics},
  vol.~87, no.~3, p. 925, 2015.

\bibitem{moran2016epidemic}
K.~R. Moran, G.~Fairchild, N.~Generous, K.~Hickmann, D.~Osthus, R.~Priedhorsky,
  J.~Hyman, and S.~Y. Del~Valle, ``Epidemic forecasting is messier than weather
  forecasting: The role of human behavior and internet data streams in epidemic
  forecast,'' \emph{Journal of Infectious Diseases}, vol. 214, no. suppl 4, pp.
  S404--S408, 2016.

\bibitem{bish2010demographic}
A.~Bish and S.~Michie, ``Demographic and attitudinal determinants of protective
  behaviours during a pandemic: a review,'' \emph{British journal of health
  psychology}, vol.~15, no.~4, pp. 797--824, 2010.

\bibitem{Book2013HumanBehavior}
P.~Manfredi and A.~d'Onofrio, \emph{Modeling the interplay between human
  behavior and the spread of infectious diseases}.\hskip 1em plus 0.5em minus
  0.4em\relax Springer, 2013.

\bibitem{funk2010JRSI}
S.~Funk, M.~Salath{\'e}, and V.~A. Jansen, ``Modelling the influence of human
  behaviour on the spread of infectious diseases: a review,'' \emph{Journal of
  the Royal Society Interface}, vol.~7, no.~50, pp. 1247--1256, 2010.

\bibitem{verelst2016behavioural}
F.~Verelst, L.~Willem, and P.~Beutels, ``Behavioural change models for
  infectious disease transmission: a systematic review (2010--2015),''
  \emph{Journal of The Royal Society Interface}, vol.~13, no. 125, p. 20160820,
  2016.

\bibitem{wang2015coupled}
Z.~Wang, M.~A. Andrews, Z.-X. Wu, L.~Wang, and C.~T. Bauch, ``Coupled
  disease--behavior dynamics on complex networks: A review,'' \emph{Physics of
  life reviews}, vol.~15, pp. 1--29, 2015.

\bibitem{chen2006JMB}
F.~Chen, ``A susceptible-infected epidemic model with voluntary vaccinations,''
  \emph{Journal of mathematical biology}, vol.~53, no.~2, pp. 253--272, 2006.

\bibitem{funk2010JTB}
S.~Funk, E.~Gilad, and V.~Jansen, ``Endemic disease, awareness, and local
  behavioural response,'' \emph{Journal of Theoretical Biology}, vol. 264,
  no.~2, pp. 501--509, 2010.

\bibitem{funk2009NAS}
S.~Funk, E.~Gilad, C.~Watkins, and V.~Jansen, ``The spread of awareness and its
  impact on epidemic outbreaks,'' \emph{Proceedings of the National Academy of
  Sciences}, vol. 106, no.~16, pp. 6872--6877, 2009.

\bibitem{Nicola2011PLOS}
N.~Perra, D.~Balcan, B.~Gonasalves, and A.~Vespignani, ``Towards a
  characterization of behavior-disease models,'' \emph{PLoS ONE}, vol.~6,
  no.~8, p. e23084, 08 2011.

\bibitem{poletti2009JTB2}
P.~Poletti, B.~Caprile, M.~Ajelli, A.~Pugliese, and S.~Merler, ``Spontaneous
  behavioural changes in response to epidemics,'' \emph{Journal of Theoretical
  Biology}, vol. 260, no.~1, pp. 31--40, 2009.

\bibitem{FaryadCDC11SAIS}
F.~Sahneh and C.~Scoglio, ``Epidemic spread in human networks,'' in
  \emph{Proceedings of IEEE Conference on Decision and Control}, 2011.

\bibitem{reluga2010game}
T.~C. Reluga, ``Game theory of social distancing in response to an epidemic,''
  \emph{PLoS computational biology}, vol.~6, no.~5, p. e1000793, 2010.

\bibitem{Mina2011JTB}
M.~Youssef and C.~Scoglio, ``An individual-based approach to {SIR} epidemics in
  contact networks,'' \emph{Journal of Theoretical Biology}, vol. 283, no.~1,
  pp. 136--144, 2011.

\bibitem{gross2008JRSI}
T.~Gross and B.~Blasius, ``Adaptive coevolutionary networks: a review,''
  \emph{Journal of The Royal Society Interface}, vol.~5, no.~20, pp. 259--271,
  Mar. 2008.

\bibitem{gross2006PRL}
T.~Gross, C.~J.~D. D'Lima, and B.~Blasius, ``Epidemic dynamics on an adaptive
  network,'' \emph{Physical review letters}, vol.~96, no.~20, p. 208701, 2006.

\bibitem{marceau2010PRE}
V.~Marceau, P.~No{\"e}l, L.~H{\'e}bert-Dufresne, A.~Allard, and L.~Dub{\'e},
  ``Adaptive networks: Coevolution of disease and topology,'' \emph{Physical
  Review E}, vol.~82, no.~3, p. 036116, 2010.

\bibitem{risau2009JTB}
S.~Risau-Gusm{\'a}n and D.~Zanette, ``Contact switching as a control strategy
  for epidemic outbreaks,'' \emph{Journal of theoretical biology}, vol. 257,
  no.~1, pp. 52--60, 2009.

\bibitem{demirel2012X}
G.~Demirel and T.~Gross, ``Absence of epidemic thresholds in a growing adaptive
  network,'' \emph{arXiv preprint arXiv:1209.2541}, 2012.

\bibitem{van_segbroeck2010PLoS}
S.~Van~Segbroeck, F.~C. Santos, and J.~M. Pacheco, ``Adaptive contact networks
  change effective disease infectiousness and dynamics,'' \emph{{PLoS} Comput
  Biol}, vol.~6, no.~8, p. e1000895, Aug. 2010.

\bibitem{Ves2011SciRep}
S.~Meloni, N.~Perra, A.~Arenas, S.~Gomez, Y.~Moreno, and A.~Vespignani,
  ``Modeling human mobility responses to the large-scale spreading of
  infectious diseases,'' \emph{Scientific Reports}, vol.~1, no.~62, 2011.

\bibitem{fenichel2011adaptive}
E.~P. Fenichel, C.~Castillo-Chavez, M.~G. Ceddia, G.~Chowell, P.~A.~G. Parra,
  G.~J. Hickling, G.~Holloway, R.~Horan, B.~Morin, C.~Perrings \emph{et~al.},
  ``Adaptive human behavior in epidemiological models,'' \emph{Proceedings of
  the National Academy of Sciences}, vol. 108, no.~15, pp. 6306--6311, 2011.

\bibitem{liu2015endemic}
M.~Liu, E.~Liz, and G.~Röst, ``Endemic bubbles generated by delayed
  behavioral response: global stability and bifurcation switches in an sis
  model,'' \emph{SIAM Journal on Applied Mathematics}, vol.~75, no.~1, pp.
  75--91, 2015.

\bibitem{brauer2011simple}
F.~Brauer, ``A simple model for behaviour change in epidemics,'' \emph{BMC
  public health}, vol.~11, no.~1, p.~S3, 2011.

\bibitem{li2015bifurcation}
J.~Li, Y.~Zhao, and H.~Zhu, ``Bifurcation of an sis model with nonlinear
  contact rate,'' \emph{Journal of Mathematical Analysis and Applications},
  vol. 432, no.~2, pp. 1119--1138, 2015.

\bibitem{morin2014disease}
B.~R. Morin, C.~Perrings, S.~Levin, and A.~Kinzig, ``Disease risk mitigation:
  the equivalence of two selective mixing strategies on aggregate contact
  patterns and resulting epidemic spread,'' \emph{Journal of theoretical
  biology}, vol. 363, pp. 262--270, 2014.

\bibitem{xiao2012sliding}
Y.~Xiao, X.~Xu, and S.~Tang, ``Sliding mode control of outbreaks of emerging
  infectious diseases,'' \emph{Bulletin of mathematical biology}, vol.~74,
  no.~10, pp. 2403--2422, 2012.

\bibitem{paarporn2015epidemic}
K.~Paarporn, C.~Eksin, J.~S. Weitz, and J.~S. Shamma, ``Epidemic spread over
  networks with agent awareness and social distancing,'' in
  \emph{Communication, Control, and Computing (Allerton), 2015 53rd Annual
  Allerton Conference on}.\hskip 1em plus 0.5em minus 0.4em\relax IEEE, 2015,
  pp. 51--57.

\bibitem{sahneh2012SR}
F.~D. Sahneh, F.~N. Chowdhury, and C.~M. Scoglio, ``On the existence of a
  threshold for preventive behavioral responses to suppress epidemic
  spreading,'' \emph{Scientific reports}, vol.~2, 2012.

\bibitem{misra2011effect}
A.~Misra, A.~Sharma, and V.~Singh, ``Effect of awareness programs in
  controlling the prevalence of an epidemic with time delay,'' \emph{Journal of
  Biological Systems}, vol.~19, no.~02, pp. 389--402, 2011.

\bibitem{samanta2013effect}
S.~Samanta, S.~Rana, A.~Sharma, A.~Misra, and J.~Chattopadhyay, ``Effect of
  awareness programs by media on the epidemic outbreaks: a mathematical
  model,'' \emph{Applied Mathematics and Computation}, vol. 219, no.~12, pp.
  6965--6977, 2013.

\bibitem{misra2015stability}
A.~Misra, A.~Sharma, and J.~Shukla, ``Stability analysis and optimal control of
  an epidemic model with awareness programs by media,'' \emph{Biosystems}, vol.
  138, pp. 53--62, 2015.

\bibitem{wang2015interaction}
Q.~Wang, L.~Zhao, R.~Huang, Y.~Yang, and J.~Wu, ``Interaction of media and
  disease dynamics and its impact on emerging infection management,''
  \emph{Discrete and Continuous Dynamical Systems-Series B}, vol.~20, no.~1,
  pp. 215--230, 2015.

\bibitem{van2009TN}
P.~Van~Mieghem, J.~Omic, and R.~Kooij, ``Virus spread in networks,''
  \emph{IEEE/ACM Transactions on Networking}, vol.~17, no.~1, pp. 1--14, 2009.

\bibitem{SahnehCDC12}
F.~Sahneh and C.~Scoglio, ``Optimal information dissemination in epidemic
  networks,'' in \emph{Decision and Control (CDC), 2012 IEEE 51st Annual
  Conference on}, Dec 2012, pp. 1657--1662.

\bibitem{bonacich2007some}
P.~Bonacich, ``Some unique properties of eigenvector centrality,'' \emph{Social
  networks}, vol.~29, no.~4, pp. 555--564, 2007.

\bibitem{Preciado2013SAIS}
V.~Preciado, F.~Sahneh, and C.~Scoglio, ``A convex framework for optimal
  investment on disease awareness in social networks,'' in \emph{Global
  Conference on Signal and Information Processing (GlobalSIP), 2013 IEEE}, Dec
  2013, pp. 851--854.

\bibitem{shakeri2015optimal}
H.~Shakeri, F.~D. Sahneh, C.~Scoglio, P.~Poggi-Corradini, and V.~M. Preciado,
  ``Optimal information dissemination strategy to promote preventive behaviours
  in multilayer epidemic networks,'' \emph{Math. Biosc. Eng}, vol.~12, no.~3,
  pp. 609--623, 2015.

\bibitem{juher2015analysis}
D.~Juher, I.~Z. Kiss, and J.~Salda{\~n}a, ``Analysis of an epidemic model with
  awareness decay on regular random networks,'' \emph{Journal of theoretical
  biology}, vol. 365, pp. 457--468, 2015.

\bibitem{maharaj2012controlling}
S.~Maharaj and A.~Kleczkowski, ``Controlling epidemic spread by social
  distancing: Do it well or not at all,'' \emph{BMC Public Health}, vol.~12,
  no.~1, p. 679, 2012.

\bibitem{Kivela2014}
M.~Kivela, M., A.~Arenas, M.~Barthelemy, J.~P. Gleeson, Y.~Moreno, and M.~A.
  Porter, ``Multilayer networks,'' \emph{Journal of complex networks}, vol.~2,
  no.~3, pp. 203--271, 2014.

\bibitem{valdez2012intermittent}
L.~Valdez, P.~A. Macri, and L.~A. Braunstein, ``Intermittent social distancing
  strategy for epidemic control,'' \emph{Physical Review E}, vol.~85, no.~3, p.
  036108, 2012.

\bibitem{VanMieghem2013PRE}
D.~Guo, S.~Trajanovski, R.~van~de Bovenkamp, H.~Wang, and P.~Van~Mieghem,
  ``Epidemic threshold and topological structure of
  susceptible-infectious-susceptible epidemics in adaptive networks,''
  \emph{Phys. Rev. E}, vol.~88, p. 042802, Oct 2013.

\bibitem{juher2013outbreak}
D.~Juher, J.~Ripoll, and J.~Salda{\~n}a, ``Outbreak analysis of an sis epidemic
  model with rewiring,'' \emph{Journal of mathematical biology}, pp. 1--22,
  2013.

\bibitem{dong2015can}
C.~Dong, Q.~Yin, W.~Liu, Z.~Yan, and T.~Shi, ``Can rewiring strategy control
  the epidemic spreading?'' \emph{Physica A: Statistical Mechanics and its
  Applications}, vol. 438, pp. 169--177, 2015.

\bibitem{Ogura2016}
M.~Ogura and V.~M. Preciado, ``Epidemic processes over adaptive state-dependent
  networks,'' \emph{Physical Review E}, vol.~93, no.~6, p. 062316, 2016.

\bibitem{ogura2015disease}
------, ``Disease spread over randomly switched large-scale networks,'' in
  \emph{2015 American Control Conference (ACC)}.\hskip 1em plus 0.5em minus
  0.4em\relax IEEE, 2015, pp. 1782--1787.

\bibitem{ogura2016stability}
------, ``Stability of spreading processes over time-varying large-scale
  networks,'' \emph{IEEE Transactions on Network Science and Engineering},
  vol.~3, no.~1, pp. 44--57, 2016.

\bibitem{van2010graph}
P.~Van~Mieghem, \emph{Graph spectra for complex networks}.\hskip 1em plus 0.5em
  minus 0.4em\relax Cambridge University Press, 2010.

\bibitem{krasnoselskij1964positive}
M.~A. Krasnoselskij, ``Positive solutions of operator equations,'' 1964.

\bibitem{per1}
S.~Gaubert and J.~Gunawardena, ``The perron-frobenius theorem for homogeneous,
  monotone functions,'' \emph{Transactions of the American Mathematical
  Society}, vol. 356, no.~12, pp. 4931--4950, 2004.

\bibitem{per2}
B.~Lemmens and R.~Nussbaum, \emph{Nonlinear Perron-Frobenius Theory}.\hskip 1em
  plus 0.5em minus 0.4em\relax Cambridge University Press, 2012, vol. 189.

\bibitem{krause2001concave}
U.~Krause, ``Concave perron--frobenius theory and applications,''
  \emph{Nonlinear Analysis: Theory, Methods \& Applications}, vol.~47, no.~3,
  pp. 1457--1466, 2001.

\bibitem{nussbaum1999generalizations}
R.~D. Nussbaum and S.~M.~V. Lunel, \emph{Generalizations of the
  Perron-Frobenius theorem for nonlinear maps}.\hskip 1em plus 0.5em minus
  0.4em\relax American Mathematical Soc., 1999, vol. 659.

\bibitem{wood2004always}
R.~Wood and M.~O'Neill, ``An always convergent method for finding the spectral
  radius of an irreducible non-negative matrix,'' \emph{ANZIAM Journal},
  vol.~45, pp. 474--485, 2004.

\bibitem{boccaletti2014structure}
S.~Boccaletti, G.~Bianconi, R.~Criado, C.~I. Del~Genio, J.~G{\'o}mez-Gardenes,
  M.~Romance, I.~Sendina-Nadal, Z.~Wang, and M.~Zanin, ``The structure and
  dynamics of multilayer networks,'' \emph{Physics Reports}, vol. 544, no.~1,
  pp. 1--122, 2014.

\bibitem{de2013mathematical}
M.~De~Domenico, A.~Sol{\'e}-Ribalta, E.~Cozzo, M.~Kivel{\"a}, Y.~Moreno, M.~A.
  Porter, S.~G{\'o}mez, and A.~Arenas, ``Mathematical formulation of multilayer
  networks,'' \emph{Physical Review X}, vol.~3, no.~4, p. 041022, 2013.

\bibitem{ganesh2005effect}
A.~Ganesh, L.~Massouli{\'e}, and D.~Towsley, ``The effect of network topology
  on the spread of epidemics,'' in \emph{INFOCOM 2005. 24th Annual Joint
  Conference of the IEEE Computer and Communications Societies. Proceedings
  IEEE}, vol.~2.\hskip 1em plus 0.5em minus 0.4em\relax IEEE, 2005, pp.
  1455--1466.

\bibitem{lloyd2001viruses}
A.~L. Lloyd and R.~M. May, ``How viruses spread among computers and people,''
  \emph{Science}, vol. 292, no. 5520, pp. 1316--1317, 2001.

\bibitem{neal2014endemic}
P.~Neal \emph{et~al.}, ``Endemic behaviour of sis epidemics with general
  infectious period distributions,'' \emph{Advances in Applied Probability},
  vol.~46, no.~1, pp. 241--255, 2014.

\bibitem{van2013non}
P.~Van~Mieghem and R.~Van~de Bovenkamp, ``Non-markovian infection spread
  dramatically alters the susceptible-infected-susceptible epidemic threshold
  in networks,'' \emph{Physical review letters}, vol. 110, no.~10, p. 108701,
  2013.

\bibitem{cator2013susceptible}
E.~Cator, R.~Van~de Bovenkamp, and P.~Van~Mieghem,
  ``Susceptible-infected-susceptible epidemics on networks with general
  infection and cure times,'' \emph{Physical Review E}, vol.~87, no.~6, p.
  062816, 2013.

\bibitem{Sahneh2013TON}
F.~D. Sahneh, C.~Scoglio, and P.~Van~Mieghem, ``Generalized epidemic mean-field
  model for spreading processes over multilayer complex networks,''
  \emph{IEEE/ACM Transactions on Networking}, vol.~21, no.~5, pp. 1609--1620,
  2013.

\bibitem{gleeson2012accuracy}
J.~P. Gleeson, S.~Melnik, J.~A. Ward, M.~A. Porter, and P.~J. Mucha, ``Accuracy
  of mean-field theory for dynamics on real-world networks,'' \emph{Physical
  Review E}, vol.~85, no.~2, p. 026106, 2012.

\bibitem{taylor2012markovian}
M.~Taylor, P.~L. Simon, D.~M. Green, T.~House, and I.~Z. Kiss, ``From markovian
  to pairwise epidemic models and the performance of moment closure
  approximations,'' \emph{Journal of mathematical biology}, vol.~64, no.~6, pp.
  1021--1042, 2012.

\bibitem{Pastor-Satorras2015}
R.~Pastor-Satorras, C.~Castellano, P.~Van~Mieghem, and A.~Vespignani,
  ``Epidemic processes in complex networks,'' \emph{Reviews of modern physics},
  vol.~87, no.~3, p. 925, 2015.

\bibitem{khanafer2014stability}
A.~Khanafer, T.~Basar, and B.~Gharesifard, ``Stability properties of infected
  networks with low curing rates,'' in \emph{American Control Conference (ACC),
  2014}.\hskip 1em plus 0.5em minus 0.4em\relax IEEE, 2014, pp. 3579--3584.

\bibitem{khanafer2014stability2}
A.~Khanafer, T.~Ba{\c{s}}ar, and B.~Gharesifard, ``Stability properties of
  infection diffusion dynamics over directed networks,'' in \emph{Decision and
  Control (CDC), 2014 IEEE 53rd Annual Conference on}.\hskip 1em plus 0.5em
  minus 0.4em\relax IEEE, 2014, pp. 6215--6220.

\bibitem{bonaccorsi2015epidemic}
S.~Bonaccorsi, S.~Ottaviano, D.~Mugnolo, and F.~D. Pellegrini, ``Epidemic
  outbreaks in networks with equitable or almost-equitable partitions,''
  \emph{SIAM Journal on Applied Mathematics}, vol.~75, no.~6, pp. 2421--2443,
  2015.

\bibitem{van2009performance}
P.~Van~Mieghem, \emph{Performance analysis of communications networks and
  systems}.\hskip 1em plus 0.5em minus 0.4em\relax Cambridge University Press,
  2009.

\bibitem{rudin1964principles}
W.~Rudin, \emph{Principles of mathematical analysis}.\hskip 1em plus 0.5em
  minus 0.4em\relax McGraw-Hill New York, 1964, vol.~3.

\bibitem{Girvan2002}
M.~Girvan and M.~E. Newman, ``Community structure in social and biological
  networks,'' \emph{Proceedings of the national academy of sciences}, vol.~99,
  no.~12, pp. 7821--7826, 2002.

\bibitem{slide}
S.~Sternberg, ``The perron-frobenius theorem.''
  \url{http://www.math.harvard.edu/library/sternberg/slides/1180912pf.pdf}.

\end{thebibliography}
\bibliographystyle{IEEEtran}

%







\newpage
\setcounter{page}{1}

\appendix
\renewcommand{\theequation}{A.\arabic{equation}}
\setcounter{equation}{0}

\subsection*{Proof of Lemma \ref{Lemma: Monotonicity}}

\begin{proof}
Since $F$ is concave in $\mathbb{R}_+^n$,
\[F(\frac{x+y}{2})\succeq \frac{1}{2}F(x)+\frac{1}{2}F(y).\]
Furthermore, since $F$ is homogeneous, $F(x+y)\succeq F(x)+F(y)$, i.e., $F$ is super-additive.

\noindent For $0\preceq x\preceq y$,
\[F(y)=F(x+(y-x))\succeq F(x)+F(y-x) \succeq F(x).\]
{The last inequality is due to having $F(y-x)\succeq 0$} for $y\succeq x$. Therefore, $F$ is monotone.
\end{proof}

\subsection*{Proof of Lemma \ref{Lemma: C1C2}}
\begin{proof}
Since $F$ is homogeneous and concave, $F(e_J)=F(\sum_{j\in J}e_j)\succeq \sum_{j\in J}F(e_j)$. Therefore, if $F$ satisfies condition {\bf C1}, there exists exists $j\in J$ and $i\notin J$ such that $F_i(e_j)>0$, so it is also true that $F_i(e_J)>0$. Hence, $F$ must also satisfy {\bf C2}.
\end{proof}

\subsection*{Proof of Lemma \ref{Lemma: allx}}

\begin{proof}
For $x\succ 0$, let $x_{min}>0$ be the smallest entry of $x$, and $u$ be the vector of ones. Then, $w\triangleq x-x_{min}u\succeq 0$. So, replacing for $x=w+x_{min}u$ in $F_i(x\circ e_J)$ yields
\[
F_i(x\circ e_J)=F_i(w\circ e_J+x_{min}e_J)\geq F_i(w\circ e_J)+x_{min}F_i(e_J)>0
\]

\end{proof}

\subsection*{Proof of Theorem \ref{Th: Primitive}}

\begin{proof}
Let $y_m\triangleq F^m_c(x)$, i.e., $y_0=x$, $y_1=F_c(x)$, and so on. With this definition, we can write a recursive formula for $y_m$ with $m>0$.
\[y_m=F_c(y_{m-1})=cy_{m-1}+F(y_{m-1}).\]

Now, let $J_m$ be the index set of positive elements of $y_m$, i.e., 
\[J_m\triangleq\{j|(y_m)_j>0\}.\]
Since $x\succnsim 0$, $J_0=\{j|x_j>0\}$ is non-empty. Also, $J_0\neq\{1,2,...,n\}$ unless $x\succ 0$ for which $F^m_c(x)\succ 0$ and no further investigation is required.

For any $\emptyset\neq J_m\subsetneq \{1,...,n\}$, define $I_m$ as
\[I_m\triangleq\{i|F_i(y_m)>0, i\notin J_m\}.\]
Since $F$ satisfies condition {\bf C2}, $I_m$ is not empty.

According to the recursive formula for $y_m$, the positive elements of $y_m$ either are also positive in $y_{m-1}$, or they correspond to the positive elements of $F(y_{m-1})$. Expressed in terms of $I$ and $J$ sets,
\[J_m=J_{m-1}\cup I_{m-1}.\]
Since by definition $J_{m-1}$ and $I_{m-1}$ do not intersect, and $I_{m-1}$ is not empty as long as $|J_{m-1}|\neq n$, the cardinality of $J_m$ strictly increases by $m$. Therefore, at some iteration step $M\leq n-|J_0|\leq n-1$, we get $|I_M|=n$. In other words, for some $M$, all the elements of $F^M_c(x)$ are positive, i.e., $F^M_c(x)\succ 0$.

Now for the necessity part, we show that if $F$ does not satisfy condition {\bf C2}, then $F_c$ is not primitive. Let $J$ be a non-empty set such that $F_i(e_J)=0$ for all $i\notin J$. According to Lemma \ref{Lemma: allx}, $F_i(x\circ e_J)=0$ for any $x\succ 0$. As a result, $F_c(x\circ e_J)=x'\circ e_J$ for some $x'\succ 0$. Repeating this process yields $F^2_c(x\circ e_J)=F_c(x'\circ e_J)=x''\circ e_J$, and so on. Therefore, for any iterate $m>0$, the $i-$th entry of $F^m_c(x\circ e_J)$ is equal to zero.

\end{proof}

\subsection*{Proof of Theorem \ref{peron} (Nonlinear Perron--Frobenius)}\label{Sec: NPFproof}

To prove the existence, uniqueness, and strict positivity of the solution to nonlinear Perron--Frobenius problem $F(x)=\lambda x$, we use the following lemma.

Recall that if $F$ satisfies the condition {\bf C2}, there exists an $M$ such that $F^M_c(x)\succ 0$ for all $x\succnsim 0$. Let define $P(x)\triangleq F^M_c(x)$ for one such $M$.

\begin{lemma}\label{Lemma: P-Monotonicity}
For the homogeneous, concave map $F$ of the non-negative cone that satisfies the condition {\bf C2}, following statements are true for $P(x)$:
\begin{enumerate}
\item $P(x)$ is homogeneous, concave, and super-additive.
\item For all $x\succeq 0$,
    \begin{equation}\label{PF-FP}
	P(F(x))\preceq F(P(x)).
	\end{equation}
\item For $0\preceq x\precnsim y$,
    \begin{equation}\label{p}
	P(x)\prec P(y).
	\end{equation}  
\end{enumerate}
  
\end{lemma}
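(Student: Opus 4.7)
The plan is to derive the three claims in sequence, leveraging that $F$ is monotone and super-additive by Lemma \ref{Lemma: Monotonicity}, and that by Theorem \ref{Th: Primitive} the map $F_c$ is primitive, which is exactly what justifies the choice of $M$ so that $P(x) = F_c^M(x) \succ 0$ for every $x \succnsim 0$.

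For (1), I will first observe that $F_c(x) = cx + F(x)$ is homogeneous, concave, and monotone, since each property is preserved under summing the linear map $cx$ with $F$. Composition of homogeneous maps is homogeneous, and the composition of two concave maps is concave when the outer map is also monotone; iterating this observation shows $P = F_c^M$ is homogeneous, concave, and monotone. Super-additivity of $P$ is then immediate from Lemma \ref{Lemma: Monotonicity}.

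For (2), the key base identity exploits super-additivity and homogeneity of $F$:
\[F(F_c(x)) = F(cx + F(x)) \succeq F(cx) + F(F(x)) = cF(x) + F(F(x)) = F_c(F(x)).\]
I would then induct on $M$: given $F_c^{M-1}(F(x)) \preceq F(F_c^{M-1}(x))$, apply $F_c$ (which is monotone) to both sides and then apply the base inequality at the point $F_c^{M-1}(x)$ to chain
\[F_c^M(F(x)) \preceq F_c(F(F_c^{M-1}(x))) \preceq F(F_c^M(x)).\]

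For (3), given $0 \preceq x \precnsim y$, write $y = x + z$ with $z \triangleq y - x \succnsim 0$. Super-additivity of $P$ from part (1) gives $P(y) \succeq P(x) + P(z)$, and by the defining property of $M$ one has $P(z) \succ 0$; componentwise this yields $P(y)_i > P(x)_i$ for every $i$, i.e., $P(y) \succ P(x)$. The only real subtlety I anticipate is the bookkeeping in part (2), where each step of the induction preserves the inequality only because $F_c$ (and hence each iterate) is monotone, a fact secured in part (1); parts (1) and (3) are otherwise routine applications of material already in the paper.
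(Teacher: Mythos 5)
Your proof is correct, and parts (1) and (3) follow the paper's argument closely. Part (2) is where you take a somewhat different route: you first establish the single-step inequality $F(F_c(x)) \succeq F_c(F(x))$ from super-additivity and homogeneity of $F$ alone, then induct on $M$, using monotonicity of $F_c$ to propagate the inequality through each iterate. The paper instead avoids induction entirely by rewriting $F(F_c^M(x)) = F_c^{M+1}(x) - cF_c^M(x) = F_c^M(F(x)+cx) - cF_c^M(x)$, using the fact that $F_c$ commutes with its own iterate, and then applying super-additivity of $F_c^M$ (which it has from part (1)) in one shot. Your version trades that commutativity trick for a more elementary step-by-step propagation; the paper's is more compact but relies on already having super-additivity of the $M$-th iterate in hand. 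Both are sound, and your bookkeeping of exactly which monotonicity is needed at each stage is careful and correct.
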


\begin{proof}
\begin{enumerate}
\item Since, $F$ is homogeneous and concave, so is $F_c(x)=cx+F(x)$ with $c>0$. Same is true for $F_c^m(x)$. Moreover, $F_c^m(x)$ is super-additive according to Lemma \ref{Lemma: Monotonicity}.
\item 
{\footnotesize
\footnotesize
\begin{multline*}
F(P(x))=F(F^M_c(x))=F_c(F^M_c(x))-cF^M_c(x)\\
=F^M_c(F_c(x))-cF^M_c(x)=F^M_c(F(x)+cx)-cF^M_c(x)\\
\succeq F^M_c(F(x))+cF^M_c(x)-cF^M_c(x)=F^M_c(F(x))=P(F(x)).
\end{multline*}
}
\item For $0\preceq x\precnsim y$,
\[P(y)=P(x+(y-x))\succeq P(x)+P(y-x) \succ P(x),\]
because $P(y-x)\succ 0$ for $(y-x)\succnsim 0$ when $F$ satisfies condition {\bf C2}.
\end{enumerate}

\end{proof}
   
For proving convergence part of Theorem \ref{peron}, we use the following lemma and a proposition.

\begin{proposition}[Theorem 9 in \cite{krause2001concave}]
	If $F:\mathbb{R}_{+}^{n}\rightarrow\mathbb{R}_{+}^{n}$ is a homogeneous, primitive, and concave map of nonnegative cone:
	\[\lim\limits_{k\rightarrow\infty} \frac{F^k(x)}{||F^k(x)||}=x^*,\text{ for all }x\succnsim 0.\]
    where $x^*\succ 0$ is the eigenvector of $F$. \label{Proposition: Convergence}
\end{proposition}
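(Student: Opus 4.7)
The plan is to use Hilbert's projective metric on the interior of the nonnegative cone, on which homogeneous monotone maps contract non-strictly and on which primitive concave maps contract strictly after finitely many iterations. I begin with a reduction: since $F$ is primitive, some $M$ satisfies $F^M(x) \succ 0$ for every $x \succnsim 0$, so it suffices to prove convergence starting from a strictly positive initial condition. Because $F_c(x) = cx + F(x) \succeq F(x)$ pointwise, primitivity of $F$ implies primitivity of $F_c$, hence by Theorem \ref{Th: Primitive} condition \textbf{C2} holds, and Theorem \ref{peron} supplies a unique (up to scale) strictly positive eigenvector $x^* \succ 0$ with $F(x^*) = \lambda^* x^*$.

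For $u, v \succ 0$ introduce Hilbert's projective metric $d(u, v) \triangleq \log\bigl(\max_i (u_i/v_i)/\min_i (u_i/v_i)\bigr)$. Setting $m = \min_i u_i/v_i$ and $M' = \max_i u_i/v_i$, the sandwich $m v \preceq u \preceq M' v$ combined with homogeneity and monotonicity (Lemma \ref{Lemma: Monotonicity}) gives $m F(v) \preceq F(u) \preceq M' F(v)$, and therefore $d(F(u), F(v)) \leq d(u, v)$. Scale invariance of $d$ together with $F^k(x^*) = (\lambda^*)^k x^*$ then yields $d(F^k(x), x^*) = d(F^k(x), F^k(x^*)) \leq d(x, x^*)$, so the normalized iterates $\bar{F}^k(x)$ stay in a compact Hilbert-ball around $x^*$ in the positive cone.

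The crux is to upgrade non-expansion to a uniform strict contraction for $F^M$ on that compact ball: I aim to show there exists $\rho < 1$ with $d(F^M(u), F^M(v)) \leq \rho\, d(u, v)$ whenever $u, v$ lie in the ball and are not proportional. Writing $u = m v + w$ with $w \succnsim 0$, super-additivity gives $F^M(u) \succeq m F^M(v) + F^M(w)$, and primitivity ensures $F^M(w) \succ 0$; a continuity/compactness argument then bounds the smallest entry of $F^M(w)$ uniformly from below in terms of $d(u,v)$, strictly improving the lower sandwich, and the symmetric argument improves the upper sandwich, producing a uniform $\rho < 1$. Once this contraction is in hand, $d(\bar{F}^{kM}(x), x^*) \to 0$, which implies $\bar{F}^{kM}(x) \to x^*$ in the usual norm, and non-expansion extends the convergence to each residue class modulo $M$, yielding $\bar{F}^k(x) \to x^*$. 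The hard part is this contraction claim --- extracting a uniform rate rather than just strict inequality in each individual comparison --- which is where primitivity, concavity, and the compactness of the Hilbert-ball must all cooperate.
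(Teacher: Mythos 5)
The paper does not actually prove this statement: it is imported verbatim as Theorem 9 of Krause \cite{krause2001concave} and used as a black box in the convergence part of Theorem \ref{peron}, so there is no in-paper proof to compare against; what follows assesses your reconstruction on its own terms. Your framework (reduction to strictly positive initial points, Hilbert projective metric, non-expansiveness from homogeneity plus the monotonicity of Lemma \ref{Lemma: Monotonicity}, and the invariant compact Hilbert ball around $x^*$) is the standard route in the nonlinear Perron--Frobenius literature and is essentially sound, with two housekeeping points you should make explicit: (i) you need $F$ to map the interior of the cone into itself so that $d(F(u),F(v))$ is finite --- this does follow, since $F_i(u)=0$ for some $u\succ0$ would, by homogeneity and monotonicity, force $F_i\equiv 0$ on the cone and contradict primitivity --- and you also need continuity of $F$ on the interior (automatic from concavity) for any limiting argument; (ii) when you invoke Theorem \ref{peron} you may only use its existence/uniqueness parts, because its convergence claim (\ref{NPF_iteration}) is itself deduced from the present proposition; as written you respect this, but the circularity hazard deserves a sentence.

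The genuine gap is exactly the step you flag: the uniform contraction factor $\rho<1$ for $F^M$ on the ball is asserted, not proved, and the compactness argument you sketch cannot deliver it. The set of non-proportional pairs $(u,v)$ in the ball is not compact: as $u$ approaches a multiple of $v$, the vector $w=u-mv$ tends to $0$, hence $F^M(w)\to 0$ (continuity and $F^M(0)=0$), so the additive improvement to the sandwich degenerates precisely in the regime $d(u,v)\to 0$; to extract a uniform ratio you would need the improvement to be bounded below by a constant multiple of $d(u,v)$, which neither concavity nor primitivity obviously supplies and which your argument does not attempt. Fortunately, a uniform rate is not needed. Your sandwich argument with $v=x^*$ already gives the strict inequality $d(F^M(u),x^*)<d(u,x^*)$ for every $u\succ0$ not projectively equal to $x^*$. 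Combine this with the facts that $d_k=d(F^{kM}(x),x^*)$ is non-increasing (non-expansiveness plus scale invariance) and that the normalized iterates live in a compact subset of the open cone: if $d_k\downarrow d_\infty>0$, extract a convergent subsequence with limit $y$, use continuity of $F^M$ to get $d(F^M(y),x^*)=d_\infty=d(y,x^*)>0$, contradicting strictness; hence $d_\infty=0$, non-expansiveness transfers this to the other residue classes mod $M$, and projective convergence plus normalization gives convergence in norm. Replacing your uniform-contraction claim with this limit-point argument closes the gap; as it stands, the proposal's central step is unproven.
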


Now, with these background tools we prove Theorem \ref{peron}.
	\begin{proof}
		
		\textbf{Existence:} Here, we follow some of ideas in \cite{slide}.
		Assume for all $x\in \mathbb{R}_{+}^{n}$
		\begin{equation}
		\rho(x)=\max\lbrace m\vert mx\preceq F(x)\rbrace.
        \label{Rhox}
		\end{equation}
		Because the set $S=\lbrace x\vert x\in \mathbb{R}_{+}^{n},~\Vert x\Vert=1 \rbrace$  is a compact set, $\rho(x)$ has a maximum value over $S$. Let $x_{0}\in S$ be a vector that maximizes $\rho(x)$. 
        
       Let us assume that $\rho(x_{0})x_{0} \neq F(x_{0})$. If so, then $\rho(x_{0})x_{0}\precnsim F(x_{0})$, and thus, Eq. (\ref{p}) of Lemma \ref{Lemma: P-Monotonicity} yields
		\[
		P(\rho(x_{0})x_{0}) \prec P(F(x_{0})). 
		\] 
		Since $P$ is homogeneous as the result of $F$ homogeneity, the left hand side of the the above inequality becomes $\rho(x_{0})P(x_{0})$. Furthermore, according to Eq. (\ref{PF-FP}) of Lemma \ref{Lemma: P-Monotonicity}, the right hand side preceeds $F(P(x_{0}))$. Therefore, we have
		\[
		\rho(x_{0})P(x_{0}) \prec F(P(x_{0})). 
		\] 
		Define $x_1\triangleq\frac{P(x_0)}{\Vert P(x_0)\Vert}\in S$. The above inequality becomes $\rho(x_{0})x_{1} \prec F(x_{1})$, which according to the definition of $\rho(x)$ in Eq. (\ref{Rhox}) suggests that $\rho(x_1)>\rho(x_{0})$. However, this is contradicting the fact that $\rho(x_{0})$ is a maximum value. Hence, by contradiction, we arrive at the conclusion that 
		\[
		\rho(x_{0})x_{0} =F(x_{0}), 
		\]
		showing the existence of an eigenvector in $\mathbb{R}_{+}^{n}$.
		
		\textbf{Strict positivity:} So far, we proved that a non-negative eigenvector exists for the nonlinear map $F$ with a positive eigenvalue. We can show that any non-negative eigenvector $y$ is indeed strictly positive. If $0\precnsim y$ is an eigenvector of $F$, then its corresponding eigenvalue must be non-negative ($\lambda\geq0$) because $F(y)=\lambda y$ is non-negative. Furthermore, for $0\precnsim y$, Eq. (\ref{p}) yields
		\[
		0\prec P(y)=F_c^M(y)=(\lambda+c)^My,
		\] 
Since $c>0$, the coefficient $\eta(\lambda)=(\lambda+c)^M$ is also positive. Thus, $y$ must be strictly positive.
		
		\textbf{Uniqueness}: Now we show all the eigenvectors of $F$ in nonnegative cone are a scalar multiplication of $x_{0}$. Assume $x_{1}\succ 0$ is an eigenvector of $F$, that is $F(x_{1})=\lambda x_{1}$. Moreover from the definition of $\rho(x_{1})$ we have 
		\[
		\lambda=\rho(x_{1})\leq \rho(x_{0}).
		\]
        Let $r=\max_i \frac{(x_0)_i}{(x_1)_i}$. Assume that $x_1\succ 0$ is not a multiple of $x_0\succ 0$. Therefore, the vector $z=rx_1-x_0\succnsim 0$. Eq. (\ref{p}) yields
        \[
		\begin{split}
		0\prec P(z)&=P(rx_1-x_0)\\
        &\preceq rP(x_{1})-P(x_{0})\\
		&=r\eta(\lambda)x_{1}-\eta(\rho(x_{0}))x_{0}\\
		&\preceq \eta(\rho(x_{0}))(rx_{1}-x_{0}).
		\end{split}
		\]
  Which concludes that $z\succ 0$. But this is not possible because at least one element of $z$ is zero.      
        Therefore, by contradiction $z=\max_i \frac{(x_0)_i}{(x_1)_i}x_1-x_0=0$, proving that all the eigenvectors of $F$ in the nonnegative cone are a scalar multiplication of $x_{0}$.

\textbf{Necessity of C2:} So far, we proved that {\bf C2} is a sufficient condition for existence, strict positivity, and uniqueness of an eigenvector for $F$. Here we prove that {\bf C2} is also a necessary condition.

The approach is to show that if $F$ does not satisfy {\bf C2}, then $F$ has another eigenvector $y^*\in \mathbb{R}_+^N$ which is not a scalar multiple of $x^*$.

Let first define for $J=\{j_1,j_2,...,j_m\}$, the projection $\pi_J: \mathbb{R}_+^N\rightarrow \mathbb{R}_+^{|J|}$ so that for any $x\in\mathbb{R}_+^N$, $(\pi_J(x))_k=x_{j_k}$ for $k=1,...,|J|$. Similarly, let define the inverse projection $\pi_J^{-1}: \mathbb{R}_+^{|J|}\rightarrow \mathbb{R}_+^N$ so that for any $z\in\mathbb{R}_+^{|J|}$, the vector $x=\pi_J^{-1}(z)$ has zeros on entries not belonging to $J$ (i.e., $x_j=0$ for all $j\notin J$), and for those entries belonging to $J$, $x_{j_k}=z_k$.

Assume $F$ does not satisfy {\bf C2}. Therefore, there exists $J\subset\{1,...,N\}$ so that $F_i(e_J)=0$ for all $i\notin J$. Define $\mathcal{J}$ as the set of all such $J$ sets. Among all the members of $\mathcal{J}$ pick a set $J_*$ so that no other $J\in\mathcal{J}$ is a subset of $J_*$. The key characteristic of the subset $J_*$ is that the function $H(z)\triangleq\pi_{J_*}( F ( \pi^{-1}_{J_*}(z)))$ satisfies condition {\bf C2} in  $\mathbb{R}_+^{|J_*|}$, otherwise a subset of $J_*$ would belong to $\mathcal{J}$ which is not possible by the definition of $J_*$.

Since $H$ satisfies the condition {\bf C2} in  $\mathbb{R}_+^{|J_*|}$, the nonlinear Perron-Frobenius problem $H(z)=\lambda z$ has a unique solution $z^*\succ 0$. This yields that $\pi^{-1}(z^*)\in \mathbb{R}_+^N$ must also be an eigenvector of $F$. Note that since $\pi^{-1}(z^*)$ has zeros on some of its entries it cannot be a scalar multiple of $x^*$. This violates uniqueness of eigenvectors of $F$ in $\mathbb{R}_+^N$. Therefore, through proof by contradiction we conclude that $F$ must satisfy {\bf C2}.

		\textbf{Convergence:} As proved in Theorem \ref{Th: Primitive}, $F_c(x)=F(x)+cx$ is primitive. Therefore, Proposition \ref{Proposition: Convergence} is applicable.
\end{proof}

\subsection*{Proof of Theorem \ref{Th: MC2}}

\begin{proof}
	For the proof, we leverage Theorem \ref{Th: Primitive} in that $F$ satisfying condition \textbf{C2} is equivalent to $F_c(x)=F(x)+cx$ being primitive.
    
	If we show that for any $j$ there exist an $M_{j}$ such that $F_c^{M_j}(e_j)\succ 0$ then we can choose $M$ as ${\displaystyle\max_{j}\{M_{j}\}}$ and from super-additivity of $F$ we can conclude that for any non-empty set $J$, $F_c^M(e_J)\succ 0$, i.e., $F_c$ is primitive.

We employ the definition of graphs $\boldsymbol{G}^k$ associated with the multilayer network $\mathcal{G}=(V,E_S,E_A)$ as explained in Section \ref{Sec: Multilayer}. According to Definition \ref{def: sc}, if $\mathcal{G}$ is M--connected, there exists $k_*$ such that $\boldsymbol{G}^{k_*}$ is a strongly connected graph.

Therefore, choosing $i$ from $G^{1}$ such that $(i,j)\in \mathcal{L}_1$ (for which $(i,j)\in E_A$ and  $(i,j)\in E_B$ necessitates $F_i(e_j)>0,$ according to the statement of the theorem) yields $y_1=F_c(e_j)$ has positive values on entries $i$ and $j$. Therefore, with finite iterations of $F_c$, say $m_1$ times, $y_{m_1}=F_c^{m_1}(e_j)$ will have positive values on all the nodes in the strongly connected component of $G^{1}$ that contains $j$. This strongly connected component of $G^{1}$ becomes a single hypernode, which we call $J\in \mathcal{P}_2$, for graph $\boldsymbol{G}^2$. According to the definition of $\boldsymbol{G}^k$, for graph $\boldsymbol{G}^2$, there is a directed link from component $J$ to component $I$, i.e., $(I,J)\in\mathcal{L}_2$, if and only if 
there exists $i\in I$ and $j_{1},j_{2}\in J$ such that $(i,j_{1})\in E_A$, and $(i,j_2)\in E_B$. Since $\forall j\in J,\, (y_{m_1})_j>0$, we get $F_i(y_{m_1})>0$ for the above choice of $i$, which further indicates all the elements of $I$ will have positive in $y_{m_2}=F_c^{m_2}(e_j)$ values after finite iterations. As a result, all the nodes belonging to the strongly connected component of $\boldsymbol{G}^2$ that contains $J$ will have positive entries. This procedure can be repeated for $\boldsymbol{G}^3,...,\boldsymbol{G}^{k_*}$. Since, $\boldsymbol{G}^{k_*}$ is strongly connected, all the nodes of the network must have positive values in $F_c^{M_j}(e_J)$.

	
	Now, assume $\mathcal{G}$ is not M--connected. Thus, the iteration of $k$ should not change $\boldsymbol{G}^k$ after some finite steps. This graph, which we refer to as $\boldsymbol{G}^\infty$, is a \textit{directed acyclic graph} (DAG) because it cannot have any strongly connected components. Let $J$ denote the source hypernode of $\boldsymbol{G}^\infty$ (which always exists for a finite DAG). Therefore, there is no $j_1,j_2\in J$ and $i \notin J$ for which $(i,j_1)\in E_A$ and $(i,j_2)\in E_B$. For this set $J$, the elements of nodes not belonging to $J$ will always have zero values in $F_c^{m}(e_J)$ for any $m$. Therefore, it is not possible to find an $M$ such that $F_c^{m}(e_J)\succ 0$.
\end{proof}

\subsection*{Proof of Theorem \ref{TheoremAprox} (perturbation formulas):}
\begin{proof}
For small values of $\bar{\kappa}$, we can find the solution to the threshold
equation (\ref{ThresholdEquation}), by perturbation at $\bar{\kappa}=0$.
Taking right derivative of threshold equation (\ref{ThresholdEquation} ) with
respect to $\bar{\kappa}$ at $\bar{\kappa}=0$ and $\boldsymbol{z}=\boldsymbol{v}$, yields%
\begin{align}
\frac{dz_{i}}{d\bar{\kappa}}  &  =\frac{d\tau_{c}}{d\bar{\kappa}}%
\sum\nolimits_{j=1}^{N}w_{ij}^{S}v_{j}\nonumber\\
&  +\frac{1}{\lambda_{1}(W_S)}\sum\nolimits_{j=1}^{N}w_{ij}^{S}\frac{dz_{j}%
}{d\bar{\kappa}}\nonumber\\
&  +\frac{1}{\lambda_{1}(W_S)}\{1-\frac{\sum w_{ij}^{S}v_{j}}{\sum w_{ij}%
^{A}v_{j}}\}\sum\nolimits_{j=1}^{N}w_{ij}^{S}v_{j}\nonumber\\
&  =\lambda_{1}(W_S)\frac{d\tau_{c}}{d\bar{\kappa}}-\frac{\lambda_{1}%
(W_S)v_{i}^{2}}{\sum w_{ij}^{A}v_{j}}+v_{i}\nonumber\\
&  +\frac{1}{\lambda_{1}(W_S)}\sum\nolimits_{j=1}^{N}w_{ij}^{S}\frac{dz_{j}%
}{d\bar{\kappa}} \label{dwi}%
\end{align}

The collective form of (\ref{dwi}) is%
\begin{equation}
\frac{d\boldsymbol{z}}{d\bar{\kappa}}=\lambda_{1}(W_{S})\frac{d\tau_{c}}%
{d\bar{\kappa}}\boldsymbol{v}+(I-D_{0})\boldsymbol{v}+\frac{1}{\lambda
_{1}(W_{S})}W_{S}\frac{d\boldsymbol{z}}{d\bar{\kappa}},
\end{equation}
where $D_{0}$ is the diagonal matrix%
\begin{equation}
D_{0}\triangleq diag\{\frac{\sum\nolimits_{j=1}^{N}w_{ij}^{S}v_{j}}%
{\sum\nolimits_{j=1}^{N}w_{ij}^{A}v_{j}}\}.
\end{equation}
Therefore, following is also true%
\begin{equation}
(I-\frac{1}{\lambda_{1}(W_{S})}W_{S})\frac{d\boldsymbol{z}}{d\bar{\kappa}%
}=\lambda_{1}(W_{S})\frac{d\tau_{c}}{d\bar{\kappa}}\boldsymbol{v}%
-D_{0}\boldsymbol{v}+\boldsymbol{v}.
\end{equation}
Multiplying both sides by $\boldsymbol{u}^{T}$ from left gives%
\begin{equation}
0=\lambda_{1}(W_{S})\frac{d\tau_{c}}{d\bar{\kappa}}-\boldsymbol{u}^{T}%
D_{0}\boldsymbol{v}+1, \label{dtawc1}%
\end{equation}
because $\boldsymbol{u}$\ is the left dominant eigenvector of $W_{S}$ and is
normalized such that $\boldsymbol{u}^{T}\boldsymbol{v}=1$. From (\ref{dtawc1}%
), we get
\[
\frac{d\tau_{c}}{d\bar{\kappa}}=\frac{1}{\lambda_{1}(W_{S})}(\Psi(W_{S}%
,W_{A})-1),
\]
because $\boldsymbol{u}^{T}D_{0}\boldsymbol{v}=\Psi(W_{S},W_{A})$ according to
(\ref{PHIAB}). Formula (\ref{tawc1}) is the first order Taylor expansion of
$\tau_{c}$ at $\bar{\kappa}=0$.

For large values of $\bar{\kappa}$, we can find the solution to the threshold
equations (\ref{tawc1}\&\ref{tawc2}), by perturbation at $s=0$, where
$s\triangleq\bar{\kappa}^{-1}$. The threshold equation
(\ref{ThresholdEquation}) in term of the new parameter $s$ is%
\begin{align}
z_{i}  &  =\tau_{c}\{\frac{\bar{\kappa}^{-1}}{\bar{\kappa}^{-1}}\frac
{(\bar{\kappa}+1)\sum w_{ij}^{A}z_{j}}{\bar{\kappa}\sum w_{ij}^{S}z_{j}+\sum
w_{ij}^{A}z_{j}}\}\nonumber\\
&  =\tau_{c}\{\frac{(1+s)\sum w_{ij}^{A}z_{j}}{\sum w_{ij}^{S}z_{j}+s\sum
w_{ij}^{A}z_{j}}\}\sum w_{ij}^{S}z_{j}\nonumber\\
&  =\tau_{c}\{\frac{(s+1)\sum w_{ij}^{S}z_{j}}{s\sum w_{ij}^{A}z_{j}+\sum
w_{ij}^{S}z_{j}}\}\sum w_{ij}^{A}z_{j}. \label{wiB}%
\end{align}

As can be seen, (\ref{wiB}) has exactly the form of the threshold equation (\ref{ThresholdEquation}), where $W_{S}$ and $W_{A}$ matrices have changed
roles and $\bar{\kappa}$ is replaced by $s$. Therefore, similar to the proof
of Theorem \ref{Theorem: NPF}, the threshold can be found around $s=0$ as by
switching $W_{S}$ and $W_{A}$ matrices and replacing $\bar{\kappa}$ by
$s=\bar{\kappa}^{-1}$. The result is Eq. (\ref{tawc2}), which is in fact the first order Taylor expansion of $\tau_{c}$ at $s=0$.
\end{proof}

\subsection*{Proof of Lemma \ref{Theorem: Adverse} (lower-bound for symmetric layers): \label{append: symmetric}}
\begin{proof}
For $A$ and $B$ symmetric, we have%

\begin{align*}
\Psi(A,B)  &  =\sum\nolimits_{i=1}^{N}u_{i}v_{i}\frac{
{\textstyle\sum\limits_{j=1}^{N}}
a_{ij}v_{j}}{
{\textstyle\sum\limits_{j=1}^{N}}
b_{ij}v_{j}}=\lambda_{1}(A)\sum\nolimits_{i=1}^{N}\frac{v_{i}^{3}}{
{\textstyle\sum\limits_{j=1}^{N}}
b_{ij}v_{j}}\\
&  =\lambda_{1}(A)\frac{
{\textstyle\sum\limits_{i=1}^{N}}
\left(  \frac{v_{i}^{3}}{
{\textstyle\sum\limits_{j=1}^{N}}
b_{ij}v_{j}}\right)  ^{2/2}
{\textstyle\sum\limits_{i=1}^{N}}
\left(  v_{i}
{\textstyle\sum\limits_{j=1}^{N}}
b_{ij}v_{j}\right)  ^{2/2}}{\sum\nolimits_{i=1}^{N}v_{i}
{\textstyle\sum\nolimits_{j=1}^{N}}
b_{ij}v_{j}}\\
&  \geq\lambda_{1}(A)\frac{\left(
{\textstyle\sum\limits_{i=1}^{N}}
\left(  \frac{v_{i}^{3}}{
{\textstyle\sum\limits_{j=1}^{N}}
b_{ij}v_{j}}\right)  ^{1/2}\left(  v_{i}
{\textstyle\sum\limits_{j=1}^{N}}
b_{ij}v_{j}\right)  ^{1/2}\right)  ^{2}}{\sum\nolimits_{i=1}^{N}v_{i}
{\textstyle\sum\nolimits_{j=1}^{N}}
b_{ij}v_{j}}\\
&  =\lambda_{1}(A)\frac{\left(  \sum\nolimits_{i=1}^{N}v_{i}^{2}\right)  ^{2}%
}{\sum\nolimits_{i=1}^{N}v_{i}
{\textstyle\sum\nolimits_{j=1}^{N}}
b_{ij}v_{j}}=\lambda_{1}(A)\frac{1}{v^{T}Bv}\\
&  \geq\frac{\lambda_{1}(A)}{\max_{\left\Vert x\right\Vert =1}x^{T}Bx}%
=\frac{\lambda_{1}(A)}{\lambda_{1}(B)},
\end{align*}
where the first inequality is due to the Cauchy--Schwarz inequality and the
last one is according to the Rayleigh quotient definition of the largest eigenvalue.
\end{proof}

\end{document}